\definecolor{blue}{rgb}{0,0,1}
\definecolor{red}{rgb}{1,0,0}
\newtheorem{Thm}{Theorem}
\newtheorem{Cor}[Thm]{Corollary}
\newtheorem{Lem}[Thm]{Lemma}
\theoremstyle{remark}
\newtheorem{Rem}[Thm]{Remark}
\theoremstyle{definition}
\newcommand{\T}{\mathbb{T}}
\newcommand{\R}{\mathbb{R}}
\newcommand{\C}{\mathbb{C}}
\newcommand{\I}{\mathrm{i}}
\newcommand{\D}{\mathrm{d}}
\newcommand{\id}{\mathrm{id}}
\newcommand{\supp}{\mathop\mathrm{supp}}
\DeclareFontFamily{OT1}{rsfs}{} \DeclareFontShape{OT1}{rsfs}{m}{n}{
<-7> rsfs5 <7-10> rsfs7 <10-> rsfs10}{}
\DeclareMathAlphabet{\mycal}{OT1}{rsfs}{m}{n}
\begin{document}

\vspace*{-2cm}
\title[Multipole Expansion in Time Evolution of Non-linear Dynamical Systems]{On 
the Use of Multipole Expansion in Time Evolution \\
of Non-linear Dynamical Systems and \\ Some Surprises Related to Superradiance\footnote{This paper is dedicated to the memory of our friend P\'{e}ter Csizmadia. P\'{e}ter was a physicist, computer expert and one of the best Hungarian mountaineers. He disappeared in China's Sichuan near the Ren Zhong Feng peak of the Himalayas October 23, 2009. P\'eter was one of the founders and the main developer of our code GridRipper.}}
\author{\fbox{P\'eter Csizmadia}$^1$, Andr\'as L\'aszl\'o$^{1,2}$, Istv\'an R\'acz$^1$}
\address{$^1$WIGNER RCP, H-1121 Budapest, Konkoly-Thege Mikl\'os \'ut 29-33., Hungary}
\address{$^2$CERN, CH-1211 Gen\'eve 23, Switzerland}
\ead{laszlo.andras@wigner.mta.hu, racz.istvan@wigner.mta.hu}
\begin{abstract}
A new numerical method is introduced to study the problem of time evolution of generic non-linear dynamical systems in four-dimensional spacetimes. It is assumed that the time level surfaces are foliated by a one-parameter family of codimension two compact surfaces with no boundary and which are conformal to a Riemannian manifold $\mycal{C}$. The method is based on the use of a multipole expansion determined uniquely by the induced metric structure on $\mycal{C}$. The approach is fully spectral---i.e. it avoids pointwise evaluations of the basic variables---in the {angular} directions. Instead, Gaunt coefficients as matrix elements are used to evaluate multilinear expressions. The dynamics in the complementary 1+1 Lorentzian spacetime is followed by making use of a fourth order finite differencing scheme. In handling the pertinent 1+1 transverse degrees  of freedom the techniques of adaptive mesh refinement (AMR) is also applied.  

In checking the reliability and effectiveness of the introduced new method the 
evolution of a massless scalar field on a fixed Kerr spacetime is investigated. 
In particular, the angular distribution of the evolving field in superradiant 
scattering is studied. The primary aim was to check the validity of some of 
the recent arguments claiming that the Penrose process, or its field 
theoretical correspondence---superradiance---does play crucial role in jet 
formation in black hole spacetimes  while matter accretes onto the central 
object. Our findings appear to be on contrary to these claims as the angular 
dependence of superradiant scattering of massless scalar fields does not show 
any preference of the axis of rotation. In addition, the characteristic 
properties of superradiance in case of a massless scalar field was also 
investigated. On contrary to the general expectations we found that by an 
incident wave packet, which had been tuned to be maximally superradiant, 
the acquired extra energy in the scattering process must be less then $0.1\%$ of the energy sent in.  
It was found that instead of the occurrence of anticipated scale of energy extraction from black 
hole the to be superradiant part of the incident wave packet fail to reach 
the ergoregion rather it suffers a nearly perfect reflection which appears to be 
an interesting phenomenon.

\end{abstract}
\pacs{04.25.D-}
\vspace{2pc}
\noindent{\it Keywords}: multipole, spectral, non-linear, partial, differential, Gaunt, black hole, background, superradiance, Penrose process

\maketitle

\section{Introduction}
\label{introduction}
\renewcommand{\theequation}{\ref{introduction}.\arabic{equation}}
\setcounter{equation}{0}
\renewcommand{\thefigure}{\ref{introduction}.\arabic{figure}}
\setcounter{figure}{0}
\renewcommand{\thetable}{\ref{introduction}.\arabic{table}}
\setcounter{table}{0}

The basic equations of various theories are non-linear. In studying these types of dynamical systems analytic methods by themselves do not provide a completely satisfactory framework. Therefore it seems to be of fundamental importance to develop numerical methods that are capable to simulate long time evolution of non-linear dynamical systems. Motivated by this sort of necessities in general relativity various groups developed their codes aiming to make progress in the study of astrophysical systems containing black holes and neutron stars. Fully general relativistic
simulations of coalescing  binaries consisting of neutron stars and/or black holes are now possible by making use of variants of the generalized harmonic formulation \cite{pret} and moving puncture approach \cite{punct1,punct2} in the Baumgarte-Shapiro-Shibata-Nakamura (BSSN) formalism \cite{BSSN1,BSSN2} (for a comprehensive review on the recent developments with additional references see, e.g., \cite{Tool,LVR}).

Besides these main stream efforts there are some apparently less ambitious ongoing projects trying to provide precise long term time evolution of various matter fields on fixed stationary background spacetimes which may or may not contain a black hole. In general, these investigations---due to the relative simplicity of the underlying physical system---provide an arena to test some of the new numerical methods before applying them to investigate the aforementioned much more complicated astrophysical systems. Immediate examples for these type of investigations with a stationary black hole as a background spacetime can be found, e.g.\,in \cite{burko} where high order finite differencing was applied, or in \cite{tiglio} where investigations of axial symmetric systems and the use of pseudospectral method (although only moderate angular momentum quantum numbers were involved), or in \cite{scheel} where results on the use of pseudospectral method without assuming axial symmetry but with utilizing parallel computing were reported.
Similar dynamical systems were investigated in a {series} of papers \cite{ZT,Anil,Anil2,OR,RT}. In these papers the viability of the simultaneous use of the techniques of conformal compactification, along with the use of hyperboloidal initial value problem, in numerical simulations were demonstrated. 
 
In the present paper we introduce a numerical method to study the problem of time evolution of generic non-linear dynamical system {in} four-dimensional spacetimes. The time level surfaces are assumed to be foliated by a one-parameter family of codimension two surfaces which are conformal to a compact Riemannian manifold $\mycal{C}$ without boundary. The degrees of freedom in directions tangential to $\mycal{C}$---they are referred as angular directions---are treated with spectral representation (multipole expansion whenever $\mycal{C}$ is homeomorph to a two-sphere $\mathbb{S}^2$) which is based on $L^2$ expansions of the basic variables in terms of the eigenfunctions of the Laplace operator on $\mycal{C}$. The fields in the transverse 1+1 dimensional spacetime directions are evolved by making use of the method of lines based on a fourth order finite difference numerical scheme. The pertinent numerical method incorporates the techniques of the adaptive mesh refinement (AMR). All the operations on the basic variables, involving angular degrees of freedom, are done without applying point{-}wise evaluations, i.e.\,the method is fully spectral, not pseudospectral. 

There are various advantages ensured by this method. Firstly, the usual problems related to the coordinate singularities in the involved angular differential operators can be avoided. Secondly, all the operations in the angular directions which are linear in the basic field variables are exact. What is even more significant---by applying the Sobolev embedding theorem based arguments---all the non-linear operations such as pointwise multiplication or division by fields can also be treated within the spectral representation. The mathematical background of the applied new method---it is assumed to be known but seldom, if ever, collected in a systematic self contained way---is also presented in details in the appendices.  

In practice, all the multipole expansion series are truncated at certain finite order. In this respect the applied method is perturbative. Nevertheless, the error introduced by these approximations can be kept to be at a tolerable low level by increasing the number of the involved modes. The residual error is monitored and the precision, efficiency and the viability of the proposed method have been justified to be satisfactory. 

The introduced new method is applied to investigate the time evolution of a massless Klein-Gordon field on a fixed Kerr black hole spacetime. Within this setting, the angular dependence of the outgoing radiation was studied such that the initial data was fine tuned to have the highest possible potential for superradiance. Our investigations were motivated by some recent attempts trying to provide a physical model yielding high energy collimated matter streams (referred frequently as jets) originating from compact astrophysical objects. 

For instance, in \cite{williams} it was asserted that Penrose process involving Compton scattering on electrons and electron-positron pair production in photon-photon scattering can give rise to ejection of highly collimated matter streams along the axis of rotation. 
In \cite{gariel} an alternative support of these expectations was proposed. In particular, the possible existence of a class of timelike geodesics representing the worldline of the escaping particle yielded in the Penrose process---thereby emerging from the ergoregion---and having the axis of rotation of the black hole as an asymptote was examined. On contrary to these expectations in \cite{takami} where the evolution of a dust sphere falling onto a rotating black hole was considered no significant collimation effect had been found. 

In the present paper an analogous field theoretical model will be investigated. Distinguished attention will be paid to the angular dependence of outgoing radiation yielded by a scattering process with using initial data that has been fine tuned to possess the highest possible potential to generate superradiance. 

Let us also mention here that some preliminary studies of the dynamics of massless scalar field has been done by the present authors in \cite{csizmadia} (see also \cite{RT}\footnote{The code applied 
in \cite{RT} to study the long time evolution was developed essentially by reducing the complexity and by adopting the basic ideas of GridRipper \cite{csizmadia,gridripper3} to the investigated specific problem.}). However, neither the initial condition was fine tuned to generate to be superradiant solution, nor a detailed description of the applied numerical methods was given in either of these works. We would also like to emphasize that in parallel to the preparation this paper the corresponding package of GridRipper with the implementation of the system investigated in this paper is made to be available for public use \cite{gridripper3}.  

Let us also recall here that, based on the estimates in 
\cite{teukolsky,pressteukolsky}, superradiance is expected to be 
more significant whenever higher spin fields such as gravitational radiation is involved. Nevertheless, 
as stated above, throughout this paper considerations are restricted to the 
case of complex scalar fields. Similarly, our results concerning jet formation 
assume that the involved matter is modeled by a complex scalar field. 
Thereby our results do not exclude jet formation found in some recent 
astrophysically motivated more complex magnetohydrodynamical simulations 
(see e.g.\,\cite{luciano,lehner,shapiro}).

The paper is organized as follows. In Section \ref{fieldequations} the physical setup including the field equations and coordinate choices are introduced. Section \ref{numericalevolution} to present an outline of the applied numerical method. Section \ref{boundaryconditions} is to discuss some of the delicate issues related to the applied boundary conditions, while in Section \ref{initialdata} the initial data used in our numerical simulations is introduced. The main results are exposed in Section \ref{resultsanddiscussion}, while our concluding remarks are summarized in Section \ref{summary}. The Appendices are to provide a systematic summary of the mathematical background of the applied new method, in particular, presenting all the details making it possible to use the techniques of multipole expansion in treatment of non-linear dynamical systems. 

\section{Field Equations}
\label{fieldequations}
\renewcommand{\theequation}{\ref{fieldequations}.\arabic{equation}}
\setcounter{equation}{0}
\renewcommand{\thefigure}{\ref{fieldequations}.\arabic{figure}}
\setcounter{figure}{0}
\renewcommand{\thetable}{\ref{fieldequations}.\arabic{table}}
\setcounter{table}{0}

As mentioned above in this paper the evolution of a neutral massless scalar field propagating on the domain of outer communication of a fixed stationary Kerr black hole spacetime{ is considered}. Although the code developed (which can be downloaded from {\cite{gridripper3}}) is also capable to evolve a charged and self-interacting scalar field on a Kerr-Newman background in this paper attention will be restricted to the above mentioned simple case. We would like to mention that even this simple dynamical system is complex enough to test the viability and reliability of the proposed new method based on the spectral method. 

To start off let us recall first the Kerr metric given in  Boyer-Lindquist coordinates $t,r,\vartheta,\varphi$ \cite{wald}. 
The part of the spacetime on which our investigations will be carried out is the domain of outer communication that possesses the product structure $\mathbb{R}^2\times\mathbb{S}^2$ and can be covered by Boyer-Lindquist coordinates $t$, $r$, $\vartheta$ and $\varphi$ taking values 
from the intervals $-\infty<t<\infty$, $0<r<\infty$, $0\leq\vartheta\leq\pi$ and $0\leq\varphi\leq2\pi$. The metric $g$ in these coordinates reads as
\begin{eqnarray}
\label{kerrnewman}
&&\hskip-1cm g=
-\frac{\Delta-a^{2}\,\sin^{2}(\vartheta)}{\Sigma}\,\mathrm{d}t\otimes\mathrm{d}t \cr
&&\hskip-1cm\phantom{g=} -\frac{\,a\,(r^2+a^2-\Delta)\,\sin^{2}(\vartheta)}{\Sigma}\,\left(\mathrm{d}t\otimes\mathrm{d}\varphi+\mathrm{d}\varphi\otimes\mathrm{d}t\right) \cr
&&\hskip-1cm\phantom{g=} +\frac{\Sigma}{\Delta}\,\mathrm{d}r\otimes\mathrm{d}r 
+\Sigma\,\mathrm{d}\vartheta\otimes\mathrm{d}\vartheta
+\frac{\Gamma\,\sin^{2}(\vartheta)}{\Sigma}\,\mathrm{d}\varphi\otimes\mathrm{d}\varphi\,,
\end{eqnarray}
where the smooth functions $\Delta$, $\Sigma$ and $\Gamma$ are 
determined by the relations
\begin{eqnarray}
\label{symbols}
&&\Delta=r^{2}+a^{2}
-2\,M\,r, \cr
&&\Sigma=r^{2}+a^{2}\,\cos^{2}(\vartheta), \cr
&&\Gamma=(r^{2}+a^{2})^{2}-a^{2}\,\Delta\,\sin^{2}(\vartheta).
\end{eqnarray}
The symbols $M$ and $a$ denote the mass and the specific angular momentum of the Kerr black hole spacetime. 
The field equation of a complex valued scalar field $\Phi$ can be written as
\begin{equation}
\label{kgeq}
\nabla^{a}\nabla_{a}\,\Phi=0\,,
\end{equation}
which after the conventional first order reduction, for the vector variable $(\Phi,\Phi_{t},\Phi_{r} )^T$, reads as 
\begin{eqnarray}
\label{kgeqcoord}
&&\partial_{t}\Phi=\Phi_{t}, \cr
&&\partial_{t}\Phi_{t}=\frac{1}{\Gamma}\,\Big(\Delta^{2}\,\partial_{r}\Phi_{r}+2\,\Delta\,(r-M)\,\Phi_{r} +\Delta\,\mathbb{L}_{\mathbb{S}^{2}}(\Phi)\cr
&&\phantom{\partial_{t}\Phi_{t}=\frac{1}{\Gamma}\,\Bigl(}-a^{2}\,\partial^2_{\varphi} \Phi-2\,a\,(r^{2}+a^{2}-\Delta)\,\partial_{\varphi}\Phi_{t}\Bigr)\,, \cr
&&\partial_{t}\Phi_{r}=\partial_{r}\Phi_{t}
\end{eqnarray}
where the differential operator
\begin{equation}
\label{s2laplace}
\mathbb{L}_{\mathbb{S}^{2}}=\frac{1}{\sin\vartheta}\partial_{\vartheta}\left[\sin\vartheta\,\partial_{\vartheta}\right]+\frac{1}{\sin^{2}\vartheta}\partial^2_{\varphi}
\end{equation}
is nothing but the Laplace operator on the unit sphere $\mathbb{S}^{2}$ with its canonical Riemann metric. To get rid of the coordinate singularity 
of the radial differential operator for the first multipole component of $\Phi$ at the origin in the Minkowski limit---that can also be applied in other cases whenever an origin is present in the computational domain---the following conventional trick had been applied. Instead of $\Phi$ the variable $\Psi=r\cdot\Phi$ was evolved using the field equation transformed accordingly. 

In the Kerr case with $M>0$ it turned to be rewarding to use instead of the $r$ and $\varphi$ the new ones $r_{*}$ and $\widetilde\varphi$ defined as 
\begin{equation}
r_{*}(r)=r+\frac{1}{2}\left(\frac{\ln(r-r_{+})}{\kappa_{+}}+\frac{\ln(r-r_{-})}{\kappa_{-}}\right)\,,
\end{equation} 
{\begin{equation}
\tilde{\varphi}(r,\varphi)=\varphi+\frac{a}{r_{+}-r_{-}}\ln\left[\frac{r-r_{+}}{r-r_{-}}\right]\,,
\end{equation}
where 
\begin{equation}
\kappa_{\pm}=\frac{1}{2}\frac{r_{\pm}-r_{\mp}}{r_{\pm}^2+a^2}
\end{equation}
is the surface gravity on the outer and inner event horizon located at $r_{\pm}=M\pm\sqrt{M^2-a^2}$, respectively.
By making use of these coordinates close to the event horizon much better resolution could be achieved which is supported by the fact that the null geodesics of minimal impact can be given as $t\pm r_{*}=const$, $\vartheta=const$,  $\tilde{\varphi}=const$. In consequence of the use of these new coordinates during the evaluation the inverse relation $r=r(r_*)$ had to be determined numerically which was done by implementing a simple Newton-Raphson method.

\section{Numerical Evolution}
\label{numericalevolution}
\renewcommand{\theequation}{\ref{numericalevolution}.\arabic{equation}}
\setcounter{equation}{0}
\renewcommand{\thefigure}{\ref{numericalevolution}.\arabic{figure}}
\setcounter{figure}{0}
\renewcommand{\thetable}{\ref{numericalevolution}.\arabic{table}}
\setcounter{table}{0}

This Section is to provide a short outline of the applied numerical methods. As mentioned already the method is based on multipole expansion on each of the topological two-spheres determined by the $t=const$ and $r=const$ level surfaces. Accordingly, instead of evolving the fields $\Phi$, $\Phi_{t}$ and $\Phi_{r_*}$ themselves their multipole components $[{\Phi}]^m_\ell$, $[\Phi_t]^m_\ell$ and $[\Phi_{r_*}]^m_\ell$---which are functions of $t$ and $r_*$ exclusively yielded by $L^2$ expansions of $\Phi$, $\Phi_{t}$ and $\Phi_{r_*}$ with respect to the spherical harmonics--- 
\begin{equation}
\label{sphset}
\big\{Y{}_{\ell}^{m}\big\vert \ell=0,\dots,\infty,\,m=-\ell,\dots,\ell\big\}\,,
\end{equation}
had been evolved. 
A comprehensive presentation of the mathematical background can be find in the Appendices, while the applied code GridRipper with the implementation of the investigated system can be found at \cite{gridripper3}.

\subsection{Evolution in the $t,r$ section}

The evolution equations for the multipole components $[{{\Phi}}]^m_\ell$, $[{\Phi_t}]^m_\ell$ and $[{\Phi_{r_*}}]^m_\ell$ were solved in the $t-r_*$ plane by making use of the 1+1 dimensional C++ based PDE solver of GridRipper described in details in \cite{gridripper,gridripper2,gridripper3}. The numerical algorithm utilized by this code is based on the method of lines in a fourth order Runge-Kutta scheme such that the spatial derivatives were evaluated with a fourth order symmetric finite difference stencil. To guarantee stability---by suppressing high frequency instabilities---a standard fifth order dissipation term, as proposed by Gustafsson et al \cite{Gustetal} was also applied 
in solving the evolution equations for the multipole components $[{{\Phi}}]^m_\ell$, $[{\Phi_t}]^m_\ell$ and $[{\Phi_{r_*}}]^m_\ell$. Note, that the use of this dissipation term {does not affect} the order of accuracy of the applied numerical scheme{.}
 
The 1+1 algorithm of GridRipper makes use, as a built in package, the techniques of adaptive mesh refinement (AMR) as proposed by Berger-Oliger algorithm \cite{bergeroliger} (see also \cite{gridripper2,gridripper}). The use of AMR is based on the idea that a refining of the spacetime mesh has to be done at those locations where the Richardson error
\begin{equation}
\label{richardson}
\frac{\left\Vert f_{\Delta t,\Delta r}(t,r) - f_{2\Delta t,2\Delta r}(t,r)\right\vert}{2\Delta t \left(2^{q}-1\right)}
\end{equation}
exceeds a predefined threshold, where $f_{\Delta t,\Delta r}$ denotes the numerical solution obtained on a spacetime mesh with $\Delta t$ temporal and 
$\Delta r$ spatial finite difference, $q$ is the order of accuracy of the finite difference scheme, and ${\Vert\cdot\vert}$ is a semi-norm. 
As spatial and temporal refinement is performed simultaneously, the value of the Courant factor---i.e., the ratio of the temporal and spatial 
step size---{remains intact}, thereby in principle the stability of the finite difference scheme {is not affected}.

In our simulations $q$ took the value $4$, while the semi-norm $\Vert\cdot\vert$ was chosen to be the $L^{2}$ norm of the multipole expansion of $\partial_{r_{*}}\Psi${ on each two-sphere}. In order to be able to implement a  relative error type quantity in specifying the tolerable error {this} $L^{2}$ norm of $\partial_{r_{*}}\Psi$ was normalized by the {pertinent} $L^{2}$ norm of $\partial_{r_{*}}\Psi$ on the initial slice.

\subsection{Evolution in the angular 
$\vartheta,\tilde{\varphi}$ section}

The remaining angular $\vartheta,\tilde{\varphi}$ directions were handled by a purely spectral method, completely avoiding point evaluation. 
The expansion coefficients $[{{\Phi}}]^m_\ell$, $[{\Phi_t}]^m_\ell$ and $[{\Phi_{r_*}}]^m_\ell$ were stored in a C++ structure implementing an algebra defined by coefficient-wise linear operations and with pointwise multiplication of the basic variables. It is worth to be emphasized that the viable utility of the latter operation is not obvious at all as the multipole series, by construction, are guaranteed to be convergent only in the $L^2$ sense without an immediate support of their convergence in the pointwise sense. Therefore, in the generic case, the multipole expansion coefficients of pointwise products are not expected to be derived from the multipole coefficients of the factors without evaluating them pointwise and applying a subsequent numerical multipole expansion of the yielded product. Clearly, such a complicated approach would be computationally intensive not allowing the use of multipole expansions with sufficiently large $\ell$ values, e.g.\ $\ell\geq 16$, to make the error introduced by truncation to be tolerably small. 
Nevertheless, whenever the basic variables are known to belong to the class of $C^2$ functions there exists a purely spectral approach that makes the evaluation of their pointwise multiplication possible. All of the underlying ideas---which are of fundamental importance in guaranteeing the effectiveness of the proposed new  method---are justified with mathematical rigor in \ref{sobolevembedding} and \ref{gauntcoefficients}. It is worth to be noted that every solution to a field equation involving second derivatives in the strong sense is of differentiability class $C^2$. As the proposed use of the spectral method avoids pointwise evaluations there is a significant reduction in the required computational power in carrying out full 3+1 dimensional simulations. This reduction is also supported by the fact that the Gaunt coefficients, introduced in \ref{gauntcoefficients}, which are necessary in evaluating products of multipole coefficients have to be calculated only once and stored them in the computer memory during the rest of the simulation. 

In practice, whenever pointwise products of $C^2$ fields truncated at $\ell_{1}$ and $\ell_{2}$ multipole order is evaluated the result shall have non-vanishing coefficients up to $\ell_{1}+\ell_{2}$ multipole order. Therefore, as opposed to linear operations, multiplications do not respect any prefixed maximal expansion order. Correspondingly, in the applied numerical approximation the multipole order of products have to be kept to be bounded which was done by truncating at the value $\max(\ell_{1},\ell_{2})$. Note, however, that convergence tests have to be performed by varying the maximal allowed order $\ell_{\mathrm{max}}$ to justify the viability and the accuracy of the proposed numerical scheme.

In evaluating the time derivative of the basic variables another critical non-linear operation has also to be performed. It is the division by a nowhere vanishing { variable}. The associated difficulties can be overcome by tracing back the operation of pointwise division to the operation of pointwise multiplication with the help of Neumann series expansions. This perturbative method as discussed in details in  \ref{divisionbyfield} and \ref{sobolevconstant} as it is another key ingredient of the proposed new method. It is also shown there that the necessary number of iterations in performing this perturbative division method grows only with the logarithm of the required accuracy.

\subsection{Storage and computational requirements}

The storage of the basic variables via their multipole coefficients becomes even more efficient when the variables may be assumed to be of $C^{\infty}$ class in the angular directions as in that case the corresponding sequence of multipole expansion coefficients are guaranteed to decay faster than any polynomial order as it is justified in \ref{tailsumerrorbounds}. The number of non-vanishing multipole coefficients of a variable, truncated 
at maximal order $\ell_{\mathrm{max}}$, is $(\ell_{\mathrm{max}}+1)^{2}$. Therefore, the storage requirement is quadratic in $\ell_{\mathrm{max}}$. For the number 
of the non-zero Gaunt coefficients the approximate formula $0.7\cdot(\ell_{\mathrm{max}})^{4.7}$ can be verified numerically for $\ell_{\mathrm{max}}\geq 8$. In particular, if the considered problem is axially symmetric the number of non-zero multipole coefficients is only $\ell_{\mathrm{max}}+1$, while the number of Gaunt coefficients necessary to evaluate non-linear terms scale as $0.66\cdot(\ell_{\mathrm{max}})^{2.8}$ provided that $\ell_{\mathrm{max}}\geq 8$. 

It is also informative to compare the computational expense 
estimates to that of other widely used methods, 
for instance pseudospectral methods. 
These store field values over coordinate grids, but evaluate spatial/angular
derivatives in the spectral representation. 
In case of applying the most commonly used basis, the Chebyshev 
polynomials and storing field values over the set of Gauss-Lobatto collocation 
points, one faces the problem of regularizing the coordinate singularities of 
the spherical Laplace operator. In order to avoid the associated 
difficulties, it is quite natural to choose more suitable expansion basis, 
namely spherical harmonics---similarly as in our proposed method.
In adopting this strategy, one must do conversion between grid values 
and spectral coefficients.
In order to estimate the cost of such operation, 
one has to take into account that at each mesh point a sum over the indices $\ell$ 
and $m$ has to be performed. For each mesh point in the $\vartheta$ 
angle coordinate a sum over the  index $\ell$, with evaluations for the involved $m$ values, 
consists of $(\ell_{\mathrm{max}}+1)^2$ terms. In addition, the cost of a sum 
over of the $m$ values---which may be evaluated by making use of 
fast Fourier transform---can be seen to go not better than 
$5(2\,\ell_{\mathrm{max}}+1)\log_{2}(2\,\ell_{\mathrm{max}}+1)$. Taking then into account 
that there exists $\ell_{\mathrm{max}}+1$ pieces of mesh points in $\vartheta$, 
as a minimal estimate of the total cost of a (non-approximate) spectral transformation 
in a pseudospectral method we get  
$5(\ell_{\mathrm{max}}+1)^{3}(2\,\ell_{\mathrm{max}}+1)\log_{2}(2\,\ell_{\mathrm{max}}+1)$. 
On the other hand, the evaluation of non-linear terms using matrix 
products in our method scales as $\sim 0.7\cdot(\ell_{\mathrm{max}}^{4.7})$. Therefore, the 
computational costs of these two methods appear to be comparable although in the range 
$8\leq \ell_{\mathrm{max}}\leq 32$ there is about factor of ten
preference on the side of the fully spectral method. Clearly, at the end 
of the simulation we also need to do pointwise evaluations in extracting the 
physical content of the yielded data. However, these evaluations 
need not to be done on each time level surfaces rather only at some specific 
ones. A slight additional advantage is that the involved vast 
number of matrix multiplications in the spectral case can be paralellized in 
a very effective way. Nevertheless, we admit that it is really the physical 
problem which should decide which method fits better.

\section{Boundary Conditions}
\label{boundaryconditions}
\renewcommand{\theequation}{\ref{boundaryconditions}.\arabic{equation}}
\setcounter{equation}{0}
\renewcommand{\thefigure}{\ref{boundaryconditions}.\arabic{figure}}
\setcounter{figure}{0}
\renewcommand{\thetable}{\ref{boundaryconditions}.\arabic{table}}
\setcounter{table}{0}

The proper treatment of the timelike part of the boundaries is of fundamental importance in both analytic and numerical evolutionary problems \cite{HN,HR,dafermos, helmut}. On numerical side it is only a tiny technical part of the problem that the spatial derivatives cannot be evaluated by making use a symmetric stencil close and at the boundary. The major part of the problems originates from the fact that there remains a freedom in specifying free data on the timelike part of the boundary \cite{HN,HR}. 

In numerical approaches one of the most conventional treatment is to use the Sommerfeld outgoing radiation boundary condition. This method is based on the assumption that at the border the transformed variable $\Psi=r\cdot\Phi$ has vanishing derivative along the outgoing radial null geodesics, which determines the value of $\Psi_{t}$, at the boundary, in terms of the values of $\Psi$ and $\Psi_{r}$ there. Note first that{---given the simple form of the outgoing radial null geodesics in the Kerr spacetime  in terms of the $t,r_*$ coordinates \cite{chandrasekhar}---}it is straightforward to implement this boundary condition in our numerical setup based on the spectral method. Nevertheless, in our first test runs the Sommerfeld boundary condition was found to yield unsatisfactory behavior in long term evolutions. One should keep in mind that even in the simple case of a {scalar field} on Minkowski background, in case of a non-spherical field configurations, only a much more sophisticated treatment \cite{sarbach} can provide a proper numerical treatment, and this approach does not generalize---at least not in a straightforward way---to more general background spacetimes such as the Kerr black hole. 
In virtue of this result one {does} not expect the Sommerfeld outgoing radiation boundary condition to work properly. Indeed, it was shown in \cite{dafermos,allen} that even in the simplest possible case of a massless Klein-Gordon field in Schwarzschild spacetime the asymptotic decay rate of the field may significantly be affected by the use of the Sommerfeld boundary condition. 
Our numerical experiments also justified (see Figure~\ref{bcomp} below) that this outgoing radiation condition yields to significant instabilities at the boundary even in the short-term evolution of strongly non-spherically symmetric configurations.

Another obvious idea is to carry out the numerical simulation near the boundary may be the following. Instead of applying any sort of outgoing radiation condition use the fourth order method of lines everywhere---as it is done in the interior---by making use of an asymmetric stencil close and at the boundaries. This simple minded approach also yields instabilities {developing} at the boundaries{, although this occurs much later than in case of the Sommerfeld boundary condition}.  These type of instabilities most likely are consequences of the sum up of the error produced by the asymmetric fourth order stencils close and at the boundaries. While trying to cure this unfavorable behavior we invented the following simple trick. The order of the finite difference scheme was gradually decreased from 4 to 2 then to 1 such that we still had fourth order symmetric scheme at the last but two points, second order symmetric schemes at the last but one points and a first order upstream or downstream at the {right or left} boundaries{, respectively}. The first order asymmetric treatment at the boundaries could also be considered as a simultaneous combination of a linear extrapolation of the field variable next to the boundary point with the application of a second order symmetric differential scheme. Numerical experiments justified that this simple trick in evaluating the spatial derivatives close and at the boundaries---although with the price of a reduction of the numerical convergence rate there---stabilized the time evolution and, more importantly, guaranteed a long term satisfaction of the energy balance relation even for the evolution of non-spherically symmetric field configurations.

Figure~\ref{bcomp} shows a comparison of the effectiveness of the Sommerfeld boundary condition, the simple $O(4)$ boundary condition---with fourth order asymmetric stencil close and at the boundary---and the developed $O(4-2-1)$ boundary condition---where the order of the finite difference scheme was gradually decreased from $4$ to $2$ then to $1$--- as described in details above. In particular, the time evolution of a {rotating} massless scalar field with a solid toroidal support is considered, by using either of these three boundary conditions, on Minkowski background. It can be seen that the energy outflow pattern at the outer boundary,{ located at} $r=63$, remains, in long term evolution, according to our expectations only for the case of $O(4-2-1)$ boundary condition. In the other two cases so much spurious energy flows back into the computational domain through the outer boundary that kills the evolution at $t\sim 26$ for the Sommerfeld boundary condition and at $t\sim 100$ for the $O(4)$ boundary condition.

\begin{figure}[!ht]
\begin{center}
\includegraphics{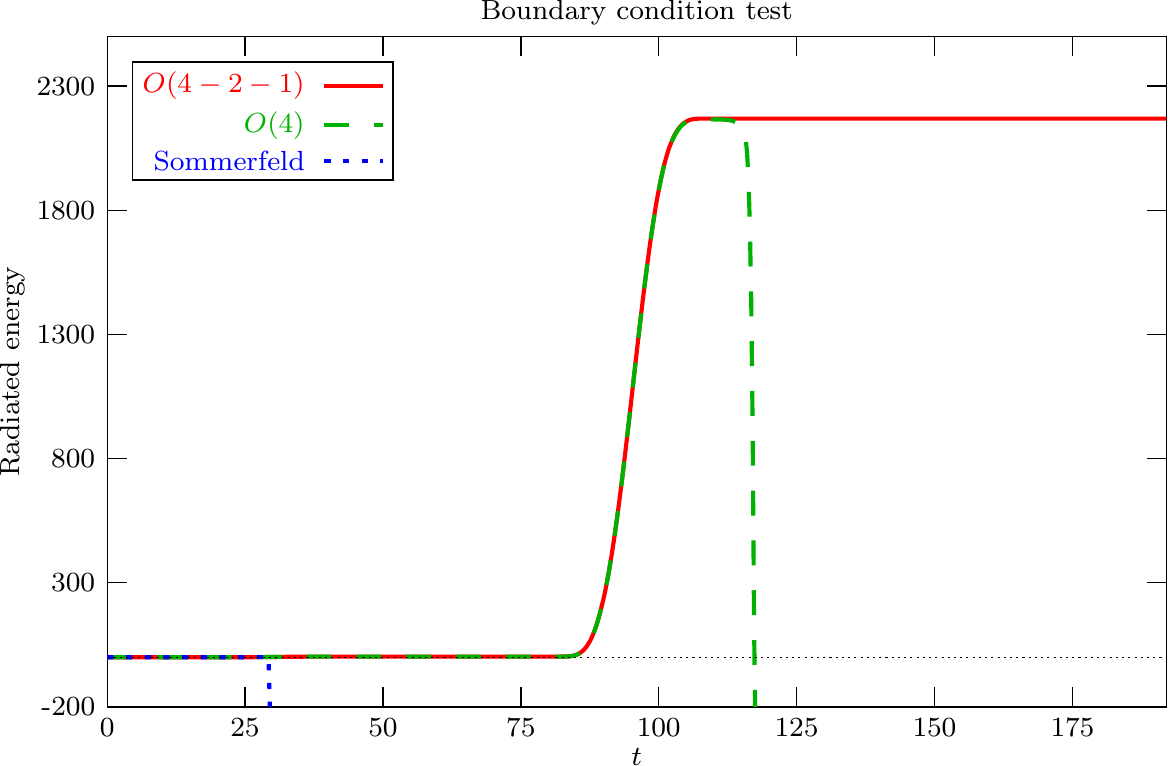}
\vspace*{-0.4cm}
\caption{\label{bcomp}(Color online) 
The energy outflow as a function of temporal coordinate $t$ at the outer boundary, located at $r=63$, for the evolution of an initially rotating massless scalar field on Minkowski background is shown with the application of the Sommerfeld, $O(4)$ and $O(4-2-1)$ boundary conditions, respectively. It can be seen that in case of Sommerfeld or $O(4)$ boundary conditions a spurious energy flow back, from the outer boundary, occurs, while no such spurious energy flow back happens in case of the $O(4-2-1)$ boundary condition. These numerical investigations were carried out using $\ell_{\mathrm{max}}=12$ and $512$ spatial points in the base grid such that $n=5$ AMR refinement levels were allowed.}
\vspace*{-0.4cm}
\end{center}
\end{figure}

It is of obvious interest to know whether the 
$O(4-2-1)$ ``boundary condition'' proposed and used by us is a proper one. 
A sufficiently detailed investigation of this issue exceeds the frame of the 
present paper and the pertinent results will be published elsewhere. Nevertheless, a 
simpleminded explanation concerning the well-posedness of the associated 
initial-boundary-value problem will be given below. Before doing so let us 
point to some of the most fundamental facts which should also support our claim that 
the proposed $O(4-2-1)$ differential scheme does impose proper boundary 
conditions.

\begin{itemize} 
\item[(1)] While by applying the Sommerfeld condition the code 
crashes faster than in case of the $O(4)$ schema where no boundary condition 
at all had been applied. As opposed to this long term stability characterizes 
the use of $O(4-2-1)$ differential scheme.

\item[(2)] The exponential convergence rate, along with the high 
precession of the energy and angular momentum balances 
(see Section~\ref{resultsanddiscussion}), could not be achieved without 
propagating all the physical modes towards the boundary such that they leave 
the computational domain without reflection, i.e., no spurious incoming modes 
are born at the boundaries.
\end{itemize}

The simpleminded argument goes as follows: Assume that we have a 
first order system of hyperbolic field equation of the form $u_t=A u_r+B u$ 
for a vector valued field variable $u$. In applying the $O(4-2-1)$ scheme 
at the last grid point a first order downstream finite difference stencil 
is applied. This, however, can be seen to be equivalent to the application 
of a second order symmetric finite difference stencil combined with a linear 
extrapolation. These two operations guarantee that the second 
$r$-derivative, $u_{rr}$, vanishes there. This, in virtue of this field equation, 
yields a mild restriction---of the type applied in Sections 9-11 of 
\cite{Gustetal}---on the $t$-derivative, $u_t$, at the very last grid point, 
in spite of the fact that apparently only the field equations were imposed there.

\section{Treatment of an origin}
\label{treatmentoforigin}
\renewcommand{\theequation}{\ref{treatmentoforigin}.\arabic{equation}}
\setcounter{equation}{0}
\renewcommand{\thefigure}{\ref{treatmentoforigin}.\arabic{figure}}
\setcounter{figure}{0}
\renewcommand{\thetable}{\ref{treatmentoforigin}.\arabic{table}}
\setcounter{table}{0}

The presence of an origin on the time slices always requires a very careful and precise treatment. An origin shows up in various physically realistic situations. For instance, if the background is the Minkowski spacetime or in case of fully dynamical spacetimes containing a pulsating neutron star. It worth to be emphasized that the method outlined below is applicable not only in case of a Minkowski background but in the generic case of fully dynamical spacetimes, as well. 

Before proceeding note that in the fully dynamical situations the time slices might contain more than one origin or no origin at all (see, e.g.~\cite{CSPIR} for explicit examples). Nevertheless, in this Section attention will be restricted to the conventional single origin case. In order to avoid ambiguities the $r$-coordinate is arbitrary and the origin is assumed to be located at $r=0$.

In order to avoid the appearance of the usual $r$-coordinate singularity in evaluating the radial part of the Laplace operator at the origin, the new basic variable $\Psi=r\cdot\Phi$ has already been introduced. Whenever an origin is located at $r=0$ a symmetric fourth order stencil can be applied in determining the spatial derivatives at and in a neighborhood of the origin based on the observation that the multipole coefficients $[\Psi]_{\ell}^{m}$ may formally be extended to negative radii according to the rule
\begin{equation}
\label{reflection}
[\Psi]_{\ell}^{m}(-r)=(-1)^{\ell+1}[\Psi]_{\ell}^{m}(r)\,.
\end{equation}
This relation follows from the assumption that the original field variable $\Phi$ is at least $C^{1}$---thereby it is $C^{1}$ along arbitrary straight lines through the origin---and from the reflection property of the spherical harmonics $Y{}_{\ell}^{m}$ under the transformation $\vartheta,\varphi \mapsto \pi-\vartheta,\varphi+\pi$.\footnote{
Indeed, the radial derivative $\partial_{r}\Phi$ of $\Phi(r,\vartheta,\varphi)$ 
corresponds to the directional derivative of $\Phi$ along the radial direction 
characterized by the certain constant values of the angles $\vartheta,\varphi$. As $\Phi$ is required to be $C^1$, the 
directional derivative $\partial_{r}\Phi$ may, then, be numerically evaluated by making use of a symmetric stencil of the applied finite difference scheme such that values of $\Phi$ from both sides of the origin are used at and closed to the origin. 
Nevertheless, on the the opposite side the values of  $\Phi$ can be determined as
$
\Phi(r, \pi-\vartheta,\varphi+\pi)=
\sum_{\ell=0}^{\infty}\sum_{m=-\ell}^{\ell}\Phi_{\ell}^{m}(r)\,Y{}_{\ell}^{m}(\pi-\vartheta,\varphi+\pi)=
\sum_{\ell=0}^{\infty}\sum_{m=-\ell}^{\ell}\Phi_{\ell}^{m}(r)\,(-1)^{\ell}Y{}_{\ell}^{m}(\vartheta,\varphi)\,.
$
Thus, by formally extending the functions $\Phi_{\ell}^{m}$ to negative radii---by making use of the rule $\Phi_{\ell}^{m} \mapsto (-1)^{\ell}\Phi_{\ell}^{m}$---the radial derivative $\partial_{r}\Phi_{\ell}^{m}$ of the functions $\Phi_{\ell}^{m}$ can numerically be evaluated.}

In addition, if $\Phi$ is assumed to be $C^{2}$---this assumption should not be considered as extreme especially if one recalls that $\Phi$ is subject to (\ref{kgeq})---its spatial Laplacian 
\begin{equation}
{\frac{1}{r}\partial^2_{r}(r\cdot\Phi)+\frac{1}{r^{2}}\mathbb{L}_{\mathbb{S}^{2}}(\Phi)}
\end{equation}
has to be finite at the origin. From this, along with subsequent applications of l'Hopital rule, the relations
\begin{eqnarray}
\label{lhopital1}
&& \ell=0:   \,\;\partial_{r}[\Phi]_{\ell}^{m}=0, \cr
&& \ell=1:   \,\;[\Phi]_{\ell}^{m}=0, \cr
&& \ell\geq2: \,\;[\Phi]_{\ell}^{m}=0, \,\;\partial_{r}[\Phi]_{\ell}^{m}=0
\end{eqnarray}
follow. This, however, along with the substitution $\Psi=r\cdot\Phi$, implies that the relations 
\begin{eqnarray}
\label{lhopital2}
&&\hskip-.95cm \ell=0:    \,\;[\Psi]_{\ell}^{m}=0,\;\partial^2_{r}[\Psi]_{\ell}^{m}=0, \cr
&&\hskip-.95cm \ell=1:    \,\;[\Psi]_{\ell}^{m}=0, \;\partial_{r}[\Psi]_{\ell}^{m}=0,\cr
&&\hskip-.95cm \ell\geq2: \,\;[\Psi]_{\ell}^{m}=0, \;\partial_{r}[\Psi]_{\ell}^{m}=0, \;\partial^2_{r}[\Psi]_{\ell}^{m}=0
\end{eqnarray}
hold for the multipole components $[\Psi]_{\ell}^{m}$ at the origin.

In consequence of the algebraic relations $[\Psi_{\ell}]^{m}=0$ and $\partial_{r}[\Psi]_{\ell}^{m}=0$, these hold for $\ell\neq0$ at the origin, the field equations read there as
\begin{eqnarray}
\label{origin}
&& \partial_{t}[\Psi]_{\ell}^{m}=0, \cr
&& \partial_{t}[\Psi_{t}]_{\ell}^{m}=0, \cr
&& \partial_{t}[\Psi_{r}]_{\ell}^{m}=\left\{ \begin{array} {r l}
 \partial_{r}[\Psi_{t}]_{\ell}^{m} , & 
{\rm if}\, \ell=0 \cr 0 , & {\rm otherwise}, \end{array}
\right.
\end{eqnarray}
which are completely regular at $r=0$. In spite of this apparently straightforward regularization of the singular terms at the origin a simpleminded numerical implementation of (\ref{origin}) still yield unstable evolutions for non-spherically symmetric configurations. A close look at the evolution justifies that numerical error starts to grow very quickly at the grid-point next to the origin. This can be understood by recalling that in the evaluation of the `$0/0$' type term 
\begin{equation}
\label{originterm} 
{\frac{1}{r^{2}}\mathbb{L}_{\mathbb{S}^{2}}(\Psi)}
\end{equation}
the higher multipole components acquire larger weight, which significantly magnifies the related numerical error. Nevertheless, this difficulty can also be overcome by a systematic application of the algebraic relations formulated by (\ref{reflection}) and (\ref{lhopital2}). By requiring these conditions to hold---where the first and second derivatives are assumed to be evaluated as dictated by the applied fourth order symmetric finite difference scheme---it turns out that the values of $[\Psi]_{\ell}^{m}$ at the origin and next to the origin are algebraically determined by the values of $[\Psi]_{\ell}^{m}$ next to next to the origin, with the only exception $[\Psi_{r}]_{0}^{0}$ which evolves according to the regular field equation $\partial_{t}[\Psi_{r}]_{0}^{0}=\partial_{r}[\Psi_{t}]_{0}^{0}$ [see (\ref{origin})] at $r=0$.

\section{Initial Data and the Applied Grid}
\label{initialdata}
\renewcommand{\theequation}{\ref{initialdata}.\arabic{equation}}
\setcounter{equation}{0}
\renewcommand{\thefigure}{\ref{initialdata}.\arabic{figure}}
\setcounter{figure}{0}
\renewcommand{\thetable}{\ref{initialdata}.\arabic{table}}
\setcounter{table}{0}

A generic initial data specification to our evolution equations (\ref{kgeqcoord}) is composed by three functions $\phi$, $\phi_{t}$ and $\phi_{r}$ specified on the $t=0$ initial data hypersurface, denoted by $\Sigma_0$, such that beside the trivial constraint $\phi_{r}=\partial_{r}\phi$ for the corresponding solution $\Phi$ the relations $\Phi|_{\Sigma_0}=\phi$ and $\Phi_{t}|_{\Sigma_0}=\phi_{t}$ also hold. It is straightforward to recast such an initial data specification for the rescaled field variable $\Psi=r\cdot\Phi$ which is given as a function of the coordinates $t,r_{*},\vartheta,\tilde{\varphi}$ defined in Section \ref{fieldequations}.

\subsection{Superradiance}

Before proceeding and introducing our choice for the only freely specifiable functions $\psi$ and $\psi_t$ on $\Sigma_0$ let us recall some simple facts related to superradiance. First of all, as it was shown first by Carter in \cite{carter} in the coordinates $t,r_{*},\vartheta,{\varphi}$ the d'Alembert operator separates for the $t$-Fourier transformed field. More precisely, the temporal Fourier transform, ${\mycal{F}}\Phi$, of a solution $\Phi$ to (\ref{kgeqcoord}) may be decomposed as 
\begin{equation}
\label{kerrrepresentation}
{\mycal{F}}\Phi(\omega,r_{*},\vartheta,{\varphi})=\frac{1}{\sqrt{r^{2}+a^{2}}}\sum_{\ell=0}^{\infty}\sum_{m=-\ell}^{\ell}R_{\ell,\omega}^{m}(r_{*})S_{\ell,a\omega}^{m}(\vartheta ,{\varphi})\,,
\end{equation}
where $\omega$ is the frequency in the time translation direction and 
$S_{\ell,a\omega}^{m}$ denotes the oblate spheroidal harmonic function with oblateness parameter $a\omega$ and with angular momentum quantum numbers $\ell,m$---they are eigenfunctions of a self-adjoint operator---, while for the radial functions $R_{\ell,\omega}^{m}$ a one-dimensional Schr\"odinger equation of the form 
\begin{equation}\label{schro}
\frac{d^2{R}_{\ell,\omega}^{m}}{dr_{*}^2}+\left[\left(\omega-\frac{m a}{r^2+a^2}\right)^2+\Delta\cdot V_{\ell,\omega}^{m}(r_{*})\right]\,{R}_{\ell,\omega}^{m}=0\,,
\end{equation}
with suitable real potentials 
$V_{\ell,\omega}^{m}(r_{*})$ can be derived from the field equation (\ref{kgeqcoord}).

The conventional argument ending up with the phenomenon called superradiance goes as follows. 
Physical solutions to (\ref{schro}) are supposed to possess the asymptotic behavior 
\begin{equation}\label{asympt}
{R}_{\ell,\omega}^{m}\sim 
\cases{
e^{-i\omega r_{*}} + \mathcal{R} \,e^{+i\omega r_{*}}  \;\;\mathrm{ as }\;\;r\rightarrow \infty
\cr
\mathcal{T}\,e^{-i(\omega- m \Omega_{H})r_{*}}         \;\;\;\;\;\mathrm{ as }\;\;r\rightarrow r_+
}
\end{equation} 
where 
$\Omega_{H}$ denotes the angular velocity of the black hole with respect to the asymptotically stationary observers \cite{wald}, and with reflection and transmission coefficients, $\mathcal{R}$ and $\mathcal{T}$ \cite{laguna}, respectively. Notice that this asymptotic behavior (\ref{asympt}) presumes the existence of a transmitted wave submerging into the ergoregion. By evaluating the Wronskian of the corresponding fundamental solutions, 
``close'' to infinity and ``close'' to the horizon, it can be shown that the reflection and transmission coefficients satisfy the relation 
$(\omega- m \Omega_{H})\,\left\vert\mathcal{T}\right\vert^{2} =(1-\left\vert\mathcal{R}\right\vert^{2})\,\omega$ \cite{yau}. Thereby, whenever 
$\left\vert\mathcal{R}\right\vert>1$---or equivalently, whenever $\vert\mathcal{T}\vert$ does not vanish and the inequality $0<\omega<m\Omega_{H}$ holds---positive energy is supposed to be acquired by the backscattered scalar wave due to its interaction with the Kerr black hole in the ergoregion. 
This phenomenon is referred as superradiant scattering which is also known as the field theoretical correspondence of the Penrose process derived in context of point particle mechanics \cite{penrose}. 

\subsection{Initial data}\label{supercheck}

In applying the introduced new numerical method our primary interest was to study the angular dependence of superradiant scattering. The applied initial data was specified accordingly---by applying an approach analogous to that of \cite{krivan,laguna}---{and} it was fine tuned to maximize {the effect of superradiance}.

However, to investigate a clear manifestation of superradiance in a fully dynamical process---i.e.\ the way an incident scalar wave acquires extra energy by submerging into the ergoregion and then carrying it away from the central region---the initial data we applied is of compact support such that it is separated from the ergoregion on the initial time slice. Thereby, the initial data we have applied differs significantly, in its fundamental character, from that of \cite{krivan,laguna}. To fulfill the above mentioned requirements the initial data for the rescaled field variable $\Psi=r\cdot\Phi$ was chosen as 
\begin{eqnarray} 
\label{initdata}
&&\hskip-.5cm\psi(r_{*},\vartheta,\tilde{\varphi})=e^{-\I\omega_{0}(r_{*}-r_{*0})}f(r_{*}-r_{*0})\,Y{}_{\ell}^{m}(\vartheta,\tilde{\varphi})\,,\cr
&&\hskip-.5cm\psi_{t}(r_{*},\vartheta,\tilde{\varphi})=-\I\omega_{0}\psi(r_{*},\vartheta,\tilde{\varphi})+e^{-\I\omega_{0}(r_{*}-r_{*0})}f'(r_{*}-r_{*0})\,Y{}_{\ell}^{m}(\vartheta,\tilde{\varphi})\,,\cr
&&\hskip-.5cm\psi_{r_{*}}(r_{*},\vartheta,\tilde{\varphi})=\partial_{r_{*}}\psi(r_{*},\vartheta,\tilde{\varphi})\,,
\end{eqnarray} 
where $f:\R\rightarrow\C$ is a smooth function of compact support, $f'$ denotes its first derivative and $\omega_{0}$, $r_{*0}$ are some real parameters.
Note that the appearance of the extra $r$ factor in $\Psi$ has no effect on the above recalled argument concerning the appearance of superradiance. Indeed, this factor may be suppressed by redefining the function $f$ that has not been specified yet.

It can be seen that in an asymptotic region of the Kerr background (or everywhere if the background is the Minkowski spacetime) the initial data specification (\ref{initdata}) yields an inward traveling spherical wave packet starting with a radial profile $f(r_{*}-r_{*0})$. Accordingly, in an asymptotic region the solution in a sufficiently small neighborhood of the initial data surface might be approximated as  
\begin{eqnarray}
\label{minkowskisolution}
\Psi(t,r_{*},\vartheta,\tilde{\varphi}) \approx e^{-\I\omega_{0}(r_{*}-r_{*0}+t)} f(r_{*}-r_{*0}+t)\,Y{}_{\ell}^{m}(\vartheta,\tilde{\varphi}).
\end{eqnarray}
It is informative to have a look at the temporal Fourier transform, ${\mycal{F}}\Psi$, of this approximate solution $\Psi$ that reads as 
\begin{eqnarray}
\label{fminkowskisolution}
{\mycal{F}}\Psi(\omega,r_{*},\vartheta,\tilde{\varphi}) \approx e^{-\I\omega(r_{*}-r_{*0})} {\mycal{F}}f(\omega-\omega_{0})\,Y{}_{\ell}^{m}(\vartheta,\tilde{\varphi}),
\end{eqnarray}
where $\omega$ being the temporal frequency while ${\mycal{F}}f$ stands for the Fourier-transform of $f$. Assuming that ${\mycal{F}}f$---playing the role of a frequency profile function---is sufficiently narrow the approximate solution (\ref{minkowskisolution}) looks almost like a monochromatic spherical wave solution similar in nature to the ingoing part of the wave determined by relations (\ref{kerrrepresentation}) and (\ref{asympt}). 
Accordingly,  by tuning $\omega_{0}$ such that the energy flux absorbed by the black hole to become negative---this is expected to be achieved by choosing $\omega_{0}$ such that $0<\omega_{0}<m\Omega_{H}$---one would expect that a to be superradiant solution is yielded. It can also be seen that the energy extraction may be maximized by choosing $\omega_{0}=\frac{1}{2}m\Omega_{H}$ and, in addition, by guaranteeing that $\int_{0}^{m\Omega_{H}}\vert {\mycal{F}}f\vert^2(\omega-\omega_{0})\,\mathrm{d}\omega\approx\int_{-\infty}^{\infty}\vert {\mycal{F}}f\vert^2(\omega-\omega_{0})\,\mathrm{d}\omega$, which happens whenever the frequency spectrum is narrow enough to be entirely included by the superradiant frequency regime. 

We would like to emphasize that the above outlined construction of a to be superradiant initial data specification involves a number of heuristics assumptions. For instance, the Fourier spectrum (\ref{fminkowskisolution}) is assumed to represent the Fourier transform of the purely inward traveling wave (\ref{minkowskisolution}) and whence the contribution from back scattering is completely neglected. 
To convince ourselves, in investigating the time evolutions of specific initial data choices, the power spectrum in temporal frequency of a supposed to be superradiant solution was also determined at a constant $r_*$ sphere which is located towards the black hole with respect to the compact support of the initial data. As it can be seen on Figure~\ref{spectrum1d} the solution remains in the desired frequency regime.

Based on the above outlined reasoning in our numerical simulations the radial profile function $f:\R\rightarrow\C$ was chosen to possess the form
\begin{eqnarray}
\label{profilefunction}
\hskip-0.3cm{f_{w}(x)=
\left\{
\begin{array} {r l}
e^{\left[-\left\vert\frac{w}{x+\frac{w}{2}}\right\vert-\left\vert\frac{w}{x-\frac{w}{2}}\right\vert+4\right]}\,, & \mathrm{if } x\in[-\frac{w}{2},\frac{w}{2}]\cr
 0\,, & \mathrm{otherwise} \,,
\end{array}
\right.}
\end{eqnarray}
which is a smooth function of the real variable $x$ with compact support $[-\frac{w}{2},\frac{w}{2}]$. To be compatible with our most important physical requirements that yields an incident wave packet that may acquire extra energy after penetrating through the ergoregion the initial parameter $r_{0*}$ in (\ref{initdata}) was chosen to be sufficiently large to have a clear separation of the support of the initial data and the ergoregion on $\Sigma_0$. 

\subsection{Grid size and parameters}

The radial extent of the computational domain used in our simulations was chosen to be the closed interval $-64\leq r_{*}\leq64$ in the massive case ($M=1$), whereas the closed interval $0\leq r_{*}\leq64$ in case of the Minkowski limit ($M=0$). The specific angular momentum parameter $a$ of the Kerr background was always chosen to be $0.9$ while the Schwarzschild limit was achieved by taking $a=0$. The evolution of the system was investigated in the time interval $0\leq t\leq 192$. The fine tuned parameters of the initial data (\ref{initdata})---tuned to have the largest possible effect in superradiance---were $\omega_{0}=0.313394503136629$, $r_{*0}=31.8229346475152$, $w=35.3679317843828$, while the angular and azimuthal mode numbers $\ell$ and $m$ were fixed by choosing $\ell=2$ and $m=-2,0,2$. Accordingly, the initial data had pure quadrupole character, while for $m$ the values $-2$, $0$ and $2$ signifies counter-rotating, non-rotating and co-rotating initial distributions, respectively. In virtue of the above discussion we may only expect the appearance of superradiance in the co-rotating case with $m=2$, while no or negligible effect may be anticipated in the non-rotating or counter-rotating cases with $m=0$ or $m=-2$, respectively. 

In order to justify the above very specific choice made for the parameters $\omega_{0}$, $r_{*0}$ and $w$ let us recall the list of requirements they have to satisfy. 
\begin{itemize}
\item $(w/4)^{-1}\ll m\Omega_{H}$ \  $\rightleftharpoons$ \   The width of frequency profile should be much smaller than the width of the superradiant frequency domain.
\item $w\ll r_{*,\mathrm{max}}-r_{*,\mathrm{ergosphere}}$ \  $\rightleftharpoons$ \  The width of the initial data has to be much smaller than the part of the domain of outer communication outside to the ergosphere and covered by the grid.
\item $w\gg \Delta{r_{*}}$ $\rightleftharpoons$ The width and ramp of the wave packet has to be much larger than the spatial resolution of the base grid applied in AMR.
\item $\omega_{0}\ll \Delta{t}^{-1}$\  $\rightleftharpoons$  \ The leading frequency of the initial data has to be much smaller than the maximal frequency allowed by the temporal resolution of the base grid.
\item $r_{*0}-\frac{1}{2}w>r_{*,\mathrm{ergosphere}}$\  $\rightleftharpoons$  \ The support of the initial data has to be well separated from the ergoregion.
\item $r_{*0}+\frac{1}{2}w<r_{*,\mathrm{max}}$\  $\rightleftharpoons$  \ The support of the initial data has to be included with suitable margins by the radial computational domain.
\end{itemize}

\subsection{Generic initial data for GridRipper}

Let us finally mention that in spite of the fact that in the investigations reported in this paper the initial data is always of pure multipole type in our code GridRipper (that can be downloaded from \cite{gridripper3}) the generic case---whenever a multipole expansion of the initial data is required---is also implemented (see, e.g., \cite{csizmadia} for an application). In the current version of GridRipper this is done by simply integrating numerically the product of the basic variables with $\overline{Y}{}_{\ell}^{m}$ over the $r_{*}=const$ angular spheres on $\Sigma_0$. In order to make this part computationally inexpensive---reducing thereby the required computational time to the order of seconds---the very efficient and precise two dimensional adaptive Genz-Malik (AGM) method \cite{genzmalik} is applied.

\section{Numerical investigations}
\label{resultsanddiscussion}
\renewcommand{\theequation}{\ref{resultsanddiscussion}.\arabic{equation}}
\setcounter{equation}{0}
\renewcommand{\thefigure}{\ref{resultsanddiscussion}.\arabic{figure}}
\setcounter{figure}{0}
\renewcommand{\thetable}{\ref{resultsanddiscussion}.\arabic{table}}
\setcounter{table}{0}

This Section is to introduce our main results concerning the evolution of a massless scalar field on Kerr background. As emphasized earlier distinguished attention will be paid to the angular dependence of the field and to the formation of superradiance. Before presenting our numerical results it is important to justify the reliability of the proposed new method. 

\subsection{Error estimates and convergence}

As emphasized in Section~\ref{numericalevolution} the representation of the basic variables by truncated multipole series can only be `exact' in the case of linear field equations. On the other hand, whenever the evolution equations contain non-linear expressions of the basic variables---with non-trivial angular dependencies---the multipole method becomes inherently perturbative. Nevertheless, it is believed that the error yielded by the truncation of the infinite multipole series remains at a tolerable level provided that the value of $\ell_{\mathrm{max}}$ is kept at a sufficiently high value. In order to demonstrate that this expectation is valid{,} the $\ell_{\mathrm{max}}$ dependence of some estimates on the error and the convergence will be {shown below}. 

Almost all of our simulations were performed by using $\ell_{\mathrm{max}}=12$, nevertheless, in order to be able to determine the convergence rate simulations with $\ell_{\mathrm{max}}=14$, $16$ and $18$ were also performed in the Kerr case with initially co-rotating and counter rotating distributions. Note that as the initial data had pure {quadrupole} character the indicated variation of the value $\ell_{\mathrm{max}}$ had no effect on it.

In what follows the numerical representation---with maximal multipole order $\ell_{\mathrm{max}}$---of a function $f$ will be denoted by $f_{{}_{\ell_{\mathrm{max}}}}$. Assume that $\Delta{\ell_{\mathrm{max}}}$ is some positive integer. 
As a measure of the relative error of the variable $f_{{}_{\ell_{\mathrm{max}}}}$ the quantity
\begin{equation}
\label{relativeerror}
E_{^{}_{\ell_{\mathrm{max}},\Delta{\ell_{\mathrm{max}}}}}(f)=\frac{\Vert f_{{}_{\ell_{\mathrm{max}}}} - f_{{}_{\ell_{\mathrm{max}}+\Delta{\ell_{\mathrm{max}}}}} \Vert}{\Vert f_{{}_{\ell_{\mathrm{max}}+\Delta{\ell_{\mathrm{max}}}}} \Vert}\,
\end{equation}
was applied. Notice that $E_{^{}_{\ell_{\mathrm{max}},\Delta{\ell_{\mathrm{max}}}}}(f)$ monitors the time dependence of the difference of the basic and finer solutions $f_{{}_{\ell_{\mathrm{max}}}}$ and $f_{{}_{\ell_{\mathrm{max}}+\Delta{\ell_{\mathrm{max}}}}}$ relative to the finer one. The norm $\Vert\cdot\Vert$ applied here, and in (\ref{convfact}) below, is the $C^{0}$ norm bounded from above by the second Sobolev norm  $C\Vert\cdot\Vert_{H^{2}_{2}(\mathbb{S}^{2},\mathbb{C})}$, where $C$ is the minimal Sobolev constant associated with the $H^{2}_{2}({\mathbb{S}^{2}},\mathbb{C})\subset C^{0}({\mathbb{S}^{2}},\mathbb{C})$ Sobolev embedding as discussed in \ref{sobolevembedding} and \ref{sobolevconstant}. 
Clearly, $E_{^{}_{\ell_{\mathrm{max}},\Delta{\ell_{\mathrm{max}}}}}(f)\ll 1$ has to hold for reasonable numerical solutions provided that the value of $\ell_{\mathrm{max}}$ is sufficiently large. 

Another useful quantity characterizing the validity of the applied numerical schema is the d'Alembert convergence factor defined for the numerical representation $f_{{}_{\ell_{\mathrm{max}}}}$ as 
\begin{equation}
\label{convfact}
Q_{^{}_{\ell_{\mathrm{max}},\Delta{\ell_{\mathrm{max}}}}}(f)=\frac{\Vert f_{{}_{\ell_{\mathrm{max}}+\Delta{\ell_{\mathrm{max}}}}} - f_{{}_{\ell_{\mathrm{max}}+2\Delta{\ell_{\mathrm{max}}}}} \Vert}{\Vert f_{{}_{\ell_{\mathrm{max}}}} - f_{{}_{\ell_{\mathrm{max}}+\Delta{\ell_{\mathrm{max}}}}} \Vert}\,.
\end{equation}
This quantity measures the local convergence rate centered at $\ell_{\mathrm{\max}}+\Delta{\ell_{\mathrm{\max}}}$. In virtue of d'Alembert's criterion {guaranteeing} a sequence to be summable{,} convergence of the numerical solution in $\ell_{\mathrm{max}}$ occurs provided that the inequality {
$\limsup_{\ell_{\mathrm{max}}\rightarrow\infty}\left(Q_{\ell_{\mathrm{max}},\Delta{\ell_{\mathrm{max}}}}(f)\right)<1$} holds.

A related quantity---which is useful in quantifying the appropriateness of the numerical scheme---is the local exponent of convergence defined by the ratio 
\begin{equation}
\frac{\ln\left(Q_{^{}_{\ell_{\mathrm{max}},\Delta{\ell_{\mathrm{max}}}}}(f)\right)}{\Delta{\ell_{\mathrm{\max}}}}\,. 
\end{equation}
As discussed in \ref{tailsumerrorbounds}, whenever a function $f$ is {
$C^{\infty}$} the convergence has to be faster than any polynomial{ in 
$\ell_{\mathrm{max}}$}, and therefore the local exponent of convergence becomes an informative measure 
of convergence rate---which is expected to be constant whenever the convergence is exponential.

On Figure~\ref{convergence} the time and  $\ell_{\mathrm{max}}$ dependencies of the relative error $E_{^{}_{\ell_{\mathrm{max}},\Delta{\ell_{\mathrm{max}}}}}(\Psi)$ 
relevant for the basic variable $\Psi$ are shown for an initially co- or 
counter rotating massless scalar field on a Kerr background with parameters 
$M=1$ and $a=0.99$. As it is clearly visible the relative error is always 
smaller then $10^{-8}$ and it is decreasing while $\ell_{\mathrm{max}}$ is 
increased. If considerations are restricted to the initial part of 
the evolution, i.e., to the part before the wave packets leave the 
computational domain at $t\sim 75$, the relative error does not exceed the 
level $\sim  10^{-12}$. Following this initial, truly dynamical, period eight 
order smaller amplitude quasi-normal ringing and finally an even smaller amplitude 
power low tail decay occur (see, e.g., \cite{RT}). The amplitude of these processes is comparable to the 
applied accuracy of the present simulations which yields a visible increase 
in the relative error. Notice also that the apparent linear hierarchy of the 
graphs---transparent in the applied logarithmic scale in both of the 
subregions---justifies that, indeed, the convergence rate in $\ell_{\mathrm{max}}$ 
is exponential.

\begin{figure}[!ht]
\begin{center}
\includegraphics{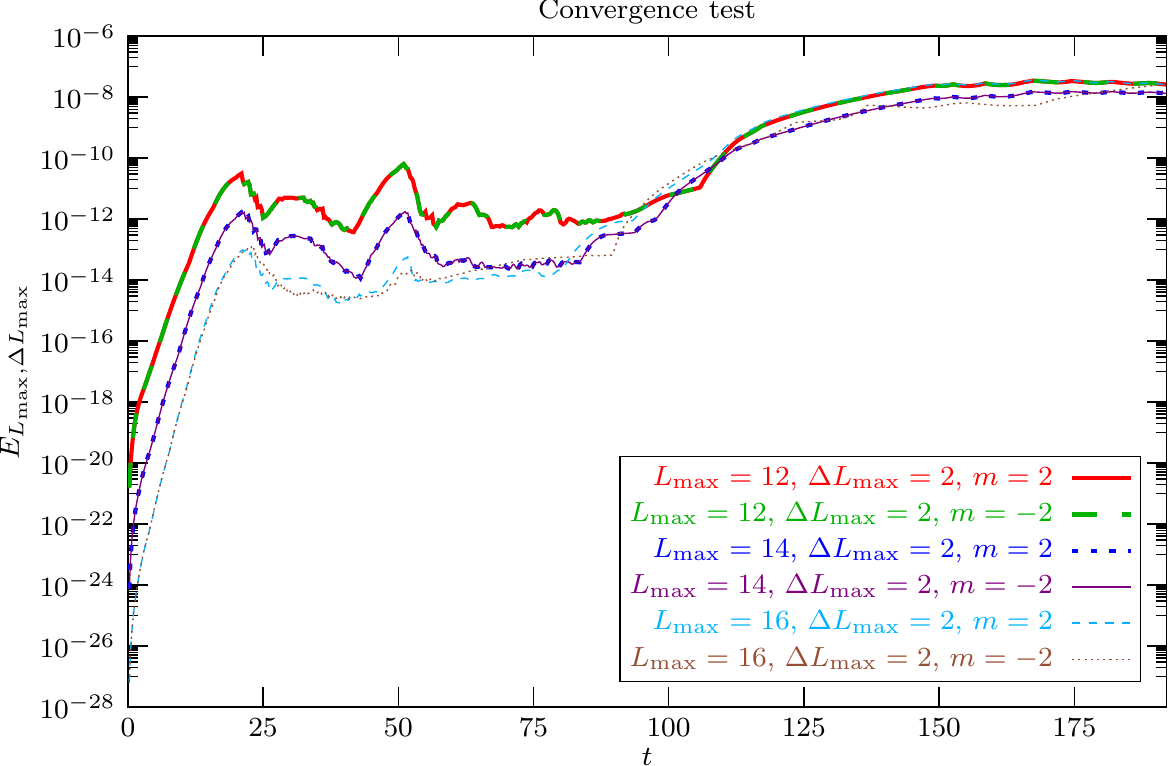}
\caption{\label{convergence}(Color online) 
The coordinate time and ${\ell_{\mathrm{\max}}}$ 
dependence of the relative error $E_{^{}_{\ell_{\mathrm{max}},\Delta{\ell_{\mathrm{max}}}}}(\Psi)$ of $\Psi$ representing the evolution of initially co- or counter rotating massless scalar field with $\ell=2, m=2$ or $\ell=2, m=-2$, respectively, on a Kerr background with parameters $M=1$, $a=0.99$ and for the particular values $\ell_{\mathrm{\max}}=12, 14$ and $16$ are shown. The apparent linear shifting of the error curves on this figure verify that the rate of convergence in $\ell_{\mathrm{max}}$ is exponential.}
\end{center}
\end{figure}

Let us mention here that as the finite differencing part of our 
code, applied in the ``$t-r$'' Lorentzian sector, is exactly the one which, 
along with the AMR part, went through careful and detailed convergence 
tests---and the pertinent results can be found in 
\cite{gridripper,gridripper2}---we would like to recall here only that this 
part of the code is of fourth order accurate as it should be according to the 
implemented numerical scheme.

The computational times as listed in Table~\ref{comptime}---relevant for the same systems as described in connection with Figure~\ref{convergence} and also for the common PC architecture \texttt{AMD Phenom(tm) 2.3GHz CPU}---justify that the required computational resources are affordable, i.e.~the proposed new method is computationally inexpensive.
\begin{table}[!ht]
\begin{center}
\begin{tabular}{c|c|c}
$\ell_{\mathrm{max}}$ & Comp.\ time ($m=-2$) & Comp.\ time ($m=2$) \\
\hline
$12$   & $39582\,\mathrm{sec}$ &  $56374\,\mathrm{sec}$ \\
$14$   & $53274\,\mathrm{sec}$ &  $76089\,\mathrm{sec}$ \\
$16$   & $68900\,\mathrm{sec}$ &  $99248\,\mathrm{sec}$ \\
$18$   & $87514\,\mathrm{sec}$ & $124727\,\mathrm{sec}$ \\
\end{tabular}
\caption{\label{comptime} The computation time of the evolution of initially 
co- or counter rotating massless scalar field with $\ell=2, m=2$ or 
$\ell=2, m=-2$, respectively, on a Kerr background with parameters 
$M=1$, $a=0.99$ within the coordinate time interval $0\leq t \leq 192$ and 
with the particular choices of the values $\ell_{\mathrm{\max}}=12, 14, 16$ 
and $18$. The indicated times were measured by using PCs with architecture 
\texttt{AMD Phenom(tm) 2.3GHz CPU} which justify that the proposed new method 
is computationally inexpensive.}
\end{center}
\end{table}

\subsection{Energy and angular momentum balances}

In addition to the rate of convergence in $\ell_{\mathrm{max}}$ the use of the energy and angular momentum balance relations provides another important consistency check verifying the reliability of the proposed numerical algorithm. 

Recall that the balance laws relate values of energy and angular momentum on portions of $t=const$ hypersurfaces to energy and angular momentum fluxes across the timelike hypersurfaces connecting the edges of them. In particular, the argument goes as follows. Whenever there is a divergence free vector field $J^a$ on a spacetime it can be justified by referring to Stokes' theorem that for a spacetime domain $N$ with boundary $\partial N$ and outward pointing unit normal vector $n_a$ at $\partial N$ the  balance relation 
\begin{equation}
\label{eq.energy}
\int_{\partial N} n_a J^a=\int_{int N} \nabla_a J^a =0\,
\end{equation}
holds. On the other hand, it is well-known that---as the vector fields $\partial_{t}$ and $\partial_{\varphi}$ are Killing vectors on a Kerr spacetime---the contractions $J_E^a=-{T^{a}}_{b}\partial_{t}^{\hskip0.04cm b}$ and {$J_L^a={T^{a}}_{b}\partial_{\varphi}^{\hskip0.04cm b}$}, which are the energy and angular momentum currents, are divergence free, where $T_{ab}$ denotes the energy-momentum tensor of the matter fields. 
In our investigations, $N(t)$ was chosen to possess---in the tortoise Boyer-Lindquist coordinates---the form of the Cartesian product $
[0,t]\times [r_*{}_1,r_*{}_2]\times [0,\pi]\times [0,2\pi]$.

\begin{figure}[!ht]
\begin{center}
\includegraphics{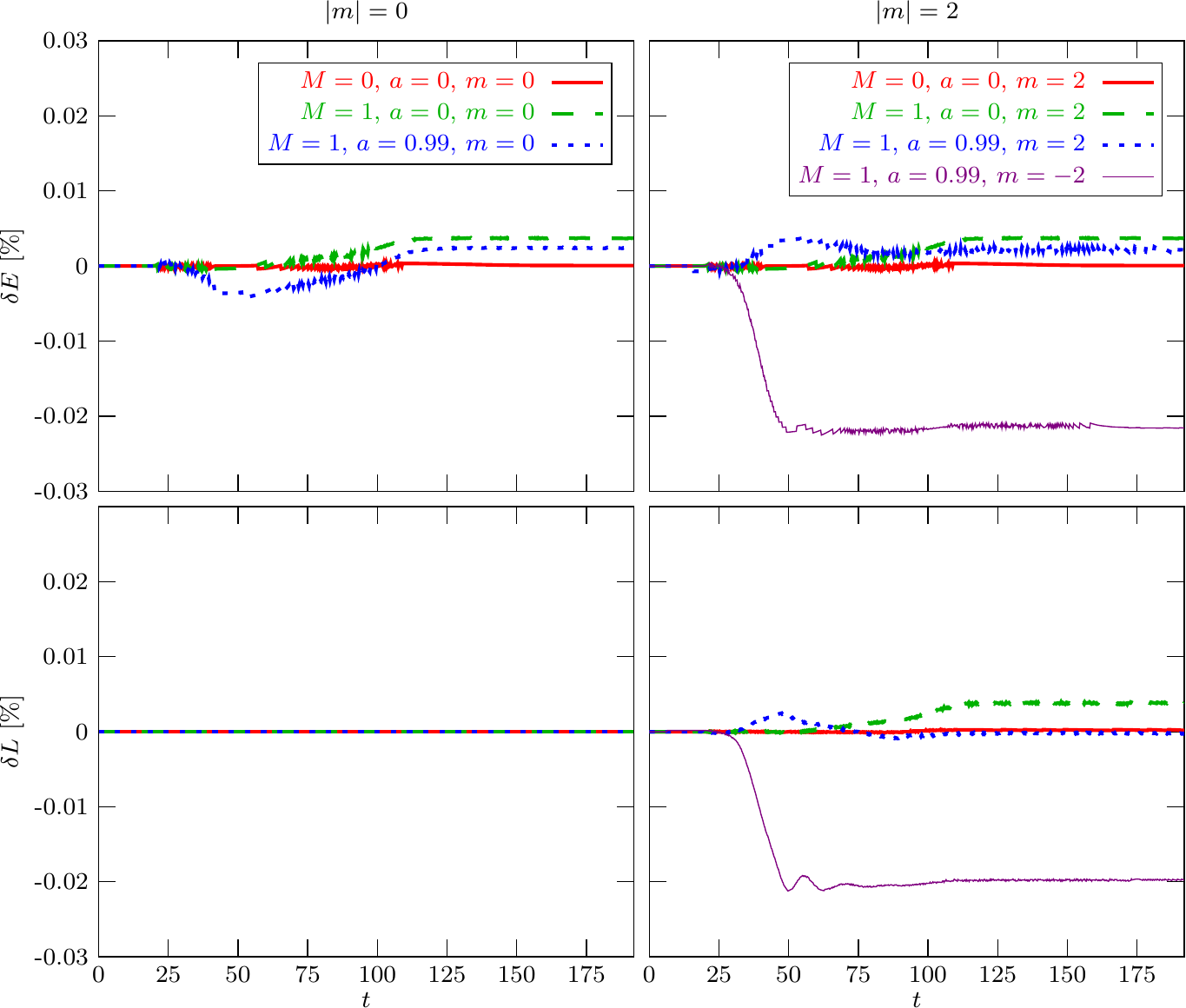}
\caption{\label{conservation}(Color online) 
The time dependence of the relative variation $\delta E=\frac{1}{E_0}\int_{\partial N(t)}\hskip-.0cm n_a J_P^a$ and $\delta L=\frac{1}{L_0}\int_{\partial N(t)}\hskip-.0cm n_a J_L^a$ of energy and angular momentum balances during the evolution of a massless scalar field on Minkowski, Schwarzschild and Kerr background spacetimes with initially co-rotating ($m=2$), non-rotating ($m=0$) and counter-rotating ($m=-2$) pure quadrupole type initial data. The reference values $E_0$ and $L_0$ are the  initial energy and angular momentum contents of the selected parts of the initial data surface $t=0$, respectively.}
\end{center}
\end{figure}

The constant $r_*$ values determining the edges of spatial section of the cylindrical domain of integration $N(t)$---in order to keep some margin from the edges of the computational domain---were chosen to be such that $r_*{}_1=-63$ and $r_*{}_2= 63$ for the Kerr or Schwarzschild cases with $M=1$, whereas $r_*{}_1=0$ and $r_*{}_2= 63$ were used in the Minkowski limit with $M=0$, where $r_*$ reduces to $r$. 

On Figure~\ref{conservation} the time dependence of the relative variation $\delta E$ and $\delta L$ of energy and angular momentum balance relations are shown. Here $\delta E$ and $\delta L$ are defined as
\begin{equation}
\delta E=\frac{1}{E_0}\int_{\partial N}\hskip-.3cm n_a J_E^a \quad {\rm and} \quad \delta L=\frac{1}{L_0}\int_{\partial N}\hskip-.3cm n_a J_L^a\,,
\end{equation}
where $E_0$ and $L_0$ denote the energy and angular momentum of the initial configuration within the spatial region  $r_*\in[r_*{}_1,r_*{}_2]${.}
The graphs on Figure~\ref{conservation} make it transparent that the energy and angular momentum balances hold up to a remarkable precision for the entire evolution. 

Note, finally, that the evaluation of the involved integrals can be done in a straightforward way in context of the spectral method as the integration with respect to the angular degrees of freedom can simply be given as $L^{2}$ scalar products of the basic variables which can be evaluated as outlined in Appendices~
A, B, C and D.
In addition, the integrals with respect to the radial and temporal directions were evaluated by using a fifth order integration scheme to avoid the loss of accuracy of the numerical data yielded by the applied {fourth order} finite difference scheme in the $t-r_{*}$ plane.

\subsection{Angular dependencies of the fields}

After presenting the consistency checks of the applied numerical scheme let us turn to the description of the physical properties of the solutions. In this Section our main concern is the angular dependence of the evolving scalar field. 

To have some hints regarding the dynamics of a massless scalar field with co-rotating quadrupole type data on Kerr spacetime the energy density and the momentum current distributions are shown on Figure~\ref{3d} on the initial data surface, at $t=0$, and on an intermediate time level surface after a {scattering} of the inward falling radiation has happened, at $t=48$. It is visible that, in spite of the fact that the initial data was also fine tuned to be maximally superradiant, the dominant part of the outgoing radiation leaves the central region without indicating the slightest preference of directions close to the axis of rotation. 

On Figure~\ref{outflow} the time dependence of both the total integrated fluxes of the radiated energy and angular momentum through the $r_*=63$ sphere and the fluxes of the radiated energy and angular momentum integrated on the caps of sphere yielded by the intersection of a double right circular rotationally symmetric cone with apex angle $2\theta=\pi/3$ and the $r_*=63$ sphere is shown for various configurations. 

\begin{figure}[!ht]
\begin{center}
\includegraphics{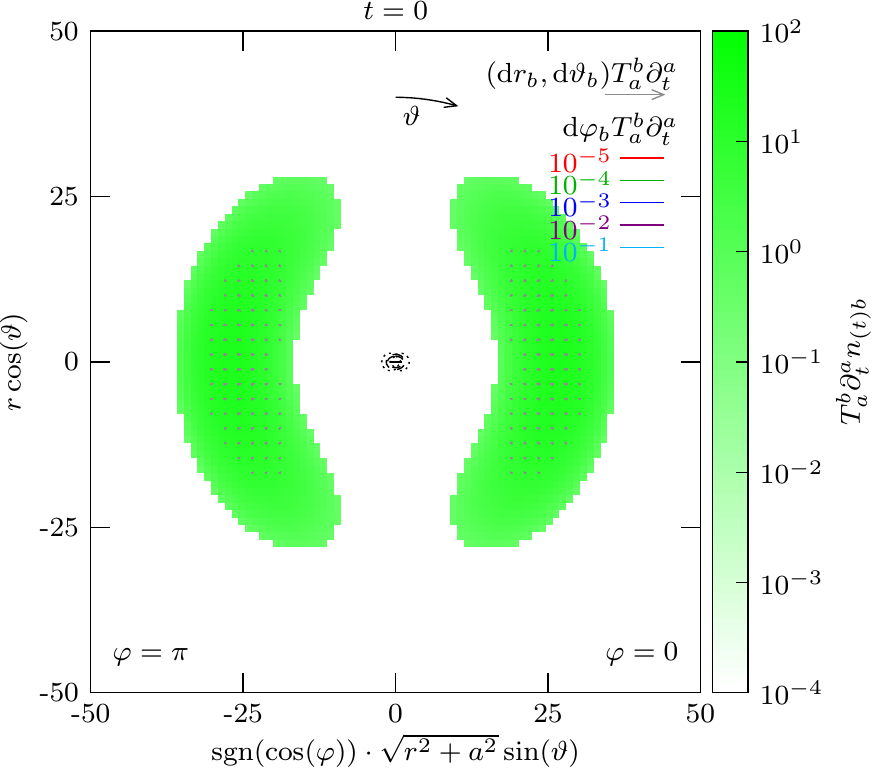}
\vspace*{2mm}
\includegraphics{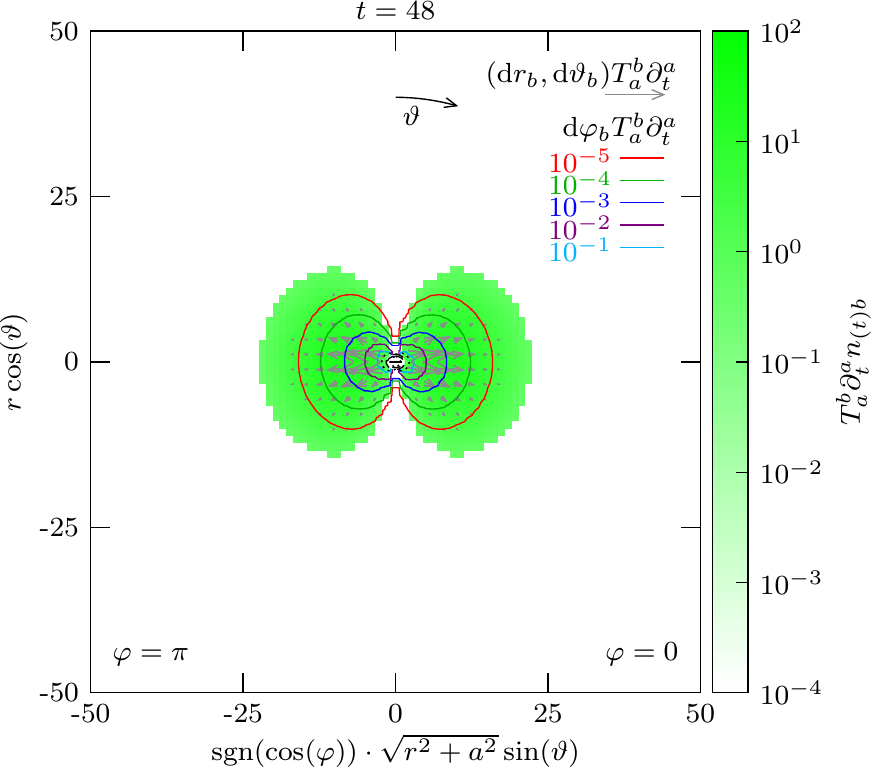}
\caption{\label{3d}(Color online) The spatial distribution of energy density 
$n_{(t)b}{T_{a}}^{b}\partial_{t}^{\hskip0.04cm a}$ and the $r, \vartheta, \varphi$ 
components of the energy current ${T_{a}}^{b}\partial_{t}^{\hskip0.04cm a}$ 
are shown at $t=0$ (top panel) and at $t=48$ (bottom panel) for a massless 
initially co-rotating scalar field on a Kerr background with parameters $M=1$ 
and $a=0.99$. Note that only the sections corresponding to the azimuthal slices 
$\varphi=0$ and $\pi$ are plotted. The initial data was fine tuned to be 
maximally superradiant and by $t=48$ a scattering has 
already happened. The energy density, 
$n_{(t)b}{T_{a}}^{b}\partial_{t}^{\hskip0.04cm a}$, is indicated by the 
color map while the $\D r_{b}{T_{a}}^{b}\partial_{t}^{\hskip0.04cm a}$ and 
$\D\vartheta_{b}{T_{a}}^{b}\partial_{t}^{\hskip0.04cm a}$ components of the 
energy current tangent to the $\varphi=0$ and $\pi$ plane are indicated by 
arrows, whereas the azimuthal component  
$\D\varphi_{b}T_{a}^{b}\partial_{t}^{\hskip0.04cm a}$ is depicted by isocurves. 
The location of the singularity, the event horizon and the ergosphere is also 
indicated on central parts of the plots. Note that for the 
{sake} of simplicity the quantities indicated are 
given by referring to the Boyer-Lindquist coordinates.}
\end{center}
\end{figure}

It can be seen that in the case of rotating initial configuration (with $m\neq0$) the outgoing radiation is suppressed in the vicinity of the axis of rotation. An effective evaluation of the flux integrals on the caps of the sphere yielded by the intersection of a double right circular rotationally symmetric cone in the spectral framework requires additional technicalities which are described in details in \ref{radiationanisotropy}.

\begin{figure}[!ht]
\begin{center}
\includegraphics{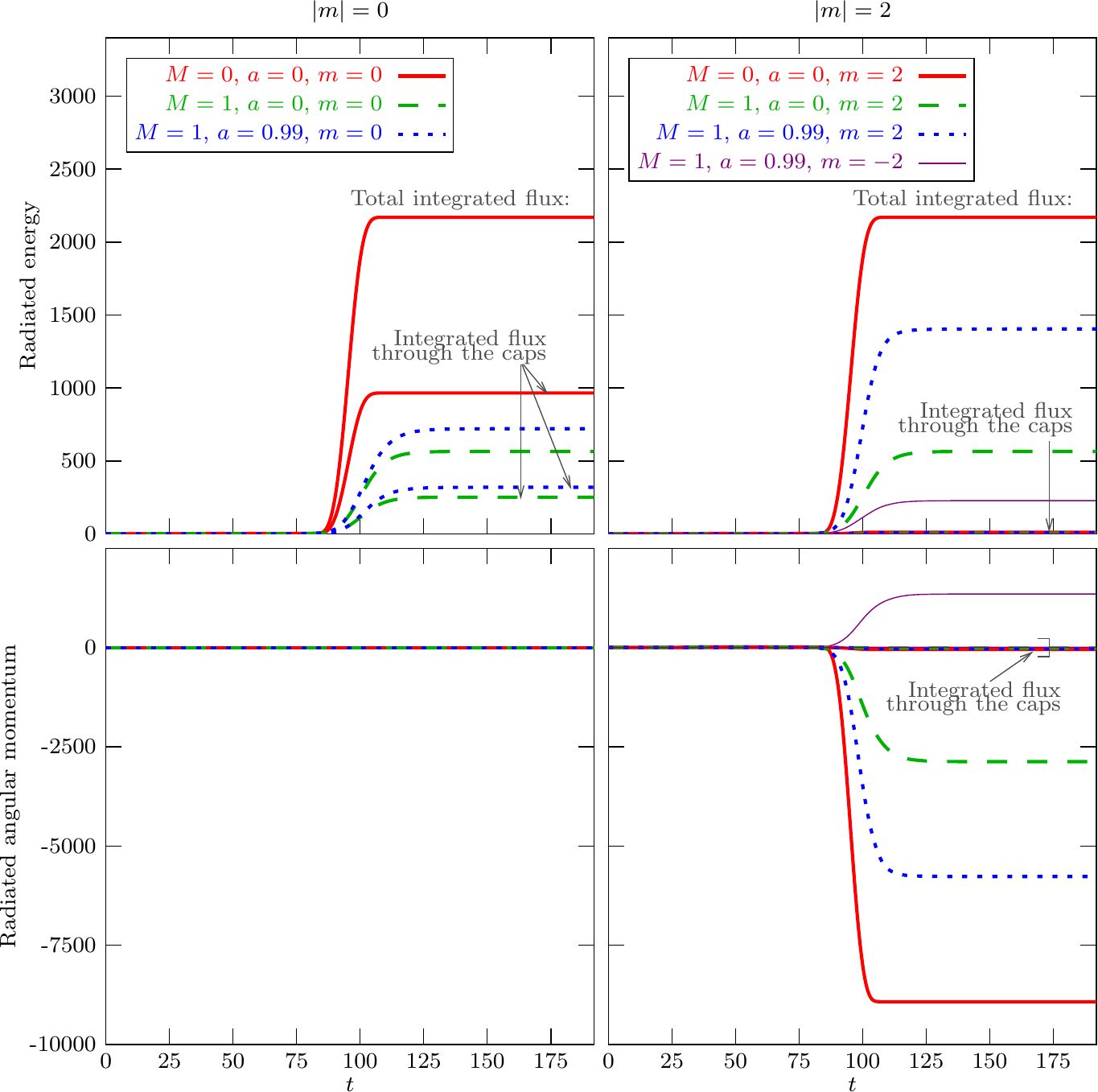}
\caption{\label{outflow} (Color online) 
The time dependence of the integrated energy and angular momentum fluxes 
during the evolution of an initially quadrupole type co-rotating ($m=2$), 
non-rotating ($m=0$) and counter rotating ($m=-2$) massless scalar field on 
Minkowski, Schwarzschild and Kerr spacetimes is shown. Besides the total 
integrated fluxes the fluxes integrated on the disjoint caps of the sphere 
with radius $r_*=63$ yielded by the intersection of a double right circular 
rotationally symmetric cone with apex angle $2\theta=\pi/3$ and the sphere 
are plotted. Note that in the Kerr case the co-rotating initial data with 
$m=2$ was fine tuned to be maximally superradiant.}
\end{center}
\end{figure}

It still remains really cumbersome to extract some insight concerning the anisotropy of the outgoing radiation simply by inspecting plots of the type depicted by Figs.~\ref{3d} and \ref{outflow}. Therefore it is important to have a clear measure of anisotropy. Assume that $J^a$ is a conserved current and consider a ball $\mycal{B}(\bar r_*)$ of radius $r_*=\bar r_*$ in the outer region of the computational domain. Denote by $\mycal{B}(\bar r_*,\theta)$ the disjoint union of the two caps yielded by the intersection of a double right circular rotationally symmetric cone with apex angle $2\theta$ and the ball $\mycal{B}(\bar r_*)$. Clearly then $\mycal{B}(\bar r_*)=\mycal{B}(\bar r_*,\pi/2)$. Finally, denote by $\left.X_{\mathrm{out}}^J\right\vert_{\theta}(t)$ the integral $\int_{[0,t]\times \mycal{B}(\bar r_*,\theta)} n_{r_*}{}_a J^a$, where $n_{r_*}{}_a$ stands for the unit form field normal to the hypersurface $[0,t]\times \mycal{B}(\bar r_*,\theta)$, i.e.~$n_{r_*}{}_a$ points to the increasing $r_*$ direction.

Based on the above introduced quantities, as a measure of anisotropy, we may use then the expression 
\begin{eqnarray}
\label{anisotropy}
{\mycal{A}^J(t,\theta)=\frac{\frac{1}{4\pi(1-\cos \theta)}}{\frac{1}{4\pi}}\frac{\left.X_{\mathrm{out}}^J\right\vert_{\theta}(t)}{\left.X_{\mathrm{out}}^J\right\vert_{\pi/2}(t_{\mathrm{max}})}}
\end{eqnarray} 
which is nothing else but the ration of the time dependent angle average of the integrated flux of the current $J^a$ through the two caps, $\mycal{B}(\bar r_*,\theta)$, of the ball $\mycal{B}(\bar r_*)$ of radius $r_*=\bar r_*$ located at the north and south poles and of the angle average of the total integrated flux through the entire ball $\mycal{B}(\bar r_*)$.  

Notice that, for any fixed $\theta\in(0,\pi/2)$ value, whenever the radiation has no anisotropy at all $\mycal{A}^J(t,\theta)$ tends to $1$ as {$t\rightarrow t_{\mathrm{max}}$}, it tends to a value smaller than $1$ if the radiation shows preferences of the directions close to the equatorial plane, while $\mycal{A}^J(t,\theta)$ tends to a value greater than $1$ if the radiation prefers the axial directions. Clearly by choosing $\theta$ to be small the sharp preference of the axis can be tested, whereas by increasing its value the anisotropy can be tested for a wider range of directions around the axis. In all of our investigation $\theta$ was chosen to be $\pi/6$, i.e.~all directions within a right circular rotationally symmetric cone with apex angle $2\theta=\pi/3$ were included. 

On Figure~\ref{scaledoutflow} the time dependence of the energy and angular momentum radiation anisotropies, $\mycal{A}^{J_E}(t,\pi/6)$ and $\mycal{A}^{J_L}(t,\pi/6)$ is shown for Kerr, Schwarzschild and Minkowski background spacetimes. The evolution starts with co-rotating or counter rotating initially field configurations. As before the co-rotating initial data was fine tuned to be maximally superradiant. By the inspection of Figure~\ref{scaledoutflow} the following simple observation can be made.
\begin{itemize}
\item Regardless whether the initial configuration was co-rotating or counter rotating a strong preference of the directions close to the equatorial plane is justified {by the} asymptotic behavior of $\mycal{A}^{J_E}(t,\pi/6)$ and $\mycal{A}^{J_L}(t,\pi/6)$ both of which tend to a value much smaller then $1$. 

\item It is also clearly visible that the properties of the background spacetimes have no noticeable effect on the {evolution} of either $\mycal{A}^{J_E}(t,\pi/6)$ or $\mycal{A}^{J_L}(t,\pi/6)$. 

\item Finally, the fact that the initial data for the co-rotating configuration was fine tuned to have the solution to be superradiant has no effect at all on the anisotropy of the energy and angular momentum distribution of the outgoing radiation.   
\end{itemize}

\begin{figure}[!ht]
\begin{center}
\includegraphics{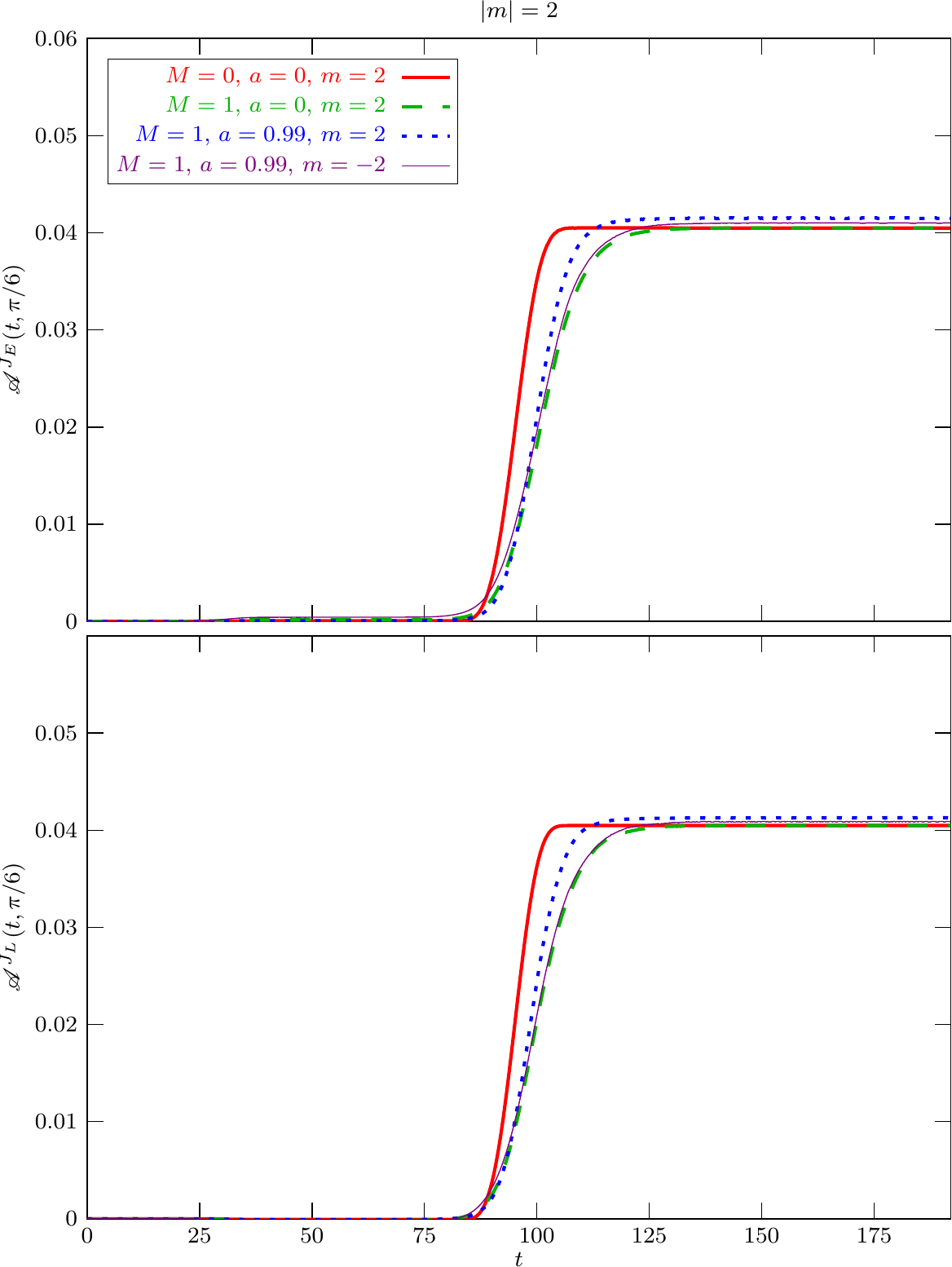}
\caption{\label{scaledoutflow}(Color online) The time dependence of the energy and angular momentum radiation anisotropies, $\mycal{A}^{J_E}(t,\pi/6)$ and $\mycal{A}^{J_L}(t,\pi/6)$, is shown for Kerr, with $M=1$, $a=0.99$, Schwarzschild, with $M=1$, $a=0$, and Minkowski, with $M=0$, $a=0$, backgrounds. The evolution starts with co-rotating or counter rotating initial field configurations. The co-rotating initial data was fine tuned to be maximally superradiant in the Kerr case.}
\end{center}
\end{figure} 

\subsection{Superradiance and nearly perfect reflection}

Based on the observed insensitivity of the anisotropy on the superradiant or non-superradiant character of the initial configuration it turned to be important to understand the reflection and absorption processes during the evolution of the investigated scalar field on black hole backgrounds. 

As it was already indicated in subsection \ref{supercheck} it is important to be sure that the type of initial data chosen there does correspond to be superradiant configurations. To justify that this is indeed the case we determined the temporal frequency spectrum of a numerical solution with initial data parameters $\omega_{0}=\frac{1}{2}m\Omega_{H}$, $r_{*0}=31.823$, $\ell=2$, $m=2$ and with Kerr background parameters $M=1$, $a=0.99$ at the location $r_{*}=14$ which is located towards the black hole with respect to the compact support of the initial data. It is clearly justified by Figure~\ref{spectrum1d} that the spectrum is indeed well contained within the superradiant regime as expected.

\begin{figure}[!ht]
\begin{center}
\includegraphics{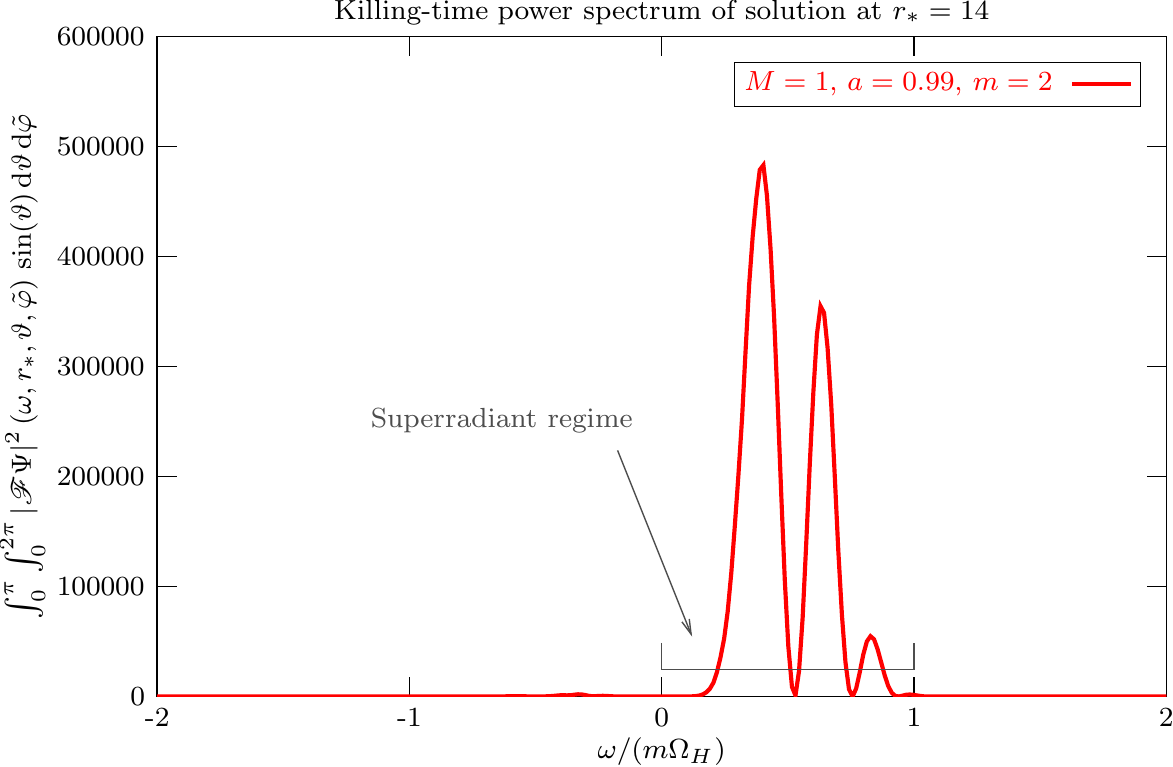}
\caption{\label{spectrum1d}(Color online) The power spectrum in temporal frequency, at the location $r_{*}=14$, 
of an inward traveling wave packet with $\ell=2$, $m=2$ and with initial data having 
leading frequency $\omega_{0}=\frac{1}{2}m\Omega_{H}$. Thereby, the incident wave packet is, as it is expected, to be maximally superradiant. The parameters of the Kerr background were $M=1$, $a=0.99$. Note that 
the spectrum appears to be relatively intact in its superradiant character, i.e.\ 
the solution remains in the desired frequency regime.}
\end{center}
\end{figure}

The generic behavior of the incident wave packets is depicted on Figure~\ref{completereflection}. This figure shows the time dependence of the radial coordinate densities of the energy and angular distributions of the massless scalar field evolving on Minkowski, Schwarzschild and Kerr background spacetimes. The initial data is of quadrupole type and co-rotating or counter rotating 
in the Kerr case, according to the choices $\ell=2$ and $m=\pm2$. 

The radial coordinate density of energy and angular momentum are the quantities $\mycal{E}$ and $\mycal{L}$ with the help of which the energy and angular momentum, $E$ and $L$, on a $t=const$ time level surface can be given as $E=\int_{t=const}\mycal{E}\D r_*$ and $L=\int_{t=const}\mycal{L}\D r_*$, i.e.~in $\mycal{E}$ and $\mycal{L}$ the energy and angular momentum densities are integrated with respect to the angular degrees of freedom, and they also involve the not yet integrated part of the $3$-volume form induced on the $t=const$ time level surfaces. 

We have found that in case of a massive background with $M>0$ for non-superradiant type of initial configurations, as it is expected, considerable part of the incident wave packet gets to be absorbed by the black hole. However, for initial data fine tuned to generate a totally superradiant configuration---contrary to the generic expectations---no energy extraction from the black hole was observed. Instead a nearly total reflection of the wave packet occurs, as it is shown on the bottom right panel of Figure~\ref{completereflection}. Notice the similarities characterizing the evolution of the to be superradiant initial data in the Kerr case (lower right panel) and the evolution of the scalar field on simple Minkowski background (top left panel) with no black hole in the setup. 

\begin{figure}[!ht]
\begin{center}
\includegraphics{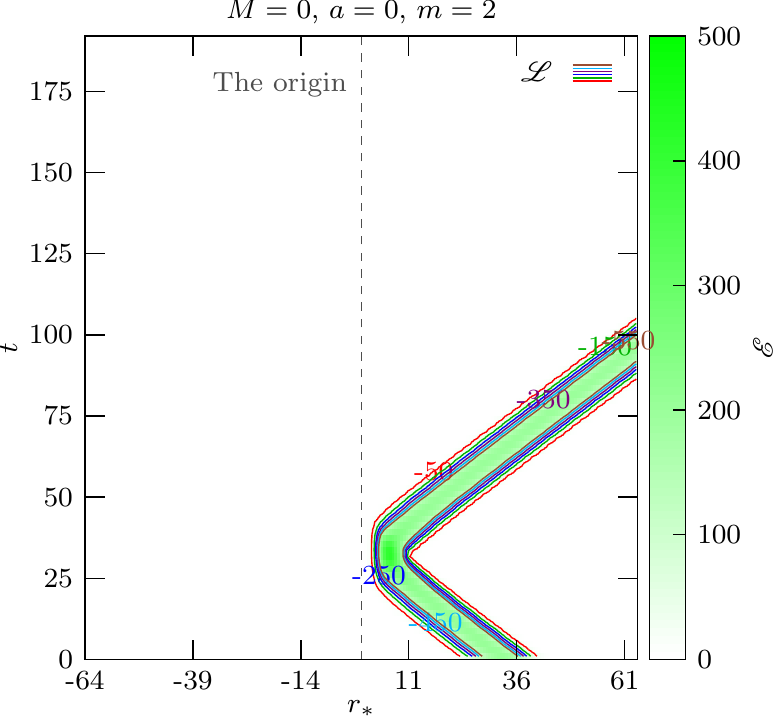}\includegraphics{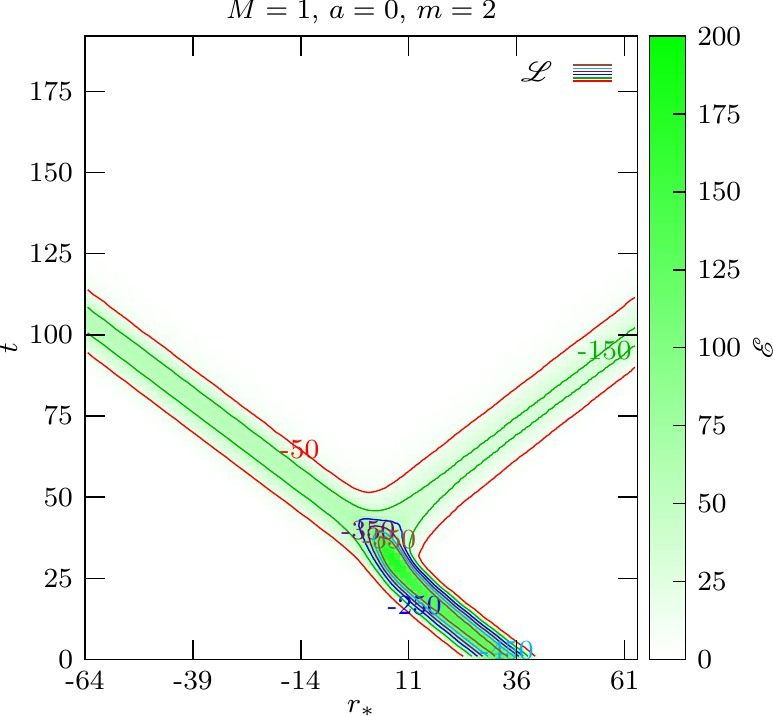}
\includegraphics{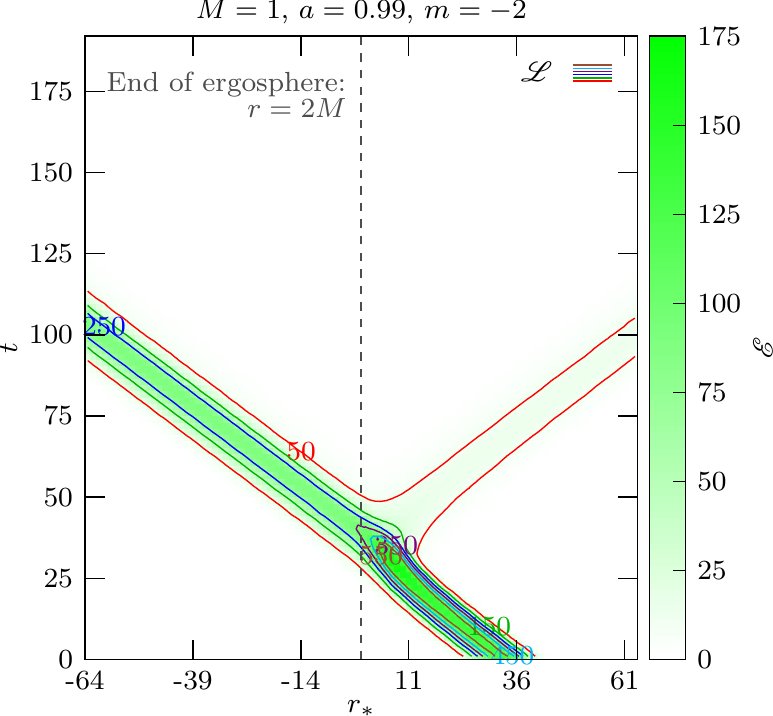}\includegraphics{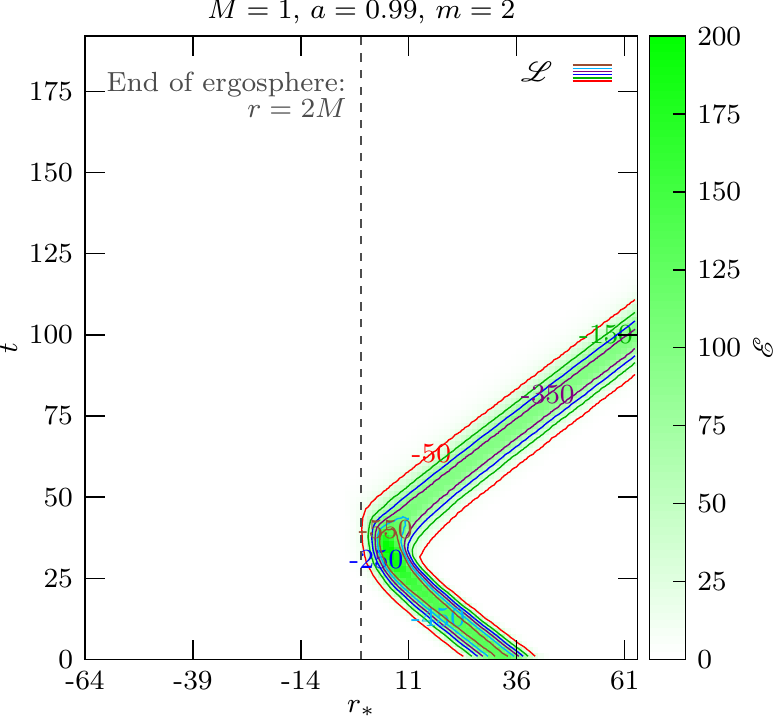}
\caption{\label{completereflection}(Color online) The time dependence of the radial coordinate density of energy and angular momentum, $\mycal{E}$ and $\mycal{L}$ (for their definition see the main text), is shown for Minkowski, Schwarzschild and Kerr background spacetimes. The evolution starts with co-rotating or counter rotating initially field configurations. As before the co-rotating initial data was fine tuned to be maximally superradiant in the Kerr case.}
\end{center}
\end{figure}

In trying to figure out the significance of this nearly perfect reflection recall now that to be able to produce extra energy in the process of superradiance substantial part of the radiation---after submerging into the ergoregion---has to descent towards the event horizon. The lack of energy extraction 
from the black hole can now be understood as the reflection happened before 
the radiation could have reached the ergoregion. 

As our result concerning superradiance is on contrary to the conventional 
expectations it is important to emphasize that the observed phenomenon of 
this nearly perfect reflection, for the part of the wave packet belonging to the 
superradiant regime, was found to be robust with respect to the variation of the 
parameters of the background spacetime and that of the initial data. 
Note also that as our pertinent results appear to be 
inconsistent with the claims \cite{krivan, laguna} it is important to clear up the 
reason beyond these controversial conclusions. In doing so start by 
recalling that 
there is a significant difference between the type of initial data applied in 
\cite{krivan, laguna} and in this paper. While the initial configuration we 
applied is of compact support in \cite{krivan, laguna} the initial data was 
arranged to have non-trivial values everywhere in the ergoregion. 
Moreover, it is claimed in \cite{krivan,laguna} that energy extraction from the black hole does occur. However, in 
our checks---applying horizon penetrating slices as in \cite{RT} and exactly the same type of initial data as in \cite{krivan, laguna}---the to be superradiant character of the field was lost in an extremely short 
period, more importantly, no energy flux leaving the ergoregion could be observed. 
All of these observations should be completed by emphasizing that 
our conclusions are not at all incompatible with claims in 
\cite{starobinskii,yau,DafermosRodianski}. First of all, although in \cite{yau} compactly 
supported data is applied in deriving analytic estimates concerning superradiance the yielded results 
therein are converted, on page 833, to quantitative estimates based on approximations derived 
by Starobinskii \cite{starobinskii}, in spite of the fact that the 
approximations applied in \cite{starobinskii} are not entirely compatible with 
the use of compactly supported initial data. 
What is even more important is that 
the pertinent conclusion in \cite{yau} provides only 
an upper bound for the gained energy which is about $\sim 1\%$ for the case $\ell=m=2$. 
Note also that our numerical findings are 
consistent with the upper bound, $0.05-0.08\%$ for $\ell=2$ modes, on the scale of the energy extraction which we deduced by making use of a detailed single mode analysis (the results of which will be published elsewhere).

In providing some more convincing evidences let us emphasize 
first that the conventional arguments of Misner and Zel'dovich supporting 
the existence of energy extraction are based on the use of individual modes. 
Note, however, that the study of the linear stability problem for Kerr 
spacetimes \cite{DafermosRodianski}, with the application of finite 
energy wave packets, taught us the lesson that statements at the level of individual 
modes need not imply statements for the superposition of infinitely many modes.

Figure~\ref{superreflection} is to provide an additional justification of our main 
result. On this figure the time dependence of the energy extraction coefficient, 
$\frac{{\mycal E}_{0}-{\mycal E}_{\mathrm{out}}}{{\mycal E}_{0}}$ is shown, where 
${\mycal E}_{0}$ and ${\mycal E}_{\mathrm{out}}$, respectively, stand for the initial 
value of the energy and for the integrated energy flux through the ball of radius $r_*=63$ located 
at the outer boundary of the computational domain. As it is visible the graph of
$\frac{{\mycal E}_{0}-{\mycal E}_{\mathrm{out}}}{{\mycal E}_{0}}$ starts at the 
value one and tend to zero from above. Note that the slowly decreasing part, with 
$150 \leq t \leq 1000$, represents only the beginning of a long lasting quasi-normal 
ringing of the scalar field on the black hole background. The energy stored in these ringing 
modes will eternally be also radiated to infinity (see, e.g., \cite{RT}). Since 
$\frac{{\mycal E}_{0}-{\mycal E}_{\mathrm{out}}}{{\mycal E}_{0}}$ does not change 
sign the energy radiated to infinity remain always smaller than ${\mycal E}_{0}$, 
which justify our conclusion that no energy extraction had happened.

Clearly, one could claim that any numerical method has its own limitation which is 
true also in the present case. Nevertheless, in virtue of Figure~\ref{conservation} 
the accuracy of our numerical scheme allows us to put sharp upper bound on energy 
extraction which is $\sim 10^{-3}-10^{-4}$ times ${\mycal E}_{0}$ that is significantly smaller than 
the $\sim1\%$ of ${\mycal E}_{0}$ derived by analyzing individual modes \cite{teukolsky,pressteukolsky,starobinskii}.

\begin{figure}[!ht]
\begin{center}
\includegraphics{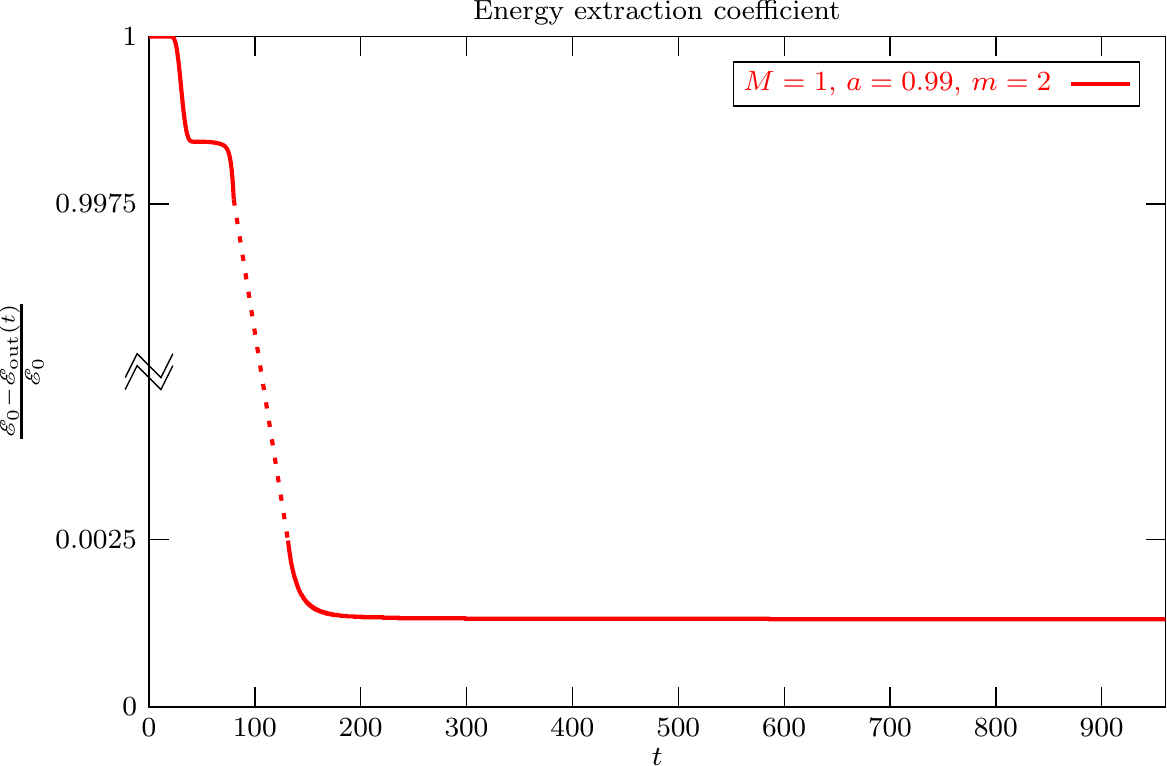}
\caption{\label{superreflection} (Color online) The time dependence of the energy 
extraction coefficient, $\frac{{\mycal E}_{0}-{\mycal E}_{\mathrm{out}}}{{\mycal E}_{0}}$ 
is shown for quadrupole type initial data, where ${\mycal E}_{0}$ and ${\mycal E}_{\mathrm{out}}$, 
respectively, stand for the initial value and for the integrated energy flux through the 
sphere of radius $r_*=63$ located outward with respect to the support of the initial data.}
\end{center}
\end{figure}

Let us finally mention that the power spectrum in temporal frequency of the 
solution provides some new insight what happens whenever the to be superradiant wave 
packet approaches the ergosurface. On Figure~\ref{spectrum2d} the $r_{*}$ dependence 
of the power spectrum in temporal frequency of the solution, which had 
been averaged for the angular degrees of freedom, is shown. The initial data is exactly 
the same quadrupole type with $\ell=m=2$ as used to generate the solution depicted on
Figure~\ref{completereflection}. (Note that to determine the proper Fourier transform of the 
solution for all the indicated values of $r_{*}$ it was necessary to evolve the initial 
data both forward and backward in time.) It is clearly visible that the solution stays in the 
superradiant regime not only in the distant region but up to the ergoregion. In addition,
it is also important that the frequency of the involved modes grows up to reaching 
the value $\omega=m\,\Omega_H$ where, in virtue of the relation 
$(\omega- m \Omega_{H})\,\left\vert\mathcal{T}\right\vert^{2} =(1-\left\vert\mathcal{R}\right\vert^{2})\,\omega$, 
inevitably a total reflection has to occur.

\begin{figure}[!ht]
\begin{center}
\includegraphics{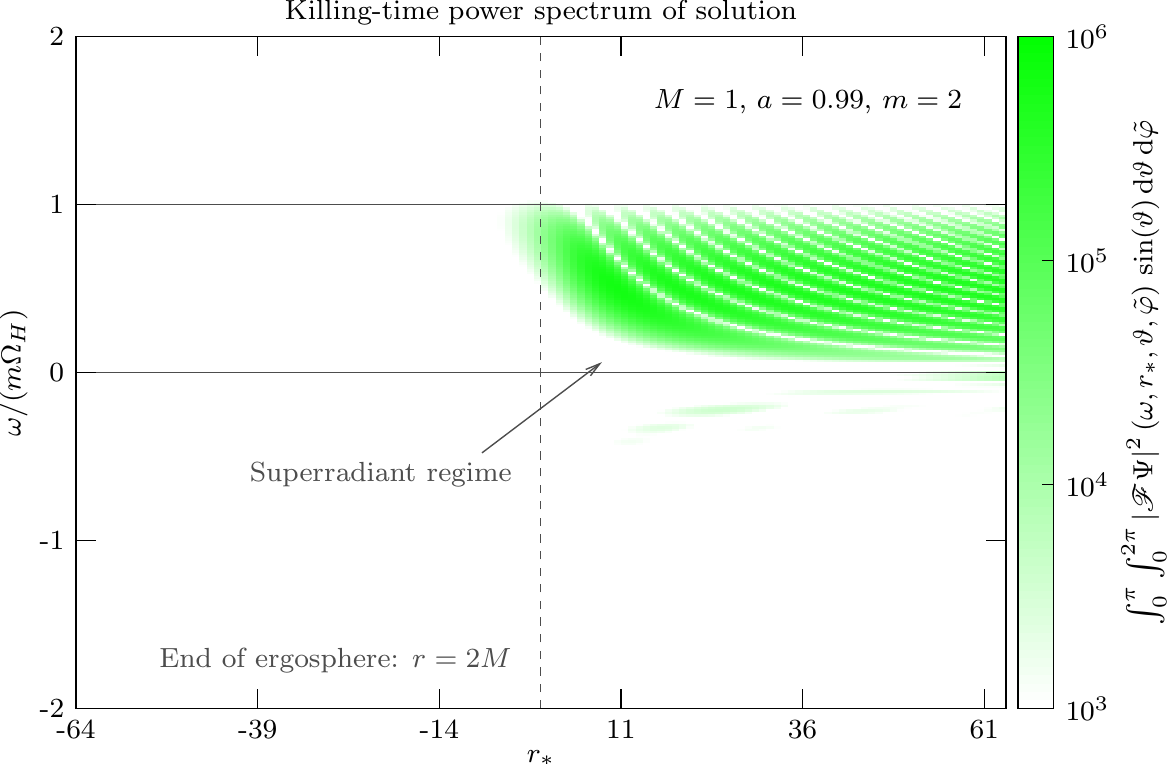}
\caption{\label{spectrum2d} (Color online) The 
$r_{*}$ dependence of the power 
spectrum in temporal frequency of the solution, which had been averaged for the angular 
degrees of freedom, is shown. The applied initial data is exactly the same quadrupole 
type with $\ell=m=2$ as used to generate the solution depicted on Figure~\ref{completereflection}.
The solution stay in the superradiant regime for the entire evolution while the frequency of the 
involved modes grows up to reaching the value $\omega=m\,\Omega_H$ where total reflection has to occur.}
\end{center}
\end{figure}

\medskip

Let us close this Section by commenting the non-negligible reflection visible on the top right panel of Figure~\ref{completereflection} depicting the evolution of the massless scalar field on a Schwarzschild background. One might be surprised by this reflection as intuitively it is tempting to assume that the Schwarzschild black hole would be ready to absorb almost the entire of the incident wave packet. Recall, however, that the angular momentum balance relation does not support the occurrence of such an overwhelming absorption. In addition it is also worth to have a look at the equation on a Schwarzschild background governing the evolution of a pure $Y{}_{\ell}^{m}$ mode. In fact, the $t-r_*$ part of the wave equation for the coefficient $\Psi^m_\ell=\Psi^m_\ell(t,r_{*})$ reads as
\begin{equation}
\label{schwarzschildwave}
\left(\partial^2_{t}-\partial^2_{r_{*}}\right)\Psi_{\ell}^{m} + V_{\ell}(r)\,\Psi_{\ell}^{m}= 0
\end{equation}
with the potential  
$V_{\ell}(r)=\left(1-\frac{2M}{r}\right)\left(\frac{\ell\,(\ell+1)}{r^2}+\frac{2M}{r^3}\right)$. The repulsion---responsible for the reflection of the inward falling radiation and, in turn, leading to the celebrated power decay law of Price \cite{price1,price2}---is transparent on Figure~\ref{potential} depicting the potential $V_{\ell}(r)$ for the $\ell=0,1,2$ cases. 
What is really important here is that the maximum value of the potential $V^{\mathrm{max}}_{\ell=2}\approx 0.24$ is significantly larger than $\omega^2_0\approx0.096$ which, in virtue of the types of arguments contained e.g.\ by Section II of Chapter III in \cite{messiah}, can be used to justify the observed scale of reflection. 

\begin{figure}[!ht]
\begin{center}
\includegraphics{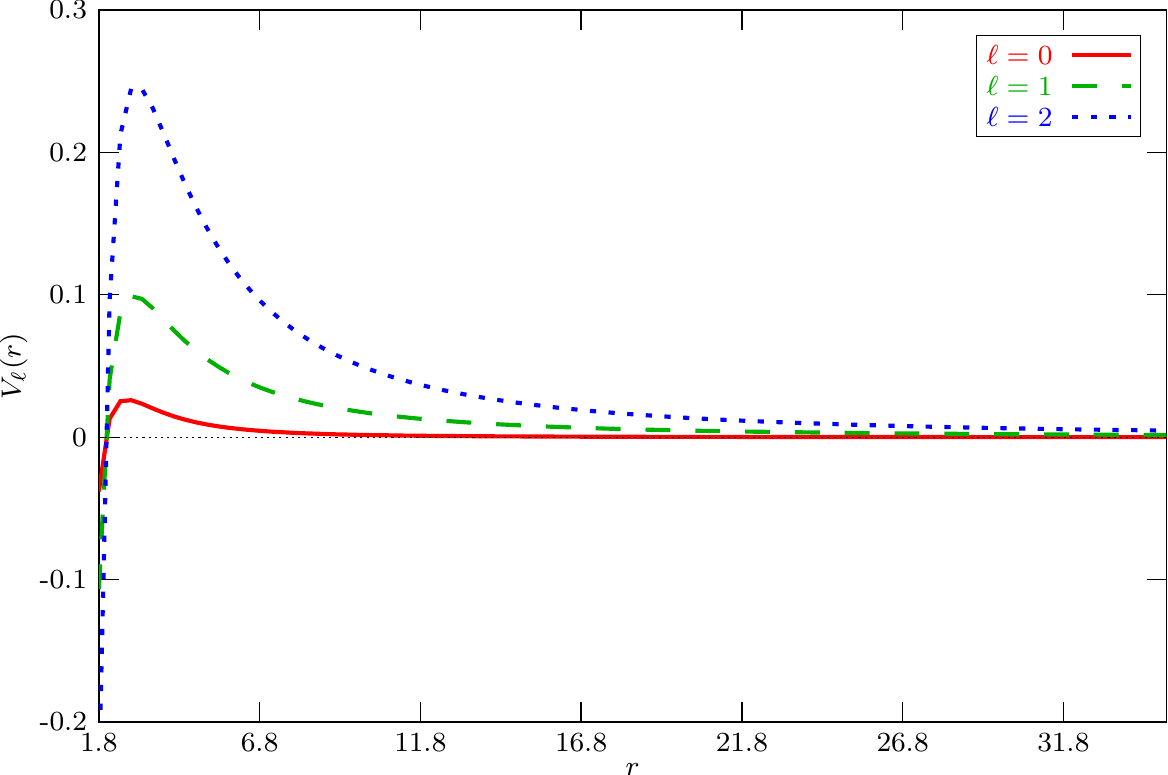}
\caption{\label{potential} (Color online) 
The potential $V_{\ell}(r)$ involved in 
the wave equation over Schwarzschild spacetime 
is plotted for values $\ell=0,1,2$. 
The potentials have their common zero value at the horizon, 
$r=2M$, while they attain their maximum close to but on the domain of outer 
communication side of the event horizon. 
The occurrence of the partial reflection on a Schwarzschild black hole 
background can be understood by taking 
into account the repulsing character of this potential, along 
with the fact that the maximum value of the potential $V^{\mathrm{max}}_{\ell=2}\approx 0.24$ 
is significantly larger than $\omega^2_0\approx0.096$.
}
\end{center}
\end{figure}

\section{Summary}
\label{summary}
\renewcommand{\theequation}{\ref{summary}.\arabic{equation}}
\setcounter{equation}{0}
\renewcommand{\thefigure}{\ref{summary}.\arabic{figure}}
\setcounter{figure}{0}
\renewcommand{\thetable}{\ref{summary}.\arabic{table}}
\setcounter{table}{0}

Our main concern in writing up this paper were at least two folded. On the one hand, we intended to introduce the generic setup of a method that is expected  to provide a powerful new tool in studying the problem of time evolution of non-linear dynamical systems in four-dimensional spacetimes. On the other hand, we applied the introduced new method to study the evolution of a specific dynamical system. More precisely, the evolution of a massless scalar field on a fixed Kerr spacetime was investigated such that distinguished attention was paid to the angular distribution of the evolving field and to the occurrence of superradiance. 

In spite of the fact that the mathematical background of the introduced new method---which is mainly contained by the appendices---makes it to be applicable to dynamical systems the time level surfaces of which can be foliated by a one-parameter family of codimension two surfaces which are conformal to a compact Riemannian manifold $\mycal{C}$ without boundary, in most of the cases with time level surfaces possessing the topology of $\mathbb{R}^3$ or  $\mathbb{R}\times\mathbb{S}^2$ it suffices to assume that $\mycal{C}$ is topological two-sphere $\mathbb{S}^2$. 

One of the main advantages of the new method is that it is fully spectral---not pseudospectral---in the angular directions. Thereby, whenever the basic variables are guaranteed to be at least of class $C^2$ in the angular directions, the spectral components---even though non-linear expressions of the basic variables are involved---can be evolved without their steady pointwise evaluations. Accordingly, the angular degrees of freedom---directions tangential to $\mycal{C}$---are treated by applying $L^2$ expansions of the basic variables in terms of the eigenfunctions of the Laplace operator on $\mycal{C}$. The corresponding expansion coefficients of a basic variable are evolved in the transverse 1+1 dimensional spacetime directions by making use of the method of lines based on a fourth order finite difference numerical scheme such that the adaptive mesh refinement (AMR) is also incorporated. 
 
The main advantages associated with the use of the proposed new method are:
\begin{itemize}
\item the coordinate singularities associated with the angular differential operators are treated in a fully analytic way

\item all the non-linear operations such as multiplication of the basic variables or the division of an expression by a nowhere vanishing variable---the latter can traced back to multiplication with the help of the Neumann series expansion---can be treated within the spectral representation without steady pointwise evaluation of these expressions

\item a very effective treatment of an origin of the time level surfaces can also be done by making use of fully analytic considerations. 
\end{itemize}
On some of the limitations of the proposed numerical method:
\begin{itemize}
\item The angular resolution, determined by the prefixed global value of 
$\ell_{\mathrm{max}}$, is not adaptive in the present implementation of our PDE solver.
This may lead to the loss of accuracy in case of dynamical systems developing 
highly variable angular dependencies. This drawback, however, could be cured, if necessary, 
by endowing $\ell_{\mathrm{max}}$ with a modest `$t-r$' dependence during the evolution, 
which could be controlled by a predefined tolerance in the measure of the tail sum error 
(see \ref{tailsumerrorbounds} for its detailed description).
\item It should also be mentioned here that in case of dynamical systems 
with turbulent instabilities, i.e.\ for configurations which, at certain locations, require 
a simultaneous and substantial increase of both radial and angular resolutions, our method 
(such as many others) may turn to be highly suboptimal as even an adequate local splitting 
$\mathbb{R}^{2}\times \mycal{C}$ may not be found to such a system.
\end{itemize}

A systematic self contained presentation of the mathematical background of the applied results and the implemented elements of the spectral method can be found in the appendices.  

\medskip

The introduced new numerical method is used to study the time evolution of a massless Klein-Gordon field on a fixed Kerr black hole spacetime. We would like to emphasize that this dynamical system is already complex enough to explore the main technical elements of the proposed numerical framework. In particular, two of the most important ones, i.e.~the evaluation of various multilinear expressions and the division by a function based on the use of spectral method were applied in carrying out all of our simulations. 

As all the multipole expansion series are truncated at certain finite order{, 
$\ell_{\mathrm{max}}$,} the proposed numerical method is perturbative. By studying the error introduced by the involved approximations it was verified that the error can be kept at a tolerable low level by applying sufficiently many members of the multipole expansions. In addition, a suitable notion of convergence was also introduced, based on d'Alembert's criterion guaranteeing the summability of sequences. It was justified that the convergence is exponential in the value of $\ell_{\mathrm{max}}$. 

\medskip

In studying the time evolution of a massless Klein-Gordon field on a fixed Kerr black hole distinguished attention was paid to the precise characterization of the angular dependence of the outgoing radiation, as well as, to the development of superradiance. Our main related results are as follows.
\begin{itemize}
\item There are attempts (see, e.g.~\cite{williams, gariel}) aiming to provide a simple and viable physical explanations of high energy collimated matter streams originating from compact astrophysical objects by applying Penrose process, or  superradiance. On contrary to the underlying speculations we have found that the outgoing radiation has no preference at all of the axial directions regardless whether the initial configuration was to be superradiant or a more generic type. 

\item In studying superradiance---in the particular case of a massless scalar 
field---we intended to investigate its formation by using an incident wave 
packet, which was fine tuned to have the 
largest possible part of its frequency content within the 
superradiant regime, from the outer region of the domain of 
outer communication. On contrary to the general expectations, we found that 
instead of the expected scale of energy extraction from the black hole the 
incident wave packet failed to reach the ergoregion. Thereby, instead of 
superradiance, as an interesting phenomenon a nearly total 
reflection occurs before the ``to be'' superradiant part of the incident wave 
packet reaches the ergoregion. A complete characterization of this 
phenomenon definitely exceeds the scopes of the present paper. We plan to 
carry out its further investigations, involving possibly more generic type 
of initial data and, more importantly, a mixture of analytic and numerical 
techniques.
\end{itemize}

Concerning the required computational resources it is also important to be emphasized that various implementations of pseudospectral methods even in case of the study of time evolution of a massless Klein-Gordon field on a fixed Kerr black hole require powerful computers and parallel computing. As opposed to these the proposed new method was found to be very effective in the sense that time evolutions of the very same dynamical systems could be done within reasonable computational time on a stand-alone average personal computer without making use of parallel computing. We would also like to mention that the GPU implementation of the proposed method, which is under development, promises a further significant boost in the optimal use of the computational resources which will be inevitable in studying the evolution of more complicated non-linear dynamical systems. 

Finally, we would like to emphasize that our code GridRipper{, together} with the implementation of the of time evolution of a massless Klein-Gordon field on a fixed Kerr background {and other examples,} is {made} available for public use and it can be downloaded from \cite{gridripper3}.

\ack
We thank G.~Z.~T\'oth for useful discussions and the organizers of the workshop(s) on ``Quantitative Studies of Nonlinear Wave Phenomena'' held at Ervin Schr\"odinger Institute---where parts of the reported results were developed---for their invitation and hospitality.
The authors should also like to thank to projects maintaining the HunGrid VO of the LHC Computing Grid and a standalone cluster, both located at the computer center of {Wigner RCP}, for the access of their computational resources.
This research was supported in part by OTKA grant K67942.

\appendix

\section{Theory of Multipole Expansion of Multilinear Expressions}
\label{sobolevembedding}

Let us start by recalling some of the basic notions and 
notations{ that} 
we shall apply throughout the  succeeding appendices. By a Riemannian
manifold  $M$ we shall always mean an $n$-dimensional paracompact 
manifold 
{(possibly with boundary)} of differentiability class $C^r$ endowed with
Riemannian metric $g${,} which will  usually be suppressed. We shall denote by
$L^{2}(M,\C)$ the   Hilbert space of square-integrable complex valued
functions\footnote{Here in a precise formulation instead of `functions'
function equivalence  classes should be used, where two Lebesgue measurable
functions are considered to be equivalent if they  are almost everywhere
equal. Referring to these function equivalence  classes as simply
`functions' is a common practice in functional analysis which  we also
shall use.}  on a Riemannian manifold $M$, and by $C^{l}(M,\C)$
the vector space of  $l$-times continuously differentiable complex
valued functions  on $M$, where $0\leq l\leq r$. Similarly, $C^{l}_{b}(M,\C)$
stands for the Banach space of bounded $C^{l}(M,\C)$ functions on $M$ equipped
with the $C^{l}$ supremum norm, while $C^{l}_{\infty}(M,\C)$  denotes the
Banach subspace of $C^{l}_{b}(M,\C)$ functions on $M$ which have zero limit
at infinity.\footnote{A 
$C^{l}_{b}(M,\C)$ function is said to have zero 
limit at the infinity, if for any monotonously growing sequence of compact 
sets covering $M$, the $C^{l}$ supremum norm of $f$ over the complement 
of the compact sets tends to zero.}

\smallskip

Consider now the vector space of $r$-times weakly differentiable complex
valued functions over $M$. The Sobolev norm of a function $f$ belonging to
this linear space is defined as
\begin{equation}
\label{sobolevnorm}
\Vert f\Vert_{H_{r}^{2}(M,\C)}=\sqrt{\sum_{l=0}^{r}
  \int_{M}\vert\nabla^{(l)}f\vert^{2}}\,,
\end{equation}
were $\vert\cdot\vert$ denotes the pointwise norm generated by the Riemann
metric $g$, $\nabla$ denotes the Levi-Civita covariant derivation determined
by $g$, while the associated volume form is suppressed in the applied
notation. It is important to keep in mind that the highest, 
{%
$r^{\mathrm{th}}$}-order,
derivation in the above formula is required to be defined in the weak sense.
The subset of the vector space of $r$-times weakly differentiable complex
valued functions over $M$ comprised by elements whose Sobolev norm
$\Vert\cdot\Vert_{H_{r}^{2}(M,\C)}$ is finite is called 
the Sobolev space
and it is  denote{d} by $H_{r}^{2}(M,\C)$. The Sobolev space together with its
norm forms a Hilbert space{,} since its norm is generated by an inner  product,
and also it is complete with respect to this norm.

Let $n$, $r$ and $k$ be non-negative integers, such that $r>\frac{n}{2}+k$.
The classical result of Sobolev embedding  theorem (see e.g.\ \cite{sobolev})
asserts then that the relation
\begin{equation}
\label{sobolevinclusion}
H_{r}^{2}(\R^{n},\C)\subset C^{k}_{\infty}(\R^{n},\C)
\end{equation}
holds, and also that there exists a positive real constant $C$ such that the  
inequality
\begin{equation}
\label{sobolevinequality}
\Vert\cdot\Vert_{C^{k}_{\infty}(\R^{n},\C)}\leq C\,\Vert\cdot\Vert_{H_{r}^{2}(\R^{n},\C)}
\end{equation}
is satisfied. These assertions are also known to hold \cite{sobolev} if
$\R^{n}$ is  replaced by a compact subset in $\R^{n}$. 
In the next  part of this appendix, our aim is to provide a simple and  self-contained
justification of the fact that the  Sobolev embedding theorem may also be
applied in case of compact Riemann manifolds.\footnote{Note, however, that there are non-compact Riemann
manifolds such that the Sobolev embedding theorem does not apply to them. }
For yet another alternative reasoning see, e.g., Ref.\,\cite{laplacecomp}.

\smallskip

In doing so consider first a finite dimensional real vector bundle, $W(M)$, of
differentiability class $C^r$ over a manifold  $M$ that belongs to
the same differentiability class. Then, a pointwise mapping of
$C^{r}$-sections of $W(M)$ onto  $C^{0}$-sections of $M\times\R$ will be
called to be a $C^{r}$-\emph{norm field} if its pointwise restrictions to the
fibers give rise to norms.\footnote{A more adequate way of formulating this definition is that 
a $W(M)\rightarrow M\times\R$ $\;C^{0}$ fiber bundle homomorphism is a $C^{r}$ 
\emph{norm field} if its restrictions to the fibers are norms.}
Indeed, every finite dimensional
$C^{r}$-vector bundle over a paracompact manifold $M$ admits $C^{r}$-norm
fields. To see this recall that a norm field can always be defined locally
over a coordinate chart,  e.g.{,}  by taking in every point the natural
Euclidean norm defined by a trivialization. These locally defined norm fields
may{,} then{,} be sewn together by making use of a partition of unity subordinate
to a locally finite collection of coordinate charts on $M$. The norm fields
over a  vector bundle are equivalent as it is justified by the following
lemma.

\begin{Lem}
\label{equivalence}
If $\vert\cdot\vert$ and $\vert\cdot\vert^{'}$ are $C^{r}$  norm fields over
$W(M)$, then there exists a positive real $C^{r}$ field $C$, such that
$\vert\cdot\vert^{'}\leq C\vert\cdot\vert$.
\end{Lem}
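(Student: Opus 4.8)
The plan is to construct the field $C$ in three stages: first fix it fibrewise as the optimal comparison ratio, then verify that this pointwise-optimal function is continuous, and finally dominate it by a genuinely $C^{r}$ field. Fibrewise the statement is nothing but the equivalence of norms on a finite-dimensional vector space. For each $p\in M$ the fibre $W_{p}$ is finite dimensional, so its unit sphere $S_{p}=\{v\in W_{p}:\vert v\vert=1\}$ is compact; since the restriction of $\vert\cdot\vert^{'}$ to $W_{p}$ is a norm, hence continuous, it attains a finite positive maximum on $S_{p}$. Setting
\[
c(p)=\sup_{\vert v\vert=1}\vert v\vert^{'}=\sup_{v\neq 0}\frac{\vert v\vert^{'}}{\vert v\vert}
\]
and invoking the degree-one homogeneity of both norms gives $\vert v\vert^{'}\le c(p)\,\vert v\vert$ for every $v\in W_{p}$, with $0<c(p)<\infty$. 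Thus the pointwise optimal comparison function $c$ already satisfies the required inequality, and the only remaining issue is its regularity.

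Second, I would show that $c$ is continuous. This is a local question, so I work in a $C^{r}$ trivialization $W|_{U}\cong U\times\R^{k}$. There both norm fields become functions $N(p,v)$ and $N^{'}(p,v)$ that are $C^{r}$---in particular continuous---jointly in $(p,v)$ and homogeneous of degree one in $v$. Because the ratio $N^{'}/N$ is homogeneous of degree zero in $v$, the supremum may be restricted to the fixed Euclidean unit sphere $\mathbb{S}^{k-1}\subset\R^{k}$, which is compact and independent of $p$, yielding
\[
c(p)=\max_{\Vert v\Vert=1}\frac{N^{'}(p,v)}{N(p,v)}.
\]
The map $(p,v)\mapsto N^{'}(p,v)/N(p,v)$ is continuous on $U\times\mathbb{S}^{k-1}$, and a parametrised maximum of a jointly continuous function over a fixed compact set is continuous; hence $c$ is continuous on each such $U$, and therefore on all of $M$.

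Third---and this is the real point---I would pass from the continuous function $c$ to a $C^{r}$ field dominating it. One cannot simply take $c$ itself, since a pointwise supremum of $C^{r}$ functions is in general only continuous, not $C^{r}$; this is the main obstacle, and it is circumvented by a partition-of-unity smoothing rather than by any attempt to differentiate $c$. Using paracompactness, choose a locally finite cover $\{V_{i}\}$ of $M$ by relatively compact open sets together with a subordinate $C^{r}$ partition of unity $\{\chi_{i}\}$ with $\sum_{i}\chi_{i}\equiv 1$. On each compact $\overline{V_{i}}$ the continuous function $c$ is bounded; put $b_{i}=1+\sup_{\overline{V_{i}}}c<\infty$ and define
\[
C=\sum_{i} b_{i}\,\chi_{i}.
\]
This sum is locally finite, so $C$ is $C^{r}$, and it is strictly positive because each $b_{i}$ is positive and the $\chi_{i}$ sum to one. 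Finally, for any $p$ every index contributing to $C(p)$ satisfies $p\in\supp\chi_{i}\subset V_{i}$, whence $c(p)\le b_{i}$; therefore
\[
C(p)=\sum_{i} b_{i}\,\chi_{i}(p)\ge c(p)\sum_{i}\chi_{i}(p)=c(p),
\]
so that $\vert v\vert^{'}\le c(p)\,\vert v\vert\le C(p)\,\vert v\vert$ for all $v\in W_{p}$, which is precisely the asserted inequality $\vert\cdot\vert^{'}\le C\vert\cdot\vert$.
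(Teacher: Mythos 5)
Your proof is correct and follows essentially the same route as the paper: fibrewise equivalence of norms, a uniform bound over each relatively compact chart obtained from compactness of the unit sphere (the paper takes the supremum over $\overline{U_{i}}\times\mathbb{S}^{N-1}$ directly, where you instead first prove continuity of the optimal ratio $c$), and finally the gluing $C=\sum_{i}b_{i}\chi_{i}$ via a subordinate $C^{r}$ partition of unity, using $\sum_{i}\chi_{i}=1$ exactly as in the paper. The explicit continuity argument for $c$ is a harmless extra step; no gaps.
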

\begin{proof}
The proof is based on the paracompactness of $M$ and on the equivalence of
norms on a finite dimensional vector space.

To start of consider a locally finite atlas $\mycal{A}=\{(U_{i},\varphi_{i}) \,\vert\, i \in I\}$ of $M$ with partition of unity 
{%
$\{\mathcal{F}_{i} \vert i\in I\}$}
such that each
$U_{i}$ has compact closure, denoted  by $\overline{U_{i}}$, in $M$. Assume
that the dimension of the fibers of $W(M)$ is $N$.  Let us fix a
trivialization $\left(e_{i,j} \,\vert\, j \in \{1,\dots,N\} \right)$ of $W(M)$
over each  particular chart $(U_{i},\varphi_{i})\in\mycal{A}$.

As a consequence of the equivalence of norms on a finite dimensional vector
space, for any $p \in M$ there exists a positive number $c_{p}$, such that
$\vert\cdot\vert^{'}_{p} \leq c_{p} \vert\cdot\vert_{p}$.  Furthermore,
$c_{p}$ may be chosen to be  $\sup\limits_{s_{p} \in
  W_{p}(M)\setminus\{0_{p}\}} \frac{\vert s_{p} \vert^{'}_{p}}{\vert s_{p}
  \vert_{p}}$.

By making use of the trivialization $\left(e_{i,j}\right)$ of $W(M)$ over
$(U_{i},\varphi_{i})$ it can be verified immediately that 
\begin{eqnarray}
\label{equivalenceequation}
\sup_{p \in \overline{U_{i}}} \left(\sup_{s_{p} \in W_{p}(M)\setminus\{0_{p}\}}
\frac{\vert s_{p} \vert^{'}_{p}}{\vert s_{p} \vert_{p}}\right)
=\sup_{p \in \overline{U_{i}}} \left(\sup_{S \in \mathbb{R}^{N}, |S|=1} \frac{\vert
  \sum\limits_{j=1}^{N}S_{j}e_{i,j} \vert^{'}(p)}{\vert
  \sum\limits_{j=1}^{N}S_{j}e_{i,j} \vert(p)}\right)
\end{eqnarray}
holds for any choice of $(U_{i},\varphi_{i})\in\mycal{A}$.  The right hand
side of (\ref{equivalenceequation})  is a finite positive number, because
it is nothing but  the maximum of a positive valued continuous function over
the compact  manifold $\overline{U_{i}}\times\mathbb{S}^{N-1}$, where
$\mathbb{S}^{N-1}$ denotes the $N-1$ dimensional unit sphere.  Let us denote
this positive number by $c_{i}$.  Then,
\begin{equation}
\label{equivalenceequation2}
\vert\cdot\vert^{'}\leq c_{i}\vert\cdot\vert
\end{equation}
holds over $\overline{U_{i}}$.

As an immediate consequence of (\ref{equivalenceequation2}) we have that
$\mathcal{F}_{i}\vert\cdot\vert^{'}\leq c_{i}\mathcal{F}_{i}\vert\cdot\vert$
holds throughout $M$, in accordance with the fact that $\mathcal{F}_{i}$ is
non-negative and $\supp(\mathcal{F}_{i})\subset U_{i}$.  Note{,} then, that the
sum $\sum_{i \in I}c_{i}\mathcal{F}_{i}$ is a positive valued $C^{r}$
function---which, as a consequence of the local finiteness of $\mycal{A}$, has
only finite non-zero terms in a sufficiently small neighborhood of any point
in $M$---, and that by definition $\sum_{i \in
  I}\mathcal{F}_{i}=1$. These, along with the above observations, implies then
that $\vert\cdot\vert^{'}\leq \left(\sum_{i \in
  I}c_{i}\mathcal{F}_{i}\right)\vert\cdot\vert$ holds, which justifies the
assertion of the lemma.
\end{proof}

We shall also apply the following two lemmas in verifying that the Sobolev
embedding theorem may also be applied in case of compact Riemann manifolds.

\begin{Lem}
\label{equivalence2}
Let $(\vert\cdot\vert_{l})_{l\in\{0,\dots,m\}}$ be norm fields, and $\nabla$,
$\nabla^{'}$ be   two $C^{r}$-covariant derivative operators. Then, there exists a
positive $C^{r}$-function $C$ over $M$ such that 
\begin{equation}
\label{equivalence2equation}
\sum_{l=0}^{r}\vert{\nabla^{\; '}}^{(l)}\cdot\vert_{l}\leq
C\sum\limits_{l=0}^{r}\vert\nabla^{(l)}\cdot\vert_{l}. 
\end{equation}
\end{Lem}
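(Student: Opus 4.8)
The plan is to exploit the fact that the difference $A:={\nabla^{\;'}}-\nabla$ of two covariant derivative operators is tensorial, and to expand every iterated derivative ${\nabla^{\;'}}^{(l)}$ as a fixed, argument-independent combination of the $\nabla^{(j)}$ with $j\le l$. Once such an expansion is available, the inequality (\ref{equivalence2equation}) reduces, order by order, to pointwise bounds for contractions with continuous tensor fields, and the passage from a continuous positive bounding function to a $C^{r}$ one is achieved exactly as in the proof of Lemma~\ref{equivalence}.

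Concretely, I would first record that $A={\nabla^{\;'}}-\nabla$ is a continuous tensor field (an endomorphism-valued one-form on the relevant bundles), so that $\nabla^{(i)}A$ is a well-defined continuous tensor field for every $i$ in the admissible range. Then, by induction on $l$, I would establish an operator identity of the schematic form
\begin{equation}
\label{expansionformula}
{\nabla^{\;'}}^{(l)}=\sum_{j=0}^{l}P_{l,j}\bullet\nabla^{(j)}\,,
\end{equation}
where $\bullet$ denotes the natural tensor contraction, $P_{l,l}=\id$, and each $P_{l,j}$ is a universal polynomial expression in $A$ and its iterated $\nabla$-derivatives, hence a continuous tensor field independent of the argument. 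The base cases $l=0,1$ are immediate, and the inductive step follows by writing ${\nabla^{\;'}}={\nabla}+A$ and applying the Leibniz rule for $\nabla$, which produces from $P_{l-1,j}\bullet\nabla^{(j)}$ the three contributions $(\nabla P_{l-1,j})\bullet\nabla^{(j)}$, $P_{l-1,j}\bullet\nabla^{(j+1)}$ and $A\bullet P_{l-1,j}\bullet\nabla^{(j)}$; collecting terms of equal order yields the recursion for the $P_{l,j}$ and preserves their tensorial, argument-independent character.

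Next I would turn (\ref{expansionformula}) into pointwise estimates. For each fixed $l$ and $j\le l$, contraction with the tensor $P_{l,j}(p)$ is a linear map between the finite-dimensional fibres on which $\nabla^{(j)}$ and ${\nabla^{\;'}}^{(l)}$ take values; measuring its source with the norm field $\vert\cdot\vert_{j}$ and its target with $\vert\cdot\vert_{l}$, its operator norm defines a function $\kappa_{l,j}(p)$. That $\kappa_{l,j}$ is continuous follows from the very same compactness argument used in (\ref{equivalenceequation}): the operator norm is the maximum of a continuous function over a product of the base with a unit sphere in a fibre, and it depends continuously on $p$ because $P_{l,j}$ and the norm fields do. The triangle inequality applied to (\ref{expansionformula}) then gives $\vert{\nabla^{\;'}}^{(l)}\cdot\vert_{l}\le\sum_{j=0}^{l}\kappa_{l,j}\,\vert\nabla^{(j)}\cdot\vert_{j}$ pointwise, and summing over $l$ and interchanging the order of summation yields
\begin{equation}
\label{presummation}
\sum_{l=0}^{r}\vert{\nabla^{\;'}}^{(l)}\cdot\vert_{l}\le\sum_{j=0}^{r}K_{j}\,\vert\nabla^{(j)}\cdot\vert_{j}\,,\qquad K_{j}:=\sum_{l=j}^{r}\kappa_{l,j}\,,
\end{equation}
where each $K_{j}$ is continuous and $K_{0}\ge\kappa_{0,0}=1>0$. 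Setting $\widetilde C:=\max_{0\le j\le r}K_{j}$ produces a continuous, strictly positive function with $\sum_{l}\vert{\nabla^{\;'}}^{(l)}\cdot\vert_{l}\le\widetilde C\sum_{j}\vert\nabla^{(j)}\cdot\vert_{j}$.

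Finally, to obtain the asserted $C^{r}$ bounding function I would dominate $\widetilde C$ by a $C^{r}$ one via the partition-of-unity device already employed in Lemma~\ref{equivalence}: choosing the locally finite atlas $\mycal{A}$ with relatively compact charts and subordinate partition of unity $\{\mathcal{F}_{i}\}$, the continuous function $\widetilde C$ is bounded by some constant $c_{i}$ on each $\overline{U_{i}}$, whence $C:=\sum_{i}c_{i}\mathcal{F}_{i}$ is a positive $C^{r}$ function with $\widetilde C\le C$, and (\ref{equivalence2equation}) follows. I expect the genuine work to lie in the bookkeeping of the induction (\ref{expansionformula})---in particular in checking that the coefficient fields $P_{l,j}$, which involve $\nabla$-derivatives of $A$ up to order $l-1$, stay within the differentiability class guaranteed by the hypotheses so that the $\nabla^{(i)}A$ appearing in them are at least continuous---while the analytic content, namely the continuity and positivity of the operator-norm functions $\kappa_{l,j}$ and the final smoothing, is essentially a reprise of Lemma~\ref{equivalence}.
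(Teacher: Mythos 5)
Your proposal is correct and follows essentially the same route as the paper, whose proof is only a five-point sketch (write $\nabla^{'}=\nabla+A$ with $A$ tensorial, apply the triangle inequality, note that a norm composed with a linear map is a semi-norm, and invoke Lemma~\ref{equivalence} / its partition-of-unity device to obtain a positive $C^{r}$ bounding function); you have simply carried out that plan in full detail, including the inductive Leibniz expansion and the operator-norm bookkeeping that the paper leaves implicit.
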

\begin{proof}
The  assertion of this lemma may be justified by combining
the following sequence of simple observations. 
\begin{enumerate}
\item the covariant 
derivation $\nabla^{'}$ can always be expressed as a sum of terms involving
$\nabla$ and the $C^{r-1}$ class Christoffel symbols,
\item the triangle inequality holds for norms,
\item the composition of a norm with a linear map is a semi-norm,
\item the sum of a norm and a semi-norm is a norm,
\item and, finally, by taking into account \emph{Lemma
  \ref{equivalence}}.
\end{enumerate}
\end{proof} 

Now, as a direct consequence of \emph{Lemmas} \ref{equivalence} and
\ref{equivalence2} we have the following.

\begin{Lem}
\label{equivalence3}
Let $(\vert\cdot\vert_{l})_{l\in\{0,\dots,m\}}$ and
$(\vert\cdot\vert^{'}_{l})_{l\in\{0,\dots,m\}}$ be  norm field collections,
furthermore, $\nabla$ and $\nabla^{'}$ be two covariant derivative operators
as above. Then, there exists a positive $C^{r}$ function $C$ on $M$
such that 
\begin{equation}
\label{equivalence3equation}
\sum_{l=0}^{r}\vert{\nabla^{\; '}}^{(l)}\cdot\vert^{\; '}_{l}\leq
C\sum_{l=0}^{r}\vert\nabla^{(l)}\cdot\vert_{l}. 
\end{equation}
\end{Lem}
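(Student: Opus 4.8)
The plan is to obtain \emph{Lemma \ref{equivalence3}} by chaining the two preceding lemmas, replacing one ingredient at a time: first I would swap the norm field collection $(\vert\cdot\vert^{'}_{l})$ for $(\vert\cdot\vert_{l})$ while keeping $\nabla^{'}$ fixed, and then swap $\nabla^{'}$ for $\nabla$ while keeping $(\vert\cdot\vert_{l})$ fixed. The point to keep in mind throughout is that, for each fixed $l$, the iterated derivative ${\nabla^{\;'}}^{(l)}f$ is a section of the same bundle $W(M)\otimes(T^{*}M)^{\otimes l}$ on which both $\vert\cdot\vert_{l}$ and $\vert\cdot\vert^{'}_{l}$ act as $C^{r}$ norm fields, so that \emph{Lemma \ref{equivalence}} applies to each of them.

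First I would invoke \emph{Lemma \ref{equivalence}} separately on each bundle $W(M)\otimes(T^{*}M)^{\otimes l}$, $l\in\{0,\dots,r\}$, applied to the pair $\vert\cdot\vert^{'}_{l}$ and $\vert\cdot\vert_{l}$. This yields, for every $l$, a positive $C^{r}$ function $C_{l}$ with $\vert\cdot\vert^{'}_{l}\leq C_{l}\,\vert\cdot\vert_{l}$. Setting $\widetilde{C}:=\sum_{l=0}^{r}C_{l}$ gives a positive $C^{r}$ function dominating each $C_{l}$ pointwise, whence summing the individual estimates over $l$ produces
\[
\sum_{l=0}^{r}\vert{\nabla^{\;'}}^{(l)}\cdot\vert^{'}_{l}
\leq\sum_{l=0}^{r}C_{l}\,\vert{\nabla^{\;'}}^{(l)}\cdot\vert_{l}
\leq\widetilde{C}\sum_{l=0}^{r}\vert{\nabla^{\;'}}^{(l)}\cdot\vert_{l}.
\]
I would deliberately form $\widetilde{C}$ as a sum rather than as a pointwise maximum of the $C_{l}$, because the maximum of finitely many $C^{r}$ functions need not even be differentiable, whereas a finite sum of positive $C^{r}$ functions stays positive and $C^{r}$.

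Next I would apply \emph{Lemma \ref{equivalence2}} to the single norm field collection $(\vert\cdot\vert_{l})_{l}$ together with the two covariant derivatives $\nabla^{'}$ and $\nabla$, obtaining a positive $C^{r}$ function $C^{'}$ with
\[
\sum_{l=0}^{r}\vert{\nabla^{\;'}}^{(l)}\cdot\vert_{l}
\leq C^{'}\sum_{l=0}^{r}\vert\nabla^{(l)}\cdot\vert_{l}.
\]
Concatenating the two displayed inequalities and putting $C:=\widetilde{C}\,C^{'}$---a product of positive $C^{r}$ functions, hence again positive and $C^{r}$---delivers the claimed bound. I do not expect a genuinely hard step here, since all the analytic content is already carried by \emph{Lemmas \ref{equivalence}} and \emph{\ref{equivalence2}}; the only item deserving a moment's care is the bookkeeping that ensures the final dominating factor is a bona fide positive $C^{r}$ function on all of $M$ (handled by summing and multiplying rather than taking maxima), together with the observation that the order in which the norm and the derivative are exchanged is immaterial---the symmetric route, changing the derivative first and the norm second, leads to the same conclusion.
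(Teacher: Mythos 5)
Your proposal is correct and matches the paper's intent exactly: the paper gives no explicit proof, simply asserting Lemma \ref{equivalence3} as a direct consequence of Lemmas \ref{equivalence} and \ref{equivalence2}, and your two-step chaining (swap the norm fields on each bundle $W(M)\otimes(T^{*}M)^{\otimes l}$ via Lemma \ref{equivalence}, then swap the covariant derivatives via Lemma \ref{equivalence2}) is precisely the intended argument. The care you take in forming the dominating function as a sum rather than a pointwise maximum, so that it remains a positive $C^{r}$ function, is a sensible detail consistent with how the paper itself builds such functions in the proof of Lemma \ref{equivalence}.
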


Being armed with the above results we can turn to the generalization of
Sobolev embedding theorem to compact  Riemann manifolds.

\begin{Thm}
\label{sobolevembeddingtheorem}
The Sobolev embedding theorem applies to compact Riemann manifolds, i.e.{,} 
whenever $M$ is a compact Riemann manifold of dimension $n$ and 
$r>\frac{n}{2}+k$ for non-negative integers $r$ and $k$, the relation
\begin{equation}
\label{sobolevinclusionC}
H_{r}^{2}(M,\C)\subset C^{k}_{\infty}(M,\C)
\end{equation}
holds, and also there exists a positive real constant $C$ such that the  
inequality
\begin{equation}
\label{sobolevinequalityC}
\Vert\cdot\Vert_{C^{k}_{\infty}(M,\C)}\leq C\,\Vert\cdot\Vert_{H_{r}^{2}(M,\C)}
\end{equation}
is satisfied.
\end{Thm}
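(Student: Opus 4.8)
The plan is to reduce the assertion to the classical Euclidean Sobolev embedding (\ref{sobolevinclusion})--(\ref{sobolevinequality}) by localising on a finite atlas and then transferring the two Sobolev-type norms between the intrinsic Riemannian data and the flat coordinate data with the help of \emph{Lemmas \ref{equivalence}--\ref{equivalence3}}. Since $M$ is compact, I would first fix a finite atlas $\{(U_i,\varphi_i)\}_{i=1}^{N}$ together with a subordinate partition of unity $\{\rho_i\}_{i=1}^N$, chosen so that each $\supp(\rho_i)$ is a compact subset of $U_i$. For $f\in H_r^2(M,\C)$ I would then write $f=\sum_{i=1}^N \rho_i f$ and treat each summand separately; as the sum is finite, the estimates simply add up at the end. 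Observe also that on a compact manifold the ``zero limit at infinity'' condition is vacuous, so $C^k_\infty(M,\C)=C^k(M,\C)$, which is why the target space in (\ref{sobolevinclusionC}) may be read as the ordinary $C^k$ functions.

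The heart of the argument is the chart-wise comparison. Fix $i$ and push $\rho_i f$ forward through $\varphi_i$ to $\widehat f_i:=(\rho_i f)\circ\varphi_i^{-1}$ on $\varphi_i(U_i)\subset\R^n$; since $\rho_i f$ has compact support inside $U_i$, diffeomorphism invariance of weak differentiability shows that $\widehat f_i$ extends by zero to a compactly supported element of $H_r^2(\R^n,\C)$, to which the classical embedding (\ref{sobolevinequality}) applies (here $r>\frac{n}{2}+k$ is used), giving $\Vert\widehat f_i\Vert_{C^k_\infty(\R^n,\C)}\le C\,\Vert\widehat f_i\Vert_{H_r^2(\R^n,\C)}$ in the \emph{flat} norms. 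It remains to relate the flat norms to the intrinsic ones. On the compact set $\supp(\rho_i)$ I would apply \emph{Lemma \ref{equivalence3}} with the Levi-Civita connection $\nabla$ of $g$ as one operator and the flat coordinate connection of the chart as the other, and with the Riemannian versus Euclidean fibre norms on the bundles $T^{*}M^{\otimes l}$ as the two norm-field collections. Because that lemma produces a \emph{$C^r$ function} dominating one norm sum by the other, and a positive continuous function on the compact set $\supp(\rho_i)$ is bounded above and bounded away from zero, one obtains two-sided pointwise bounds of the form $\sum_{l=0}^{r}\vert\partial^{(l)}\widehat f_i\vert\le C_i\sum_{l=0}^{r}\vert\nabla^{(l)}(\rho_i f)\vert$ and conversely; the analogous comparison for orders $l\le k$ handles the $C^k$ side. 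Passing from these sums of norms to the squared quantities appearing in (\ref{sobolevnorm}) costs only a factor $(r+1)$ via Cauchy--Schwarz, and the bounded ratio of the Riemannian and Euclidean volume densities on $\supp(\rho_i)$ converts the pointwise bounds into the integrated $L^2$ bounds, so that $\Vert\widehat f_i\Vert_{H_r^2(\R^n,\C)}\le C_i'\,\Vert\rho_i f\Vert_{H_r^2(M,\C)}$ and, symmetrically, $\Vert\rho_i f\Vert_{C^k(M,\C)}\le C_i''\,\Vert\widehat f_i\Vert_{C^k_\infty(\R^n,\C)}$.

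Chaining these three estimates yields $\Vert\rho_i f\Vert_{C^k(M,\C)}\le K_i\,\Vert\rho_i f\Vert_{H_r^2(M,\C)}$ for each $i$. A final Leibniz estimate absorbs the cutoff: since $\rho_i$ and its covariant derivatives up to order $r$ are bounded on $M$, expanding $\nabla^{(l)}(\rho_i f)$ gives $\Vert\rho_i f\Vert_{H_r^2(M,\C)}\le K_i'\,\Vert f\Vert_{H_r^2(M,\C)}$. Summing the finitely many contributions,
$\Vert f\Vert_{C^k(M,\C)}\le\sum_i\Vert\rho_i f\Vert_{C^k(M,\C)}\le\bigl(\sum_i K_iK_i'\bigr)\Vert f\Vert_{H_r^2(M,\C)}$, which is (\ref{sobolevinequalityC}) with $C=\sum_i K_iK_i'$; in particular each $\rho_i f$, and hence $f$, is $C^k$, giving the inclusion (\ref{sobolevinclusionC}).

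The main obstacle I anticipate is precisely this transfer between the flat coordinate norms and the intrinsic Riemannian norms: one must check that \emph{Lemma \ref{equivalence3}} is genuinely applicable with a flat connection as one of its two operators (the iterated flat derivatives corresponding, under the chart, to sections of the same bundles $T^{*}M^{\otimes l}$ carried by the intrinsic derivatives), carefully track the Leibniz contributions of the partition of unity and the discrepancy between the two volume densities, and confirm that every constant produced is uniform. That uniformity rests entirely on the compactness of $M$---the finiteness of the atlas and the boundedness, away from both $0$ and $\infty$, of the positive continuous comparison functions on the compact supports.
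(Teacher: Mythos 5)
Your proof is correct, but it executes the localisation differently from the paper, and the difference is worth noting. You decompose $f=\sum_i\rho_i f$, push each compactly supported piece to $\R^n$, extend by zero, invoke the classical embedding there, and then pay for the cutoff with a Leibniz estimate $\Vert\rho_i f\Vert_{H^2_r(M,\C)}\leq K_i'\Vert f\Vert_{H^2_r(M,\C)}$. The paper never multiplies $f$ by the partition of unity at all: it applies the classical embedding directly to $f$ on the compact super-level sets $V_i^{\varepsilon}=\mathcal{F}_i^{-1}([\varepsilon,1])$, and the partition of unity enters only as an integration weight --- the bound $\mathcal{F}_i\geq\varepsilon$ on $V_i^{\varepsilon}$ lets the local integrals $\int_{V_i^{\varepsilon}}\mathcal{F}_i\vert\nabla^{(l)}f\vert^2$ be reassembled into the global Sobolev norm, and the choice $\varepsilon\leq 1/\vert I\vert$ forces the $V_i^{\varepsilon}$ to cover $M$. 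Both routes rely on the finite atlas and on \emph{Lemma \ref{equivalence3}} for the flat-versus-Riemannian norm comparison, so the overall strategy is shared. What your version buys is that the Euclidean Sobolev inequality is only ever applied to compactly supported functions on all of $\R^n$, where it holds without any hypothesis on the geometry of a domain; the price is the product rule for weak derivatives and the bookkeeping of the Leibniz terms. The paper's version avoids Leibniz expansions entirely, but it must invoke the embedding on the compact sets $V_i^{\varepsilon}$, whose boundaries are a priori arbitrary closed level sets of $\mathcal{F}_i$ --- a point at which your zero-extension argument is actually the more robust of the two. The constants in both arguments are uniform for the same reason: compactness forces the index set to be finite and the continuous comparison functions to be bounded on the compact supports.
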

\begin{proof}
Obviously, the inclusion (\ref{sobolevinclusionC}) holds as an immediate
consequence of the relevant differentiability assumptions, and because 
$C^{k}_{\infty}(M,\C)=C^{k}_{b}(M,\C)=C^{k}(M,\C)$ due to the compactness of 
$M$.

To justify (\ref{sobolevinequalityC}) choose an arbitrary finite atlas
$\mycal{A}=\{(U_{i},\varphi_{i}) \vert i \in I \}$ of
$M$ with partition of unity {$\{\mathcal{F}_{i} \vert i\in I\}$} 
subordinate to it. In proceeding, choose a real  number $\varepsilon$ such 
that $0<\varepsilon\leq1$, and denote by $V_{i}^{\varepsilon}$ the 
pre-image  of the interval $[\varepsilon,1]$ by the map $\mathcal{F}_{i}$ 
for each $i\in I$. Then, by applying the conventional Sobolev theorem over each compact set 
$V_{i}^{\varepsilon}\subset\supp(\mathcal{F}_{i})\subset U_{i}$ we have that 
\begin{equation}
\label{sobolevembeddingtheoremequation1}
\sup_{V_{i}^{\varepsilon}} \left(\sum_{l=0}^{k}\vert\nabla^{(l)}f\vert\right)^{2} 
\leq C_{i}\, \sum_{l=0}^{r}\int_{V_{i}^{\varepsilon}}\vert\nabla^{(l)}f\vert^{2} 
\end{equation}
for any $f\in H_{r}^{2}(M,\C)$, where now $\vert\cdot\vert$, $\nabla$ and
the volume form are assumed to {be} determined by the Euclidean metric  
associated with the local coordinates, $\varphi_{i}(U_{i})\subset \R^n$, on $U_{i}$.
In virtue of \emph{Lemma
  \ref{equivalence3}}, one may  replace the Euclidean $\vert\cdot\vert$,
$\nabla$ and the volume form with the norm, covariant derivation and the
volume form determined by the Riemannian metric $g$ on $M$ with the
understanding that, as a compensation, the values of Sobolev constants $C_{i}$
have to be adjusted accordingly. Furthermore, as 
$\mathcal{F}_{i}\geq\varepsilon$ over $V_{i}^{\varepsilon}$, we also have that 
\begin{equation}
\label{sobolevembeddingtheoremequation2}
\sup_{V_{i}^{\varepsilon}}
\left(\sum_{l=0}^{k}\vert\nabla^{(l)}f\vert\right)^{2} \leq
C_{i}\frac{1}{\varepsilon}\,
\sum_{l=0}^{r}\int_{V_{i}^{\varepsilon}}\mathcal{F}_{i}\vert\nabla^{(l)}f\vert^{2}\,,
\end{equation}
where the monotonicity of the integral of a non-negative function has been taken into account. By making
use of the above observations we also have that 
\begin{eqnarray}
\label{sobolevembeddingtheoremequation3}
&&\hskip-1.3cm\sum_{l=0}^{r}\int_{M}\vert\nabla^{(l)}f\vert^{2} = \sum_{i\in I}
\sum_{l=0}^{r}\int_{U_{i}}\mathcal{F}_{i}\vert\nabla^{(l)}f\vert^{2}\cr 
&&\hskip1.cm\geq \sum_{i\in I}
\sum_{l=0}^{r}\int_{V_{i}^{\varepsilon}}\mathcal{F}_{i}\vert\nabla^{(l)}f\vert^{2}\cr 
&&\hskip1.0cm\geq \sum_{i\in I} \frac{\varepsilon}{C_{i}}\sup_{V_{i}^{\varepsilon}}
\left(\sum_{l=0}^{k}\vert\nabla^{(l)}f\vert\right)^{2}. 
\end{eqnarray}
This is exactly the point where we {utilize} the compactness of
$M$. Accordingly, for the rest of the proof we shall assume that the index set
$I$ is finite{,} which immediately implies that for some positive constant
$C^{\varepsilon}$
\begin{eqnarray}
\label{sobolevembeddingtheoremequation4}
\sum_{i\in I} \frac{\varepsilon}{C_{i}} \sup_{V_{i}^{\varepsilon}}
\left(\sum_{l=0}^{k}\vert\nabla^{(l)}f\vert\right)^{2}
\geq C^{\varepsilon} \sup_{{{\bigcup} \atop {i\in I}}V_{i}^{\varepsilon}}
\left(\sum_{l=0}^{k}\vert\nabla^{(l)}f\vert\right)^{2} 
\end{eqnarray}
holds. By choosing $\varepsilon$ such that 
$0<\varepsilon\,
{\leq}\,
\frac{1}{\vert I\vert}$ 
we also have that ${{\bigcup} \atop {i\in I}}V_{i}^{\varepsilon}=M$
since otherwise there would be a point $p\in M$ so that
$\mathcal{F}_{i}(p)<\varepsilon$ for all $i\in I$. This, however, is
impossible since then 
$\sum_{i\in I}\mathcal{F}_{i}(p)<\sum_{i\in I}\varepsilon
{\leq} 
1$
would hold on contrary to the definition of the
partition of unity. This, in virtue of 
(\ref{sobolevembeddingtheoremequation3}) and
(\ref{sobolevembeddingtheoremequation4}), justify then that
(\ref{sobolevinequalityC}) holds. 
\end{proof}

In applying the above results consider now a compact Riemann manifold
$M$. As it is well-known \cite{laplacecomp}
the eigenfunctions of the Laplace operator are in $C^{r}_{b}(M,\C)$,
and{,} more importantly, their linear span is dense in any of the
Banach spaces $C^{l}_{b}(M,\C)$, with $0\leq l\leq r$. In
addition, $C^{l}_{b}(M,\C)\subset H_{l}^{2}(M,\C)$ is dense and---in consequence of the H\"older's inequality---the $C^{l}$
supremum norm is stronger than the $H_{l}^{2}$ norm over compact
manifolds. 
Therefore, the linear span of the Laplace eigenfunctions 
is dense in $H_{l}^{2}(M,\C)$, and, in particular, in $L^2(M,\C)$. 
It is also known that a linearly independent eigensystem of the Laplace
operator over a compact manifold may be chosen to be orthonormal with
respect to the $L^{2}$ scalar product, hence these form a complete
orthonormal system in $L^{2}(M,\C)$, a complete orthogonal system in
$H_{l}^{2}(M,\C)$, as well as, a Schauder basis in $C^{l}_{b}(M,\C)$,
with $0\leq l\leq r$.  Let $f\in H_{r}^{2}(M,\C)$ be some function and
$\{Y_{i}\}_{i\in \mathcal{I}}$  {be an  $L^{2}$-orthonormal}
eigensystem of the Laplace operator. Then, as the elements of  the
system $\{Y_{i}\}_{i\in \mathcal{I}}$ are eigenvectors of the Laplace
operator, in virtue of the  Gauss theorem, we have that
\begin{equation}
\label{sobolevprojection}
\left<Y_{i},f\right>_{H_{r}^{2}(M,\C)}=\left(\sum_{l=0}^{r}(-\lambda_{i})^{l}\right)
\left<Y_{i},f\right>_{L^{2}(M,\C)} 
\end{equation}
for each $i\in \mathcal{I}$, where $\lambda_{i}$ denotes the associated
eigenvalue, and $\partial M=\emptyset$ has also been assumed. In order to
simplify some of the succeeding expressions we introduce the function
$S_{r}:\C\rightarrow \C$ as
\begin{equation} 
S_{r}(z)=\sum_{l=0}^{r}(-z)^{l}= \left\{ \begin{array} {r l}
  \frac{(-z)^{r+1}-1}{(-z)-1} , & 
{\rm if}\  z\neq -1;\cr r+1 , & {\rm otherwise}. \end{array}
\right.\label{ooo}
\end{equation}
As an immediate consequence of (\ref{sobolevprojection})  the vector system 
$\{Y_{i}/\sqrt{S_{r}(\lambda_{i})}\}_{i\in \mathcal{\mathcal{I}}}$  is orthonormal in
$H_{r}^{2}(M,\C)$.  Therefore, for the series expansion of the function $f$ in
$H_{r}^{2}(M,\C)$ with respect  to the complete orthonormal system
$\{Y_{i}/\sqrt{S_{r}(\lambda_{i})}\}_{i\in \mathcal{\mathcal{I}}}$ 
\begin{eqnarray}
\label{sobolevexpansion}
\sum_{i\in  \mathcal{I}}
\left<\frac{Y_{i}}{\sqrt{S_{r}(\lambda_{i})}},f\right>_{{H_{r}^{2}(M,\C)}} 
\cdot\frac{Y_{i}}{\sqrt{S_{r}(\lambda_{i})}} 
=\sum_{i\in \mathcal{I}}\left<Y_{i},f\right>_{L^{2}(M,\C)}Y_{i},
\end{eqnarray}
is satisfied, which relation implies then that the $L^2$ series expansion of
$f$ with respect to $\{Y_{i}\}_{i\in \mathcal{I}}$  is also convergent in the
$H_{r}^{2}$ sense. 

By combining the above observations we have that whenever the assumptions of
the Sobolev embedding theorem holds then the multipole series
expansions---which are convergent in the $L^{2}$ sense---will also be
convergent in the uniform $C^{k}$ sense. 

\begin{Cor}
\label{rultipoleexpansion}
Assume that $f$ is a function that belongs to $C^{r}(M,\C)$, where
$r>\frac{n}{2}+k$. Then, the multipole series expansion of $f$ with respect to an 
{eigensystem} $\{Y_{i}\}_{i\in \mathcal{I}}$ {of the Laplace operator} 
is also
convergent in the $C^{k}$ sense.  Furthermore, the convergence is independent
of the summation order, as the  system $\{Y_{i}\}_{i\in \mathcal{I}}$  shall
remain to be a complete orthogonal  system in $L^{2}(M,\C)$ and in
$H_{r}^{2}(M,\C)$ after any index permutation.  Therefore, pointwise absolute
convergence of the multipole series also follows  for the derivatives up to
the order $k$, as the absolute convergence in a finite  dimensional Banach
space (which is nothing but $\C$ in the present case) is equivalent to
summation order independent convergence.
\end{Cor}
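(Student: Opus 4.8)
The plan is to reduce everything to the $H_{r}^{2}$-convergence already established in (\ref{sobolevexpansion}) and then transport it through the Sobolev inequality (\ref{sobolevinequalityC}) of Theorem~\ref{sobolevembeddingtheorem}. First I would observe that, since $M$ is compact, $C^{r}(M,\C)=C^{r}_{b}(M,\C)\subset H_{r}^{2}(M,\C)$, because over a compact manifold the $C^{r}$ supremum norm dominates the $H_{r}^{2}$ norm (as recorded just above via H\"older's inequality); hence the hypothesis $f\in C^{r}(M,\C)$ places $f$ in the Hilbert space $H_{r}^{2}(M,\C)$. Because $\{Y_{i}/\sqrt{S_{r}(\lambda_{i})}\}_{i\in\mathcal{I}}$ is a complete orthonormal system of $H_{r}^{2}(M,\C)$, the abstract Hilbert-space expansion of $f$ with respect to it converges to $f$ in the $H_{r}^{2}$ norm, and by (\ref{sobolevexpansion}) this coincides termwise with the $L^{2}$ multipole expansion $\sum_{i}\langle Y_{i},f\rangle_{L^{2}(M,\C)}Y_{i}$. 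Thus the finite partial sums of the $L^{2}$ multipole series converge to $f$ not merely in $L^{2}$ but in $H_{r}^{2}$.

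Next I would apply the Sobolev inequality (\ref{sobolevinequalityC}) to the differences $f-\sum_{i\in F}\langle Y_{i},f\rangle_{L^{2}(M,\C)}Y_{i}$ over finite index sets $F\subset\mathcal{I}$: since $\|\cdot\|_{C^{k}_{\infty}(M,\C)}\leq C\,\|\cdot\|_{H_{r}^{2}(M,\C)}$ and the right-hand side tends to zero, these partial sums also converge to $f$ in the $C^{k}_{\infty}$ norm, i.e.\ uniformly together with all derivatives up to order $k$. For order independence I would invoke exactly the observation quoted in the statement: any permutation of $\{Y_{i}\}_{i\in\mathcal{I}}$ is again a complete orthogonal system of both $L^{2}(M,\C)$ and $H_{r}^{2}(M,\C)$, merely permuting the coefficients, so the same two steps apply verbatim to the permuted series and produce the same limit $f$; hence the $C^{k}$-convergence is unconditional.

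The one step that is not purely formal is the passage from this unconditional $C^{k}$-convergence to \emph{pointwise absolute} convergence of the derivative series, and this is where I expect the only genuine subtlety. The unconditional convergence lives in the infinite-dimensional Banach space $C^{k}_{\infty}(M,\C)$, where unconditional convergence does \emph{not} imply norm-absolute convergence; so I would not attempt to argue absolute convergence at the level of the $C^{k}$ norm itself. Instead I would localize: for each fixed $p\in M$ and each derivative order $\leq k$, evaluation of that derivative at $p$ is a continuous linear functional on $C^{k}_{\infty}(M,\C)$, so applying it to the unconditionally convergent series yields a scalar series in $\C$ that converges for every rearrangement. A $\C$-valued series that converges under every rearrangement converges absolutely---this is the finite-dimensional converse of Riemann's rearrangement theorem, and it is precisely the ``finite dimensional Banach space'' remark in the statement. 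This delivers the asserted pointwise absolute convergence of the multipole series and of its derivatives up to order $k$, completing the argument.
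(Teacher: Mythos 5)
Your proof is correct and follows essentially the same route as the paper: the inclusion $C^{r}(M,\C)\subset H_{r}^{2}(M,\C)$ on a compact manifold, the identification of the $L^{2}$ and $H_{r}^{2}$ expansions via (\ref{sobolevexpansion}), transport of the $H_{r}^{2}$-convergence to uniform $C^{k}$-convergence through (\ref{sobolevinequalityC}), permutation invariance of the complete orthogonal system for unconditionality, and the finite-dimensional equivalence of unconditional and absolute convergence for the pointwise statement. Your explicit remark that one must localize via point evaluations rather than argue absolute convergence in the $C^{k}$ norm itself is a welcome clarification of a step the paper leaves implicit, but it is the same argument.
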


\begin{Rem}
\label{rultipoleexpansionremark}
It is worth keeping in mind that the Schauder basis property of the Laplace
eigenfunctions in $C^{k}_{b}(M,\mathbb{C})$ does not guarantee that an arbitrary function $f\in C^{k}_{b}(M,\mathbb{C})$ can always be expanded in the
form of $f=\sum_{i\in\mathcal{I}}f_{i}Y_{i}$. (If that was true the relation $f_{i}=\left<Y_{i},f\right>_{L^{2}}$, $i\in\mathcal{I}$, would immediately follow from Lebesgue's theorem of dominated convergence.) Our statement follows from the fact that whenever $\{Y_{i}\vert i\in\mathcal{I}\}$ is a Schauder basis that guarantees merely that the linear span of this system is dense in $C^{k}_{b}(M,\mathbb{C})$, i.e.\ every field $f$ may be approximated by a sequence of finite linear combinations of this basis. This means that the series expandability with respect to a Schauder basis 
follows automatically only in Hilbert spaces, but not in general in Banach spaces.
\end{Rem}

It is important to emphasize that the pointwise absolute convergence property plays a crucial role in non-linear problems. To see this
note that whenever the conditions of the above corollary are
guaranteed to hold the pointwise product $f\cdot g$ of multipole
expansions $f=\sum_{i\in \mathcal{I}}f_{i}Y_{i}$ and $g=\sum_{i\in
\mathcal{I}}g_{i}Y_{i}$ may be written in the optimal form
\begin{eqnarray}
\label{rultilinearexpansion}
\hskip-.4cm f\cdot g=\left(\sum_{i\in
  \mathcal{I}}f_{i}Y_{i}\right)\left(\sum_{j\in
  \mathcal{I}}g_{j}Y_{j}\right) =  
\sum_{i,j\in \mathcal{I}}f_{i}\,g_{j}\cdot Y_{i}Y_{j}\,{,}
\end{eqnarray}
where the sums are understood in the pointwise manner, which, due to
Fubini's theorem, are absolute convergent along with their derivatives
up to the order $k$. The sums may also be understood in  the uniform
$C^{k}$ manner independently of the summation order, however, in  this
case we do not get \emph{absolute} uniform $C^{k}$ convergence.

The most significant advantages associated with the use of multipole analysis
manifest themselves in evaluating non-linear terms. In fact, whenever the
pointwise 
absolute convergence is guaranteed for $f\cdot g$ we have that
\begin{eqnarray}
f\cdot g=\sum_{k\in   \mathcal{I}} \left(f\cdot g\right)_k Y_{k}\,,
\end{eqnarray} 
where
\begin{eqnarray}
\label{rultilinearexpansion2}
&&\hskip-.4cm \left(f\cdot g\right)_k=\left<Y_{k},f\cdot g\right>\cr
&&\hskip-.4cm \phantom{\left(f\cdot g\right)_k}=\int_M \bar{Y}_{k}
\left(\sum_{i\in \mathcal{I}}f_{i}Y_{i}\right)\left(\sum_{j\in 
  \mathcal{I}}g_{j}Y_{j}\right)\cr 
&&\hskip-.4cm \phantom{\left(f\cdot g\right)_k}=\sum_{i\in \mathcal{I}}
\sum_{j\in \mathcal{I}}f_{i}g_{j}\int_M \bar{Y}_{k}Y_{i}Y_{j}\,. 
\end{eqnarray}
In deriving  (\ref{rultilinearexpansion2}) we have used that, in virtue of
Lebesgue's dominated convergence theorem, the summation and the integration
can be interchanged, and also that the double sums are known to be uniformly
convergent. Therefore, whenever the matrix elements $\int Y_{i}Y_{j}Y_{k}$ with
$i,j,k\in \mathcal{I}$, which are also known as Gaunt coefficients, can be
determined{,} then, the multipole series expansion on multi-linear expressions may
constructively  be evaluated by making use of the multipole expansion of its
factors.

\section{The Explicit Form of Gaunt Coefficients on Two-sphere and on $n$-dimensional Torus}
\label{gauntcoefficients}

It is known (see, e.g., \cite{gaunt,gaunt2}) that the Gaunt coefficients on a 
two-sphere, $\mathbb{S}^2$ , may be given either in terms of the Wigner 3j symbols or in terms
of the Clebsch-Gordan  coefficients. It is also known that the Clebsch-Gordan
coefficients can be evaluated by various numerical algorithms, e.g.{,} by the
Racah formula \cite{ressiah} (see also \cite{gsl}). 

In providing the Gaunt coefficients let us denote the spherical harmonics, which comprises the familiar orthonormal basis, by $Y{}_\ell^m$ with integers $\ell,m$ satisfying the relations {$0\leq \ell <\infty$} and $-\ell\leq m\leq \ell$. 

Then, the Gaunt coefficient $\int_{{\mathbb{S}}^2}
Y{}_{\ell_{1}}^{m_{1}} Y{}_{\ell_{2}}^{m_{2}} Y{}_{\ell_{3}}^{m_{3}}$ is known to be zero if
either of the following conditions holds
\begin{itemize}
 \item[(i)] $m_{1}+m_{2}+m_{3}\neq0$, or 
 \item[(ii)] $\ell_{3}<\vert \ell_{1}-\ell_{2}\vert$ or $\ell_{1}+\ell_{2}<\ell_{3}$, or 
 \item[(iii)] $\ell_{1}+\ell_{2}+\ell_{3}$ is odd.
\end{itemize}
The Gaunt coefficients may also be related to the Wigner 3j symbols as
\begin{eqnarray}
\label{gauntformula}
&&\hskip-1.4cm\int_{S^2} Y{}_{\ell_{1}}^{m_{1}}Y{}_{\ell_{2}}^{m_{2}}Y{}_{\ell_{3}}^{m_{3}}\cr
&&\hskip-.4cm=\sqrt{\frac{(2\ell_{1}+1)(2\ell_{2}+1)(2\ell_{3}+1)}{4\pi}}
\cdot\left({\ell_{1}\atop 0}{\ell_{2}\atop 0}{\ell_{3}\atop
  0}\right)\left({\ell_{1}\atop m_{1}}{\ell_{2}\atop m_{2}}{\ell_{3}\atop
  m_{3}}\right), 
\end{eqnarray}
where the Wigner 3j symbols are given by the Clebsch-Gordan coefficients as
\begin{eqnarray}
\label{wigner3jsymbol}
\left({\ell_{1}\atop m_{1}}{\ell_{2}\atop m_{2}} {\ell_{3}\atop
  m_{3}}\right) = 
\frac{(-1)^{-m_{3}+\ell_{1}-\ell_{2}}}{\sqrt{2\ell_{3}+1}}
\cdot(\ell_{1}\,\ell_{2}\,m_{1}\,m_{2}\,\vert\,\ell_{1}\,\ell_{2}\,\ell_{3}\,(-m_{3})).
\end{eqnarray}

The Gaunt coefficients relevant for the case of an $n$-dimensional torus{,} $\T^n${,} are known to possess an even {simpler} 
structure. By making use of
the orthonormal eigenstates $Y_{k^{1},\dots,k^{n}}$, labeled by the integers
$k^{1},\dots,k^{n}$ of the Laplace operator on $\T^n${,} the integrals
$\int_{\T^n} Y_{k_{1}^{1},\dots,k_{1}^{n}}\cdot
Y_{k_{2}^{1},\dots,k_{2}^{n}}\cdot Y_{k_{3}^{1},\dots,k_{3}^{n}}$ may be seen
to be  zero if there exists $i\in\{1,\dots,n\}$ such that
$k_{1}^{i}+k_{2}^{i}+k_{3}^{i}\not=0$ while it is
$\frac{1}{(\sqrt{2\pi})^{n}}$ otherwise.

\section{The Division as an Operation within the Framework of Multipole
  Expansions} 
\label{divisionbyfield}

One of the most delicate issues while using multipole expansions arises whenever we have expressions containing the operation of division by a
function. Clearly, if this function vanishes somewhere, then, a direct evaluation of the division is not tolerated by any of
the numerical methods. However, if this function is guaranteed to be
bounded and nowhere zero, then both the multiplication and the division by it
are continuous operations in either of the function spaces $L^{2}$,
$C^{k}_{b}$ and $H^{2}_{k}$ with the understanding that in the latter two
cases  the result belongs to the corresponding spaces if the function itself
belongs to $C^{k}_{b}$. In the latter case, by making use of the Neumann series
expansion  the action of the division operator may be traced back to the
multiple use of   the multiplication operator.

To justify our last claim consider first a continuous operator $A$ acting on a Banach space with identity operator $I$. It is straightforward to check then{,} by induction, that
\begin{equation}
\label{neumannformula}
A\sum_{i=0}^{N}(I-A)^{i}=\sum_{i=0}^{N}(I-A)^{i}A=I-(I-A)^{N+1}
\end{equation}
{holds} for arbitrary non-negative integer $N$. If $\Vert I-A\Vert<1$
we also have that the series $N\mapsto\sum_{i=0}^{N}(I-A)^{i}$  is
absolute convergent (thus its limit is a continuous operator), and
$\Vert(I-A)^{N+1}\Vert\leq\Vert I-A\Vert^{N+1}$ tends to zero  as
$N\rightarrow \infty$. Thus the relation $A^{-1}=\sum_{i=0}^{\infty}(I-A)^{i}$---referred as the Neumann series expansion---follows.  Assume now that
the Banach space in question is $C^{k}_{b}(M,\C)$, and the operator
$A$ is the multiplication by a function $F\in C^{k}_{b}(M,\C)$, and
denote by $\Vert I-F\Vert$ the pertinent operator norm of the function
$1-F$. Accordingly, an upper bound for the error that we introduce by
replacing  the division operation by multiplication based on the
Neumann series may be given as
\begin{eqnarray}
\label{neumanndivision}
\left\Vert 1-F\sum_{i=0}^{N}(1-F)^{i}\right\Vert=\left\Vert
1-\sum_{i=0}^{N}(1-F)^{i}F\right\Vert\cr 
\leq \Vert 1-F\Vert_{C^{k}_{b}}^{N+1}.
\end{eqnarray}
When $F$ is specified by its multipole coefficients, it is not economical to determine the $C^{k}_{b}$ norm of $1-F$, as it would require pointwise
evaluation of its multipole  series. However, if $F\in C^{r}_{b}$ for $r>\frac{n}{2}+k$, and our Riemann  manifold is compact, the Sobolev embedding
theorem significantly simplifies the determination of an upper bound of the uniform $C^{k}$ norm as 
\begin{equation}
\label{divisionerror}
\Vert 1-F\Vert_{C^{k}_{b}} = \Vert 1-F\Vert_{C^{k}_{\infty}}\leq C\, \Vert
1-F\Vert_{H^{2}_{r}}, 
\end{equation}
where $C$ is the minimal Sobolev constant, and the Sobolev norm on the right hand side may  directly be determined by making use of the multipole coefficients of $F$ as
\begin{equation}
\label{divisionerrorsobolev}
\sqrt{\sum_{i\in \mathcal{I}}S_{r}(\lambda_{i})\left\vert\left<Y_{i},1\right>_{L^{2}}
  -\left<Y_{i},F\right>_{L^{2}}\right\vert^{2}} 
\end{equation}
where the function $S_r$ introduced in \ref{sobolevembedding} has
been applied. Given the value of $C\,\Vert 1-F\Vert_{H^{2}_{r}}<1$, a
predefined error tolerance $\varepsilon$ can be guaranteed to hold simply by
calculating Neumann series up to {the} order
$N_{\varepsilon}=\mathrm{int}\left\{\ln(\varepsilon)/\ln(C\,\Vert
1-F\Vert_{H^{2}_{r}})\right\}$, where $\mathrm{int}\left\{x\right\}$ denotes
the  integer part of $x\in\R$. It is straightforward to see that the number of
orders to be  calculated grows only logarithmically with the increase of the
desired accuracy.

The above described method based on the use of the Neumann series may further
be optimized by rescaling our field $F$  with a complex number $z$ in a way to
minimize $\left(\Vert 1-z F\Vert_{H^{2}_{r}}\right)^{2}$. It can be justified, by a direct
calculation, that the unique minimum may be achieved by choosing 
\begin{equation}
\label{rinimumat}
z(F)=\frac{\sum_{i\in
    I}S_{r}(\lambda_{i})\overline{\left<Y_{i},F\right>}_{L^{2}}
\left<Y_{i},1\right>_{L^{2}}}{\sum_{j\in \mathcal{I}}S_{r}(\lambda_{j})\left\vert
\left<Y_{j},F\right>_{L^{2}}\right\vert^{2}},
\end{equation}
and the corresponding minimal value is
\begin{eqnarray}
\label{rinimalvalue}
M(F)=\sum_{i\in
  I}S_{r}(\lambda_{i})\left\vert\left<Y_{i},1\right>_{L^{2}} 
\right\vert^{2}
-\frac{\left\vert\sum_{j\in \mathcal{I}}S_{r}(\lambda_{j})
\overline{\left<Y_{j},F\right>}_{L^{2}}\left<Y_{j},1\right>_{L^{2}}
\right\vert^{2}}{\sum_{k\in \mathcal{I}}S_{r}(\lambda_{k})\left\vert
\left<Y_{k},F\right>_{L^{2}}\right\vert^{2}}.
\end{eqnarray}
If we denote the constant eigenfunction of the Laplace operator by $Y_{0}$,
then $z(F)$ and $M(F)$ can be re-expressed as
$z(F)=\frac{1}{Y_{0}}\frac{\overline{\left<Y_{0},F\right>}_{L^{2}}}
{\left\Vert F\right\Vert_{H^{2}_{r}}^{2}}$ and   $M(F)=\frac{1}{\vert
  Y_{0}\vert^{2}}\frac{\Vert
  F-\left<Y_{0},F\right>_{L^{2}}Y_{0}\Vert_{H^{2}_{r}}^{2}}{\Vert
  F\Vert_{H^{2}_{r}}^{2}}$.   If optimal rescaling is applied, {a} sufficient
condition for the Neumann  series to converge is {that} the inequality
$C\cdot\sqrt{M(F)}<1$, where $C$ stands for the Sobolev constant, {holds} and
the minimum number of orders necessary to achieve an accuracy below a
pre-fixed value $\varepsilon$ is
$N_{\varepsilon}=\mathrm{int}\{\ln(\varepsilon)/\ln(C\,\sqrt{M(F)})\}$.  This
requirement may also be rephrased as follows. The Neumann series after optimal
rescaling is absolute convergent in the uniform $C^{k}$ norm if $\frac{\Vert
  F-\left<Y_{0},F\right>_{L^{2}} Y_{0}\Vert_{H^{2}_{r}}}{\Vert
  F\Vert_{H^{2}_{r}}}$---which is nothing but the $H^{2}_{r}$-measure of the
non-monopole content in $F$---is   smaller than the threshold
$\frac{\vert Y_{0}\vert}{C}$. As it will be demonstrated by the following two
examples relevant for {the} case of {the} two-sphere{s} and $n$-dimensional toruses, this is
a rather weak condition.  It is also worth to note that the Neumann series
expansion may be re-expressed  in an iterative form, which requires less
function evaluations than  the canonical series expansion representation does
\cite{neumann}.

\section{The Sobolev Constants on Two-sphere and on $n$-Toruses}
\label{sobolevconstant}

As it follows from the discussions in \ref{divisionbyfield}, to be able to have accurate estimates of certain errors, it is also important to know the numerical value of the minimal
Sobolev constant. In general, the determination of the value of the  minimal
Sobolev constant is a delicate issue. A powerful method yielding this
constant is based on the use of \emph{reproducing kernel  property}
\cite{reprkernel} of Sobolev spaces which can be outlined as
follows. Consider a Hilbert space $\mycal{H}$ of some complex valued functions
over some set $X$. Then, if the point evaluation $f\mapsto f(x)$ is a continuous linear
map for every $x\in X$ it is called a reproducing kernel Hilbert
space. The Riesz representation theorem ensures that for each $x\in X$ there exists a unique $K_{x}\in \mycal{H}$ such that $\left<K_{x},f\right>=f(x)$ for any $f\in \mycal{H}$. As $K_{x}$ itself is a function, it may also be evaluated at any point. The reproducing kernel function $K:X\times X\mapsto\C$ is defined as $K(x,y)=K_{x}(y)${.} {I}t may be verified
that for any $x,y\in X$
\begin{itemize}
\item[(i)] $\left<K(x,\cdot),K(y,\cdot)\right>=K(y,x)$,  
\item[(ii)] $\bar{K}(y,x)=K(x,y)$,
\item[(iii)] if $\left(\Phi_{i}\right)_{i\in \mathcal{I}}$ comprises a
  complete  orthonormal system, then{,} $K(x,\cdot)=\sum_{i\in
  \mathcal{I}} \bar{\Phi}_{i}(x)\Phi_{i}$, where the infinite summation makes sense in the norm topology.
\end{itemize}
If $H^{2}_{r}(M,\C)$ is a Sobolev space over a compact Riemann
manifold $M$ with $r>\frac{n}{2}+k$, then by the Sobolev theorem the
$H^{2}_{r}$ norm is stronger than the  uniform $C^{k}$ norm, therefore
the point evaluation is a continuous map. Thus, for an arbitrary
choice of {an  $L^{2}$-orthonormal eigensystem} $\{Y_i\}_{i\in
\mathcal{I}}$ {of the Laplace operator,}  there exists a unique
reproducing kernel function $K_{r}$ which---for any choice of $x\in M$ reads---as $K_{r}(x,\cdot)=\sum_{i\in
\mathcal{I}}\frac{1}{S_{r}(\lambda_{i})}\bar{Y}_{i}(x)Y_{i}$, and for any $f\in H^{2}_{r}(M,\C)$ we have
$\left<K_{r}(x,\cdot),f\right>_{H^{2}_{r}(M,\C)}=f(x)$.      Armed
with this identity, the relation
\begin{equation}
\label{repgradient}
\left<\left(\left(\nabla^{(l)}\otimes \id\right)K_{r}\right)
(x,\cdot),f\right>_{H^{2}_{r}(M,\C)}=\left(\nabla^{(l)}f\right)(x)
\end{equation}
{can be seen to hold,} where the operator $\left(\nabla^{(l)}\otimes
\id\right)$ acts on $K_{r}$ as the  $l$-times gradient---with $l=0,\dots,k$---on the first variable of $K_{r}$ while the second variable of $K_{r}$
remains intact.  Note that, in virtue of Lebesgue's dominated
convergence theorem, the order of the action of the gradient $\nabla$
and the scalar product {$\left<\cdot,\cdot\right>$} may be interchanged. {All these} observations imply that
\begin{eqnarray}
\label{repgradientabs}
&&\hskip-1.1cm\left\vert\nabla^{(l)}f\right\vert(x)\cr
&&\hskip-.7cm=\left\vert\left<\left(\left(\nabla^{(l)}\otimes \id\right)
K_{r}\right)(x,\cdot),f\right>_{H^{2}_{r}(M,\C)}\right\vert\cr 
&&\hskip-.7cm\leq\left\Vert\left(\left(\nabla^{(l)}\otimes
\id\right)K_{r}\right) (x,\cdot)\right\Vert_{H^{2}_{r}(M,\C)} \, \left\Vert
f\right\Vert_{H^{2}_{r}(M,\C)}\hskip-.1cm, 
\end{eqnarray}
where in the second step the Cauchy-Schwartz inequality in
$H^{2}_{r}(M,\C)$ has been applied. This inequality is known to be sharp,
i.e.\,it may be saturated. When the norm
$\left\Vert\left(\left(\nabla^{(l)}\otimes
\id\right)K_{r}\right)(x,\cdot)\right\Vert_{H^{2}_{r}(M,\C)}$  is guaranteed to be independent of $x\in M$---this happens{,} e.g.\,in case of homogeneous manifolds---we get the sharp inequality
\begin{eqnarray}
\label{supgradient}
 \sup_{x\in M} \left\vert\nabla^{(l)}f\right\vert(x)
\leq\sup_{x\in M} \left\Vert\left(\left(\nabla^{(l)}\otimes \id\right)
K_{r}\right)(x,\cdot)\right\Vert_{H^{2}_{r}(M,\C)}
\,\left\Vert f\right\Vert_{H^{2}_{r}(M,\C)}.
\end{eqnarray}
Applying this relation to the case of $k=0$ the minimal Sobolev constant may be read off the particular form of (\ref{supgradient}) as 
\begin{equation}
\label{rinimalsobolevconstant}
\left\Vert f\right\Vert_{C^{0}}\leq \sup_{x\in M} \left\Vert
K_{r}(x,\cdot)\right\Vert_{H^{2}_{r}(M,\C)} \left\Vert
f\right\Vert_{H^{2}_{r}(M,\C)}\,, 
\end{equation}
where $r>\frac{n}{2}$ is tacitly assumed to hold. Note, however, that for higher value of $k$ the above argument does not necessarily lead to a sharp inequality.

\smallskip

Let us restrict again considerations to the case of a two-sphere where, according to the above discussion, the reproducing kernel can be given as
\begin{equation}
\label{s2sobolevconstant}
K_{r}(x,\cdot)=\sum_{\ell=0}^{\infty}\sum_{m=-\ell}^{\ell}
\frac{\ell(\ell+1)-1}{(\ell(\ell+1))^{r+1}-1}\bar{Y}_{\ell}^{m}(x)Y{}_{\ell}^{m} \,.
\end{equation}
Then, as the two-sphere is a homogeneous manifold
$\left\Vert\left(\left(\nabla^{(l)}\otimes
\id\right)K_{r}\right)(x,\cdot)\right\Vert_{H^{2}_{r}(M,\C)}$   is independent
of the location of $x$ on $\mathbb{S}^{2}$. By choosing $x$ to be the north
pole in standard spherical polar coordinates,  and also by using the values of
the spherical harmonics at the north pole we immediately get that 
\begin{equation}
\label{s2sobolevconstant2}
\left\Vert K_{r}(x,\cdot)\right\Vert_{H^{2}_{r}}^{2} =
\frac{1}{4\pi}\sum_{\ell=0}^{\infty}\frac{(2\ell+1)(\ell(\ell+1)-1)}{(\ell(\ell+1))^{r+1}-1}. 
\end{equation}
The square root of the right hand side of (\ref{s2sobolevconstant2}) provides
the minimal value of the Sobolev constant, $C_r$, over the two-sphere with
$k=0$ and $r>1$. The approximate numerical values of this Sobolev constants
$C_r$ for the particular values of $r=2,3,4$ are listed in Table~\ref{sobvaluess2}.

\begin{table}[!ht]
\begin{center}
\begin{tabular}{l|l}
$r$ & Sobolev constant \\
\hline
$2$   & $\frac{1}{\sqrt{4\pi}}\cdot1.284533$ \\
$3$   & $\frac{1}{\sqrt{4\pi}}\cdot1.106732$ \\
$4$   & $\frac{1}{\sqrt{4\pi}}\cdot1.048986$ \\
\end{tabular}
\caption{\label{sobvaluess2} Approximate values of the minimal Sobolev constant in 
the $C^{0}\subset H_{r}^{2}$ Sobolev embedding over the two-sphere, for the
$r=2,3,4$ values.}
\end{center}
\end{table}

Let us finally restrict attention to the case of an $n$-dimensional torus. Then, the reproducing kernel may be given as 
\begin{eqnarray}
\label{tnsobolevconstant}
K_{r}(x,\cdot)
=\sum_{k_{1}=-\infty}^{+\infty}\dots\sum_{k_{n}=-\infty}^{+\infty}
\frac{\bar{Y}_{k_{1},\dots,k_{n}}(x)\cdot Y_{k_{1},\dots,k_{n}}}
     {S_{r}(-k_{1}^{2}-\dots-k_{n}^{2})}, 
\end{eqnarray}
where $\left\Vert\left(\left(\nabla^{(l)}\otimes
\id\right)K_{r}\right)(x,\cdot)\right\Vert_{H^{2}_{r}}$   is constant as a
function of $x\in \mathbb{T}^{n}$, as the $n$-torus is also a  homogeneous
manifold. By choosing $x$ to be  the point where all the polar angle
coordinates are zero we immediately get 
\begin{eqnarray}
\label{tnsobolevconstant2}
\left\Vert K_{r}(x,\cdot)\right\Vert_{H^{2}_{r}}^{2}
=\frac{1}{(2\pi)^{n}}
\sum_{k_{1}=-\infty}^{+\infty}\dots\sum_{k_{n}=-\infty}^{+\infty} 
\frac{1}{S_{r}(-k_{1}^{2}-\dots-k_{n}^{2})}.
\end{eqnarray}
The square root of this expression gives the minimal Sobolev constants, $C_r$,
for the Sobolev inequality with $r>\frac{n}{2}$ and $k=0$. For the particular
case of $\T^2$, the approximate numerical   value of the Sobolev constants
$C_r$ for $k=0$ and for the particular values of $r=2,3,4$ are listed in 
Table~\ref{sobvaluest2}.

\begin{table}[!ht]
\begin{center}
\begin{tabular}{l|l}
$r$ & Sobolev constant \\
\hline
$2$   & $\frac{1}{2\pi}\cdot1.943685$ \\
$3$   & $\frac{1}{2\pi}\cdot1.547391$ \\
$4$   & $\frac{1}{2\pi}\cdot1.397749$ \\
\end{tabular}
\caption{\label{sobvaluest2} Approximate values of the minimal Sobolev constant in 
the $C^{0}\subset H_{r}^{2}$ Sobolev embedding over the two-torus, for the  
$r=2,3,4$ values.}
\end{center}
\end{table}

\section{The Estimation of the Tail Sum Error}
\label{tailsumerrorbounds}

As our numerical method is based on the use of multipole expansion of the
basic field variables, and also since, in practice, we always use
only a finite number of multipole components, it is of crucial importance to
provide precise estimates on the pertinent errors. An immediate 
upper bound on the Sobolev norm of the truncated part of a function $f$ may be given as follows. 

Let $f\in H^{2}_{r}(M,\C)$ for some $r$, then, in virtue of
(\ref{sobolevprojection}), we have that 
\begin{equation}
\label{sobolevnormspectral}
\sum_{i\in \mathcal{I}}S_{r}(\lambda_{i})\left\vert\left<Y_{i},f\right>_{L^{2}}
\right\vert^{2} = \Vert f\Vert_{H^{2}_{r}}^{2}<\infty\,,
\end{equation}
which implies that the sequence 
$i\mapsto
S_{r}(\lambda_{i})\left\vert\left<Y_{i},f\right>_{L^{2}}\right\vert^{2}$
is summable. Consider now a sequence of positive numbers 
$i\mapsto a_{i}$ such that the relation 
\begin{equation}
\label{sobolevnormbound}
\left\vert\left<Y_{i},f\right>_{L^{2}}\right\vert \leq a_{i}
\end{equation}
holds for each $i\in \mathcal{I}$. Then, because the summation
preserves monotonicity, the tail sum of the sequence 
$i\mapsto S_{r}(\lambda_{i})\left\vert a_{i}\right\vert^{2}$ bounds the 
$H_{r}^{2}$ norm-square of the tail sum error of $f$. Such a bounding
sequence may be readily constructed by assuming that  
$f\in H^{2}_{r'}(M,\C)$ for some $r'>r$, which implies that also 
\begin{equation}
\label{sobolevnormspectral2}
\sum_{i\in \mathcal{I}}S_{r'}(\lambda_{i})
\left\vert\left<Y_{i},f\right>_{L^{2}} 
\right\vert^{2} = \Vert f\Vert_{H^{2}_{r'}}^{2}<\infty
\end{equation}
holds. Then, the sequence $i\mapsto a_{i}$ may be chosen to be an
arbitrary monotonically decreasing sequence,  for which the sum $\sum_{i\in
\mathcal{I}}S_{r'}(\lambda_{i})\left\vert a_{i}\right\vert^{2}$  is
divergent, as in that case, there always exists a threshold index,
above which  the relation (\ref{sobolevnormbound}) holds. Then,
$i\mapsto a_{i}$ may be  normalized in such a way that
(\ref{sobolevnormbound}) holds for any $i\in \mathcal{I}$.  Such a minimal multiplier would be
$\max\limits_{i\in\mathcal{I}}\frac{1}{a_{i}}\left\vert\left<
Y_{i},f\right>\right\vert$,  which, in practice, may always be
identified by the pertinent maximum on the stored finite orders. Note
that this approximation is exact whenever the threshold index is
reached within the stored orders.

Restricting again considerations to the case of a two-sphere, a suitable 
bounding sequence $(\ell,m)\mapsto a_{\ell}^{m}$ may be chosen as 
\begin{equation}
\label{s2sobolevbound}
(\ell,m)\mapsto
K\frac{1}{2\ell+1}\left(\frac{(\ell(\ell+1))^{r'+1}-1}{(\ell(\ell+1))-1}\right)^{-1/2},
\end{equation}
where $K$ is an unknown normalization factor. Then, the inequality 
\begin{eqnarray}
\label{tailsumbound}
&&\hskip-.8cm\sum_{\ell=\ell_{\mathrm{max}}+1}^{\infty}\sum_{m=-\ell}^{\ell}
\frac{(\ell(\ell+1))^{r+1}-1}{(\ell(\ell+1))-1}
\left\vert\left<Y{}_{\ell}^{m},f\right>_{L^{2}}\right\vert^{2}\cr 
&&\leq \vert K\vert^{2}\hskip-.3cm\sum_{\ell=\ell_{\mathrm{max}}+1}^{\infty}
\frac{(\ell(\ell+1))^{r+1}-1}{(\ell(\ell+1))^{r'+1}-1} \frac{1}{2\ell+1} 
\end{eqnarray}
holds for the $H^{2}_{r}$ norm-square of the tail sum, whenever $\vert K\vert$ is
chosen to be large enough such that the inequality
$\left\vert\left<Y{}_{\ell}^{m},f\right>_{L^{2}}\right\vert\leq \vert
K\vert\frac{1}{2\ell+1}\left(\frac{(\ell(\ell+1))^{r'+1}-1}{(\ell(\ell+1))-1}\right)^{-1/2}$
is satisfied for any allowed values of $\ell,m$, where {$r<r'$}. An immediate
lower bound for such $\vert K\vert$ may be given as 
\begin{eqnarray}
\label{normalization}
\max_{\ell=0,...,\infty} \max_{m=-\ell,...,\ell}
\sqrt{\frac{(\ell(\ell+1))^{r'+1}-1}{(\ell(\ell+1))-1}}
(2\ell+1)
\left\vert\left<Y{}_{\ell}^{m},f\right>_{L^{2}}\right\vert\,.  
\end{eqnarray}
By construction, the maximum value in question is necessarily attained
at certain finite
but unspecified $\ell$. In numerical applications, an estimate for $\vert K\vert$
may be obtained by assuming that the stored components up to the order
$\ell_{\mathrm{max}}$ already encode the relevant multipole orders for the field,
i.e., the multipole coefficients in the tail already fit into the
above scheme of the 
asymptotics. In this case, we may choose $K$ as 
\begin{eqnarray}
\label{normalizationapprox}
K=\hskip-.2cm \max_{\ell=0,...,\ell_{\mathrm{max}}} \max_{m=-\ell,...,\ell}
\sqrt{\frac{(\ell(\ell+1))^{r'+1}-1}{(\ell(\ell+1))-1}}
(2\ell+1)\left\vert\left<Y{}_{\ell}^{m},f\right>_{L^{2}}\right\vert.
\end{eqnarray}

It follows from the above observations that whenever $f\in C^{\infty}(\mathbb{S}^{2},\mathbb{C})$ then---since for any  $r^{'}$ $f\in H_{r^{'}}^{2}(\mathbb{S}^{2},\mathbb{C})$ also holds---its multipole coefficients $\left<Y{}_{\ell}^{m},f\right>_{L^{2}}$ shall decay faster than 
any polynomial in $\ell$. 
Note finally, that this upper bound on the $H^{2}_{r}$ norm of the tail sum error may be used, along with the Sobolev constant, to derive an upper
bound on the $C^{0}$ norm of the tail sum error.

\section{The Evaluation of the Integrals determining Radiation Anisotropy}
\label{radiationanisotropy}

In various physical applications it is important to determine the flux of conserved 
quantities through a hypersurface with topology $\mathbb{R}\times\mathcal{C}$.  

In the applied framework of multipole expansions the calculation of these type of quantities is rather straightforward, as, in general, they may be written as integrals of multilinear expressions of the basic variables over the compact surface $\mathcal{C}$. Thereby they can be given in terms of
$L^{2}(\mathcal{C},\mathbb{C})$  scalar products of some of the multilinear expressions of the basic field variables. Recall that multilinear expressions may be evaluated by purely spectral methods---as pointed out in \ref{sobolevembedding} and \ref{gauntcoefficients}---, whereas the $L^{2}$ scalar product may be calculated directly, using the fact that the eigenfunctions of the Laplace operator comprise an orthonormal system.

However, when we are interested in the scale of the anisotropy in the distribution of certain quantities, e.g.\,the energy radiated inwards or outwards, which cannot be expressed via the aforementioned integrals over the entire compact surface $\mathcal{C}$. Instead, the pertinent integrals have to be evaluated over a subset $\mathcal{A}\subset\mathcal{C}$ to compare the radiated flux through $\mathcal{A}$ and its complement in $\mathcal{C}$. These
type of integrals  may, however, be given as $\left<\cdot,\chi_{{}_{\mathcal{A}}} \cdot\right>_{L^{2}(\mathcal{C},\mathbb{C})}$ scalar products of multilinear expressions of basic field variables over $\mathcal{C}$, where $\chi_{{}_{\mathcal{A}}}$ denotes the characteristic function of the set $\mathcal{A}$.  As the multiplication by $\chi_{{}_{\mathcal{A}}}$ is a continuous map, all that we need for the spectral evaluation of this expression is the values of the $\left<Y_{i},\chi_{{}_{\mathcal{A}}}Y_{j}\right>_{L^{2}(\mathcal{C},\mathbb{C})}$ ($i,j\in\mathcal{I}$) matrix elements.

To demonstrate that this process is much more straightforward in practice than it sounds let us restrict our considerations again to the case of the
two-sphere. Then we have that 
\begin{eqnarray}
\label{partiall2s2}
&&\hskip-.4cm\left<Y{}_{\ell_{1}}^{m_{1}},\chi_{{}_\mathcal{A}}Y{}_{\ell_{2}}^{m_{2}}\right>_{L^2} = 
\int_{\mathcal{A}} \bar{Y}{}_{\ell_{1}}^{m_{1}} Y{}_{\ell_{2}}^{m_{2}} \cr&&\hskip-.9cm= (-1)^{m_{2}} 
\hskip-.4cm\sum_{k=0}^{\min(\ell1,\ell2)}\hskip-.4cm G_{\ell_{1},\ell_{2},\vert \ell_{1}-\ell_{2}\vert
  +2k}^{-m_{1},m_{2},-(m_{2}-m_{1})} \int_{\mathcal{A}} Y_{\vert
  \ell_{1}-\ell_{2}\vert+2k}^{m_{2}-m_{1}}, 
\end{eqnarray}
where the expansion of products of spherical harmonics was used, and the symbol 
$G$ stands for the Gaunt coefficients (\ref{gauntformula}). 

Let us further restrict considerations to an axially symmetric subset $\mathcal{A}\subset\mathcal{C}$. Then, in our parametrization, the integral $\int_{\mathcal{A}} Y_{\vert \ell_{1}-\ell_{2}\vert+2k}^{m_{2}-m_{1}}$ vanishes, unless $m_{2}=m_{1}$. Moreover, the integral 
$\int_{\mathcal{A}} Y_{\vert \ell_{1}-\ell_{2}\vert+2k}^{0}$ may be calculated by making use of 
the identity 
\begin{eqnarray}
P_{\ell+1}^{'}-P_{\ell-1}^{'}=(2\ell+1)P_{\ell}\,. 
\end{eqnarray}
This relation, which is valid for Legendre polynomials, may easily be verified by using the Rodriguez formula. As a consequence, the relation  
\begin{eqnarray}
\label{intyl0}
\int_{[0,\vartheta]\times[0,2\pi]} Y{}_{\ell}^{0} =  
\sqrt{\pi}\delta_{\ell,0} -\sqrt{\frac{\pi}{2\ell+1}} \left[P_{\ell+1}(\cos\vartheta) 
- P_{\ell-1}(\cos\vartheta)\right]
\end{eqnarray}
holds, where $\delta$ denotes Kronecker delta. Note that in evaluating the Legendre polynomials well-known computational methods can be applied (for a relevant implementation see, e.g.\,\cite{gsl}).

\section*{References}

\end{document}